\setlist{leftmargin=10pt}
  \providecommand\BibTeX{{%
    \normalfont B\kern-0.5em{\scshape i\kern-0.25em b}\kern-0.8em\TeX}}}
\def\BibTeX{{\rm B\kern-.05em{\sc i\kern-.025em b}\kern-.08em
    T\kern-.1667em\lower.7ex\hbox{E}\kern-.125emX}}
\begin{document}
\setlength{\abovedisplayskip}{0pt plus 0pt minus 0pt}
\setlength{\belowdisplayskip}{0pt plus 0pt minus 0pt}
\setlength\abovedisplayshortskip{0pt plus 0pt minus 0pt}
\setlength\belowdisplayshortskip{0pt plus 0pt minus 0pt}
\title{Data Processing Efficiency Aware User Association and Resource Allocation in Blockchain Enabled Metaverse over Wireless Communications} 
\newtheorem{condition}{Condition}
\newtheorem{remark}{Remark}
\newtheorem{definitionx}{Definition}
\newtheorem{theoremx}{Theorem}
% \newtheorem{lemma}{Lemma}
% \renewcommand*{\Authand}{, }
% \renewcommand*{\Authands}{, }
% \DeclareMathOperator{\lb}{lb}
% \makeatletter
% \renewcommand\maketitle{\AB@maketitle} % revert \maketitle to its old definition
% \renewcommand\AB@affilsepx{\quad\protect\Affilfont} % put affiliations into one line
% \makeatother
\newenvironment{talign}
 {\let\displaystyle\textstyle\align}
 {\endalign}
\author{Liangxin Qian}
\affiliation{%
  \institution{College of Computing and Data Science\\Nanyang Technological University, Singapore}
  \country{qian0080@e.ntu.edu.sg}
  }
\author{Jun Zhao}
\affiliation{%
  \institution{College of Computing and Data Science\\Nanyang Technological University, Singapore}
  \country{junzhao@ntu.edu.sg}
  }
 \thanks{Corresponding author: Jun Zhao\\Liangxin Qian is a PhD student supervised by Jun Zhao.}
\renewcommand{\shortauthors}{}
\begin{abstract}
In the rapidly evolving landscape of the Metaverse, enhanced by blockchain technology, the efficient processing of data has emerged as a critical challenge, especially in wireless communication systems. Addressing this need, our paper introduces the innovative concept of data processing efficiency (DPE), aiming to maximize processed bits per unit of resource consumption in blockchain-empowered Metaverse environments. To achieve this, we propose the \underline{D}PE-\underline{A}ware \underline{U}ser Association and \underline{R}esource Allocation (DAUR) algorithm, a tailored solution for these complex systems. The DAUR algorithm transforms the challenging task of optimizing the sum of DPE ratios into a solvable convex optimization problem. It uniquely alternates the optimization of key variables like user association, work offloading ratios, task-specific computing resource distribution, bandwidth allocation, user power usage ratios, and server computing resource allocation ratios. Our extensive numerical results demonstrate the DAUR algorithm's effectiveness in DPE.
\end{abstract}
\begin{CCSXML}
<ccs2012>
   <concept>
       <concept_id>10003033</concept_id>
       <concept_desc>Networks</concept_desc>
       <concept_significance>500</concept_significance>
       </concept>
   <concept>
       <concept_id>10003033.10003068.10003073.10003074</concept_id>
       <concept_desc>Networks~Network resources allocation</concept_desc>
       <concept_significance>500</concept_significance>
       </concept>
 </ccs2012>
\end{CCSXML}

\ccsdesc[500]{Networks}
\ccsdesc[500]{Networks~Network resources allocation}
\keywords{Blockchain, data processing efficiency, fractional programming, Metaverse, resource allocation, semidefinite relaxation.}

\maketitle
\thispagestyle{plain}
\pagestyle{plain}

\section{Introduction}
As blockchain technology and the Metaverse rapidly evolve, the efficient processing of non-fungible token (NFT) tasks is becoming crucial \cite{wang2021non}. NFTs, distinctive digital assets maintained on a blockchain, are gaining widespread popularity in numerous industries \cite{nadini2021mapping}. The growing interest in NFTs has created a significant demand for systems capable of effectively managing these tasks, particularly vital in the Metaverse where swift interactions and secure transactions are key.

Combining blockchain with the Metaverse introduces substantial computing demands, especially in managing NFTs \cite{christodoulou2022nfts}. These tasks require considerable processing power, often exceeding the capabilities of typical user devices. This scenario underscores the importance of developing intelligent strategies for user connectivity and resource distribution in wireless networks, ensuring the proficient handling of NFT tasks in this advanced digital landscape.

\textbf{Challenges and motivations.}
The evolved field of blockchain-enabled Metaverse, particularly the processing of NFT tasks, presents a unique set of challenges and motivations \cite{gadekallu2022blockchain}. One major challenge lies in the computational intensity of NFT operations. These tasks demand substantial resources, often exceeding what individual user devices can offer. Coupled with this is the need for real-time interactions within the Metaverse, where minimizing latency is critical to maintaining an immersive experience. Besides, the significant energy consumption in processing and managing NFTs on blockchain networks brings challenges of balancing energy efficiency with robust performance.

In response to these challenges, our research is motivated by the goal of maximizing data processing efficiency (DPE). Improving DPE in such environments is key to enhancing the efficiency of NFT task processing, leading to more sustainable and effective network operations. This study is also driven by the popularity of NFTs, highlighting the necessity to develop a supportive infrastructure that can accommodate their growth within the Metaverse. Achieving real-time interactions for NFT transactions and activities within the Metaverse is another critical motivation, essential for preserving the interactive essence of these digital realms.

Furthermore, this research focuses on optimizing computational and communication resources in wireless networks. This involves not only enhancing the performance of NFT tasks but also conserving energy, a balance that is increasingly important in our resource-conscious world. Therefore, we seek to navigate these challenges for the management and processing of NFTs in the blockchain-driven Metaverse, thereby making these technologies more efficient and integrated into digital interactions.

\textbf{Studied problem.}
Our research centers on a system with several users and servers, where the users delegate their NFT tasks to the servers. This offloading process is a strategic exercise in optimizing resource allocation and data processing to maximize DPE, a crucial metric in this context. DPE, representing the ratio of processed data bits to the sum of delay and energy consumption ($\frac{\textnormal{processed bits}}{\textnormal{delay} + \textnormal{energy}}$), offers a comprehensive evaluation of system performances.

We are exploring how to optimize DPE within this unique environment. Our goal is to devise a framework that not only boosts the efficiency of NFT task processing but also ensures an engaging and fluid user experience in the Metaverse, which are key features of the proposed framework. This involves addressing the intricacies of user-server connections and smart allocation of computational and communication resources in wireless networks.

\textbf{Main contributions.}
Our contributions are listed as follows:
\begin{itemize}
    \item[$\bullet$] Introduction of data processing efficiency (DPE): One contribution of this paper is the definition and exploration of the concept of DPE in blockchain-empowered Metaverse wireless communication systems. The study aims to achieve the highest possible DPE for each unit of resource consumed. This novel approach to efficiency measurement in the blockchain-Metaverse context sets a new benchmark for evaluating system performance.
    \item[$\bullet$] Development of the DAUR algorithm: We introduce the innovative \underline{D}PE \underline{A}ware \underline{U}ser association and \underline{R}esource allocation (DAUR) algorithm for blockchain-powered Metaverse wireless communications. This algorithm is a significant advancement as it simplifies the complex task of optimizing the sum of DPE ratios into a solvable convex optimization problem. A unique aspect of the DAUR algorithm is its approach to alternately optimize two sets of variables: $\{$user association, work offloading ratio, task-specific computing resource distribution$\}$ and $\{$bandwidth allocation ratio, user transmit power usage ratio, user computing resource usage ratio, server computing resource allocation ratio$\}$. By optimizing these sets together rather than separately, the DAUR algorithm achieves superior optimization results, enhancing the overall system efficiency.
    \item[$\bullet$] Effectiveness of the proposed DAUR algorithm is further underscored by numerical results. These results demonstrate the algorithm's success in significantly improving DPE within the studied systems. They shows the DAUR algorithm's practical utility and its potential to enhance the performance of blockchain-empowered Metaverse wireless communication systems.
\end{itemize}

The structure of this paper is outlined as follows: Section \ref{sec.related_work} presents relevant literature. Section \ref{sec.system_model} describes the system model, while the optimization problem is formulated in Section \ref{sec.optimization_prob}. The DAUR algorithm, our proposed solution for the optimization problem, is introduced in \mbox{Section \ref{sec.DAUR_algo}}, with a subsequent analysis of its complexity in Section \ref{sec.complexity_analysis}. Simulation results are presented in Section~\ref{sec.simulation_results}. Conclusion and future directions are given in Section \ref{sec.conclusion}.

% Notation Table??

\section{Related work}\label{sec.related_work}
In this section, we discuss the related work on efficiency metrics and resource allocation in Blockchain and Metaverse.
\subsection{Efficiency metrics}
In wireless communications, there are a few important metrics that help us gauge how well a network performs. Spectral efficiency looks at how effectively a network uses its available bandwidth~\cite{hu2014energy}. This is especially important when the bandwidth is limited, as it tells us how much data can be transmitted within a certain frequency range. Energy efficiency, on the other hand, measures how much data can be sent for a given amount of energy \cite{wang2021lifesaving,zhou2017near,liu2021towards}. This is crucial for devices like smartphones and IoT devices, which often have limited power sources \cite{6672036}. Cost efficiency then comes into play, assessing how much data can be transmitted cost-effectively \cite{li2018cloudshare}. This balance between performance and cost is key for maintaining an efficient yet affordable network. Lastly, throughput efficiency is all about how much data can be managed in a specific area, which is vital in densely populated areas where network traffic is high \cite{ju2013throughput}.

\subsubsection{Differences between data processing efficiency and other efficiency metrics}
DPE we studied is defined as the ratio of processed data bits to the sum of delay and energy consumption. Compared to traditional metrics, DPE provides a more comprehensive evaluation by incorporating both time and energy aspects into data processing. For instance, while spectral efficiency focuses on bandwidth utilization and energy efficiency on energy per bit transmitted, DPE integrates these aspects, emphasizing the balance between delay and energy in processing data. Unlike cost efficiency measuring economic aspects, or throughput efficiency assessing data volume per area, DPE directly ties the efficiency of data processing to tangible network performance factors – delay and energy.

\subsection{Resource allocation in Blockchain systems}
Resource allocation in blockchain environments is a critical area of study, aiming to optimize various aspects of network performance under the unique constraints and opportunities presented by blockchain technology~\cite{liu2021proof}. Guo \textit{et al}. \cite{guo2019adaptive} develop a blockchain-based mobile edge computing framework that enhances throughput and Quality of Service (QoS) by optimizing spectrum allocation, block size, and number of producing blocks, using deep reinforcement learning (DRL). Deng \textit{et al}. \cite{deng2022blockchain} tackle the challenge of decentralized model aggregation in blockchain-assisted federated learning (FL), proposing a novel framework that optimizes long-term average training data size and energy consumption, employing a Lyapunov technique for dynamic resource allocation. Feng \textit{et al}.'s study \cite{feng2020joint} focuses on minimizing energy consumption and delays in a blockchain-based mobile edge computing system, though it lacks explicit detail in bandwidth and transmit power allocation. Li \textit{et al}. \cite{9983804} propose a blockchain-based IoT resource monitoring and scheduling framework that securely manages and shares idle computing resources across the network for edge intelligence tasks, ensuring reliability and fairness. Xu \textit{et al}. \cite{xu2019healthchain} introduce the Healthchain, a blockchain-based scheme for preserving the privacy of large-scale health data in IoT environments, enabling encrypted data access control and secure, tamper-proof storage of IoT data and doctor diagnoses. Finally, Liu \textit{et al}. \cite{liu2019efficient} employ game theory in blockchain-based femtocell networks to maximize the utility of users, addressing power allocation challenges. 

\subsection{Resource allocation in Metaverse systems}
Resource allocation also plays a pivotal role in forging immersive experiences within the Metaverse, a fact underscored by numerous research efforts. Zhao \textit{et al}. \cite{10368052} focus on optimizing the utility-cost ratio for Metaverse applications over wireless networks, employing a novel fractional programming technique to enhance VR video quality through optimized communication and computation resources. Meanwhile, Chu \textit{et al}. \cite{chu2023metaslicing} introduce MetaSlicing, a framework that effectively manages and allocates diverse resources by grouping applications into clusters, using a semi-Markov decision process to maximize resource utilization and Quality-of-Service. This approach dramatically improves efficiency compared to traditional methods. On the other hand, Ng \textit{et al}. \cite{ng2022unified} tackle the unified resource allocation in a virtual education setting within the Metaverse, proposing a stochastic optimal resource allocation scheme to minimize service provider costs while adapting to the users' demand uncertainty.
\iffalse
\begin{table}[t]
\centering
\caption{Comparisons to some other related studies}
\label{table.comparison}
\begin{tabular}{>{\raggedright}p{1.4cm} >{\raggedright}p{2cm} >{\raggedright\arraybackslash}p{2.5cm} >{\raggedright\arraybackslash}p{1.5cm}}
\toprule
\textbf{Paper} & \textbf{Goals} & \textbf{Variables} & \textbf{Methods} \\
\midrule
Guo \textit{et al}.\cite{guo2019adaptive} (2019) & Maximize throughput, QoS & Spectrum, block size and numbers & DRL \\
\addlinespace
Liu \textit{et al}.\cite{liu2019efficient} (2019) & Maximize utility of Users & Power allocation & Game theory \\
\addlinespace
Feng \textit{et al}.\cite{feng2020joint} (2020) & Minimize energy and delays & User connection, resources & Convex optimization \\
\addlinespace
Deng \textit{et al}.\cite{deng2022blockchain} (2022) & Balance data size and energy & Client scheduling, power allocation & Lyapunov, FL \\
\addlinespace
Our Paper (2024)& Maximize DPE & User connection, offloading ratio, resource allocation & Convex optimization \\
\bottomrule
\end{tabular}
\end{table}
\fi
\section{System Model and Parameter Description}\label{sec.system_model}
The system in Fig. \ref{fig.system_model} includes $N$ users and $M$ servers, with indices $n \in \mathcal{N}:= \{1,2,\cdots,N\}$ for users and $m \in\mathcal{M}:=  \{1,2,\cdots,M\}$ for servers. Users offload part of their workload to a server to maximize DPE. The workload is transmitted wirelessly, with servers allocating radio resources for reception. Servers then split their computing resources between processing user data and handling blockchain tasks, ensuring efficient management of both in a blockchain integrated Metaverse wireless communication system. 
% Notations are listed in \mbox{Table \ref{table.notation}}.
% We consider a system consisting of $N$ Metaverse users and $M$ Metaverse servers in Fig. \ref{fig.system_model}. $n$ and $m$ are used to indicate the indices of $n$-th user and $m$-th server, respectively, where $n \in \mathcal{N}:=\{1,2,\cdots,N\}$ and $m \in \mathcal{M}:=\{1,2,\cdots,M\}$. Each user is engaged in offloading part of its workload to a chosen server following a strategy that maximizes DPE. The work designated for server processing is transmitted over a wireless channel, utilizing radio resources allocated by the server. Upon receiving the workload, the chosen server then divides its computing resources for the connected user. A part of these resources is dedicated to processing the data received from the user, while the remainder is allocated for handling blockchain tasks. This dual allocation of server resources ensures efficient processing of both user data and blockchain operations, aligning with the overarching goal of maximizing DPE in a blockchain-integrated Metaverse wireless communication system. We give some important notations in \mbox{Table \ref{table.notation}}.
\begin{figure}[htbp]
\vspace{-0.1cm}
\centering
\includegraphics[width=0.47\textwidth]{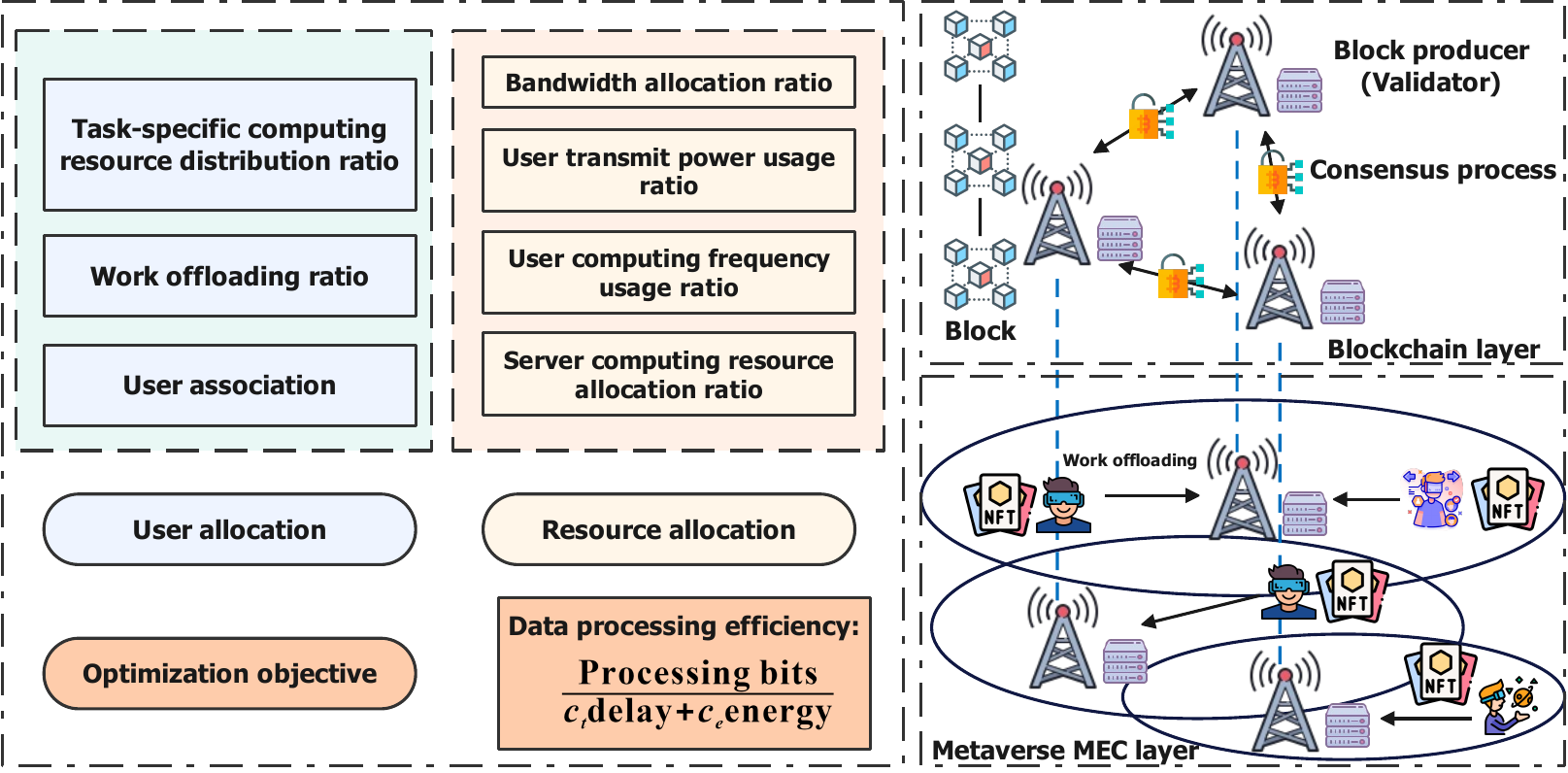}
\caption{Optimizing the DPE of a system consisting of $N$ VR users and $M$ Metaverse servers by the joint optimization of user association, offloading ratio, task-specific computing resource distribution ratio, and communication and computing resource allocation.}\vspace{-5pt}
\label{fig.system_model}
\end{figure}
\subsection{Parameters and variables in the system}
\subsubsection{User association}
We use $x_{n,m} \in \{0,1\}$ to denote the association between the user $n$ and the server $m$. Assume that every user is connected to one and only one server, i.e., $\sum_{m\in\mathcal{M}}x_{n,m}=1$.

\subsubsection{Partial work offloading}
Assume that $d_n$ is the total task of the user $n$. The work processed by user $n$ is $(1-\varphi_n)d_n$ and the work offloaded to server $m$ is $\varphi_n d_n$ with $\varphi_n \in [0,1]$. 

\subsubsection{Available computing resources of users}
The maximum computing resource of the user $n$ is $f_n$. We use $\psi_n \in [0,1]$ to denote the computing resource usage of the user $n$. Therefore, the computing resources used by the user $n$ to process \mbox{$(1-\varphi_n)d_n$} data is $\psi_nf_n$. 

\subsubsection{Wireless communication model}
Frequency-division multiple access (FDMA) is considered in this paper so that there is no interference in wireless communication between users and servers. Assume that the total available bandwidth of the server $m$ is $b_m$. We use $\phi_{n,m}b_m$ to represent the bandwidth that the server $m$ allocates to the user $n$, where $\phi_{n,m} \in [0,1]$. Similarly, $\rho_np_n$ is used to indicate the used transmit power of the user $n$ with $\rho_n \in [0,1]$ and $p_n$ is the maximum transmit power of the user $n$. According to the Shannon formula~\cite{shannon1948mathematical}, the transmission rate between user $n$ and server $m$ is 
\begin{talign}
    r_{n,m} = \phi_{n,m}b_m\log_2(1 + \frac{g_{n,m}\rho_np_n}{\sigma^2\phi_{n,m}b_m}),
\end{talign}
where $\phi_{n,m}b_m$ is the allocated bandwidth from server $m$ to user $n$, $g_{n,m}$ is the channel attenuation between them, and $\sigma^2$ is the noise power spectral density.

\subsubsection{Task-specific computing resources distribution of servers}\label{section_partiioning_computing_resources_of_servers}
The maximum computing resource of server $m$ is $f_m$. We use $\zeta_{n,m}$ to denote the allocated computing resource part from the chosen server $m$ to the user $n$ with $\zeta_{n,m} \in [0,1]$. We assume that the server allocates part of its computing resources $\gamma_{n,m}\zeta_{n,m}f_m$ to user $n$ for data processing and another portion $(1-\gamma_{n,m})\zeta_{n,m}f_m$ for processing blockchain tasks, where $\gamma_{n,m}\in (0,1)$ and $\zeta_{n,m}f_m$ is the allocated computing resources from the chosen server $m$ to the user $n$. Since the server $m$ needs to process data and also complete the blockchain task, it must reserve a portion of computing resources for these two parts (i.e., $\gamma_{n,m} \neq 0$ or $1$).

\subsection{Cost analysis}
In this section, we will analyze the user and server costs of the uplink. The connection $x_{n,m}$ between the user $n$ and the server $m$ can be decided based on the maximization of data processing efficiency, which will be detailed in Definition \ref{def_dpe}.
\subsubsection{User cost analysis}
The local data bits for processing are $(1-\varphi_n)d_n$ and the local processing delay is  
\begin{talign}
    T^{(up)}_n = \frac{(1-\varphi_n)d_n\eta_n}{\psi_nf_n},\label{eq_T_up}
\end{talign}
where $\eta_n$ (cycles/bit) is the number of CPU cycles per bit of the user $n$ and the superscript ``$up$'' means user processing. Then, we study the energy consumption of the user side. If the available computing resources of user $n$ is $\psi_nf_n$, the local processing energy consumption of the user $n$ is 
\begin{talign}
    E^{(up)}_n = \kappa_n(1-\varphi_n)d_n\eta_n(\psi_nf_n)^2,
\end{talign}
where $\kappa_n$ is the effective switched capacitance of the user $n$. After processing the local data bits, the user $n$ needs to transmit the rest $\varphi_n d_n$ data bits to the connected server $m$ over the wireless channel. The transmission delay of the user $n$ is 
\begin{talign}
    T^{(ut)}_{n,m} = \frac{x_{n,m}\varphi_nd_n}{r_{n,m}},
\end{talign} 
where the superscript ``$ut$'' represents user transmission. Assume that the available transmit power of the user $n$ is $\rho_np_n$, the energy consumption during the transmission of the user $n$ is 
\begin{talign}
    E^{(ut)}_{n,m} = \rho_np_nT^{(ut)}_{n,m} = \frac{x_{n,m}\rho_np_n\varphi_nd_n}{r_{n,m}}.
\end{talign}

\subsubsection{Server cost analysis}
After $\varphi_n d_n$ data bits are transmitted to the server $m$ from the user $n$, the server $m$ would first process those data bits. According to the discussion in section \ref{section_partiioning_computing_resources_of_servers}, the processing delay of the server $m$ for the offloaded work from the user $n$ is 
\begin{talign}
    T^{(sp)}_{n,m} = \frac{x_{n,m}\varphi_nd_n\eta_m}{\gamma_{n,m}\zeta_{n,m}f_m},
\end{talign}
and its related data processing energy consumption is 
\begin{talign}
    E^{(sp)}_{n,m} = \kappa_mx_{n,m}\varphi_nd_n\eta_m(\gamma_{n,m}\zeta_{n,m}f_m)^2,
\end{talign}
where the superscript ``$sp$'' stands for server processing and $\eta_m$ is the number of CPU cycles per bit of the server $m$. When $\varphi_n d_n$ data bits are processed by the server $m$, it will generate a blockchain block for them. Assume that in this phase, the server $m$ would process $\varphi_nd_n\omega_b$ bits, where $\omega_b$ is the data size changing ratio of servers mapping the offloaded data to the format that will be processed by the blockchain. Since $\gamma_{n,m}$ fraction of computing resources $\zeta_{n,m}f_m$ that allocated to the user $n$ is used in the data processing phase, there is still $1-\gamma_{n,m}$ fraction of that for block generation phase. Therefore, the delay of the server $m$ generating blockchain block for the user $n$ is 
\begin{talign}
    T^{(sg)}_{n,m} = \frac{x_{n,m}\varphi_nd_n\omega_b\eta_m}{(1-\gamma_{n,m})\zeta_{n,m}f_m},
\end{talign}
and its corresponding energy consumption is 
\begin{talign}
    E^{(sg)}_{n,m} = \kappa_mx_{n,m}\varphi_nd_n\eta_m \omega_b [(1-\gamma_{n,m})\zeta_{n,m}f_m]^2,
\end{talign}
where ``$sg$'' indicates server generation. Next, we discuss the block propagation and validation cost. Assume that there is only one hop among servers. We consider only the block propagation delay and the validation delay, as in \cite{feng2020joint}. According to the analysis in \cite{feng2020joint}, we can obtain the total block propagation delay in the data transactions during the consensus as
\begin{talign}
    T^{(bp)}_{n,m} = \frac{S_b}{R_m},
\end{talign}
where $R_m:=\text{min}_{m^\prime\in\mathcal{M}\setminus\{m\}}R_{m,m^\prime}$ and $R_{m,m^\prime}$ is the wired link transmission rate between the servers $m$ and $m^\prime$ and the superscript ``\textit{bp}'' means block propagation. Besides, based on \cite{feng2020joint}, the validation time during the block propagation is 
\begin{align}
    T^{(sv)}_{n,m} = \text{max}_{m^\prime\in\mathcal{M}\setminus\{m\}}\frac{\eta_v}{(1-\gamma_{n,m^\prime})\zeta_{n,m^\prime}f_m^\prime},
\end{align}
where $\eta_v$ is the number of CPU cycles required by server $m^\prime$ to verify the block and the superscript ``$sv$'' is server validation.

\section{Definition of Data Processing Efficiency and Studied Optimization Problem}\label{sec.optimization_prob}
Here we first introduce the concept of data processing efficiency before formulating the optimization problem.
\begin{definition}[Data Processing Efficiency]\label{def_dpe}
    Data processing efficiency (DPE) refers to the costs including delay and energy required to process data bits, i.e., DPE $:= \frac{\text{processing data bits}}{\text{delay + energy}}$. Specifically, in the two-level system (i.e., user and server), the cost of processing data at the user level includes the local data processing costs. The cost of processing data at the server level consists of both the data processing consumption and the wireless transmission consumption from the user level. For instance, if $\varphi_n d_n$ parameters are processed at the server $m$, the cost for processing these bits encompasses the delay and energy consumed in processing, as well as those involved in data transmission from the \mbox{user $n$}.
\end{definition}
Based on the concept of DPE, we obtain the cost required to process $(1-\varphi_n)d_n$ at the user $n$:
\begin{talign}
    cost^{(u)}_n = \omega_t T^{(up)}_n + \omega_e E^{(up)}_n, \label{eq.cost_u}
\end{talign}
where $\omega_t$ and $\omega_e$ denote the weight parameters of delay and energy consumption. Similarly, we get the cost needed to process $\varphi_n d_n$ at the server $m$:
\begin{talign}
    cost^{(s)}_{n,m} &= \omega_t (T^{(ut)}_{n,m} + T^{(sp)}_{n,m} + T^{(sg)}_{n,m} + T^{(bp)}_{n,m} + T^{(sv)}_{n,m}) \nonumber \\ 
    &+ \omega_e (E^{(ut)}_{n,m} + E^{(sp)}_{n,m} + E^{(sg)}_{n,m}). \label{eq.cost_s}
\end{talign}
Then, our optimization objective is to maximize the total DPE of all servers and users. Let $\bm{x}:=[x_{n,m}]|_{n\in\mathcal{N},m\in\mathcal{M}}$, $\bm{\varphi}:=[\varphi_n]|_{n\in\mathcal{N}}$, $\bm{\gamma}:=[\gamma_{n,m}]|_{n\in\mathcal{N},m\in\mathcal{M}}$, $\bm{\phi}:=[\phi_{n,m}]|_{n\in\mathcal{N},m\in\mathcal{M}}$, $\bm{\rho}:=[\rho_n]|_{n\in\mathcal{N}}$, $\bm{\zeta}:=[\zeta_{n,m}]|_{n\in\mathcal{N},m\in\mathcal{M}}$, and $\bm{\psi}:=[\psi_n]|_{n\in\mathcal{N}}$. We give the optimization problem $\mathbb{P}_{1}$ as follows:
\begin{subequations}\label{prob1}
\begin{align}
\mathbb{P}_{1}:&\!\!\!\!\max\limits_{\bm{x},\bm{\varphi},\bm{\gamma},\bm{\phi},\bm{\rho},\bm{\zeta},\bm{\psi}}  \!\!\sum\limits_{n \in \mathcal{N}} \frac{c_n(1-\varphi_n)d_n}{cost^{(u)}_n} +\!\! \sum\limits_{n \in \mathcal{N}}\sum\limits_{m \in \mathcal{M}}\frac{c_{n,m}x_{n,m}\varphi_nd_n}{cost^{(s)}_{n,m}}
\tag{\ref{prob1}}\\
\text{s.t.} \quad 
& x_{n,m} \in \{0,1\}, ~\forall n \in \mathcal{N}, ~\forall m \in \mathcal{M},\label{x_constr1} \\
& \sum\limits_{m \in \mathcal{M}} x_{n,m} = 1, ~\forall n \in \mathcal{N},\label{x_constr2} \\
& \varphi_n \in [0,1], ~\forall n \in \mathcal{N},\label{varphi_constr}\\
&\gamma_{n,m} \in (0,1), ~\forall n \in \mathcal{N}, ~\forall m \in \mathcal{M},\label{gamma_constr}\\
&\phi_{n,m} \in [0,1], ~\forall n \in \mathcal{N}, ~\forall m \in \mathcal{M},\label{phi_constr1}\\
&\sum\limits_{n\in \mathcal{N}} x_{n,m} \phi_{n,m} \leq 1, ~\forall m \in \mathcal{M},\label{phi_constr2}\\
&\zeta_{n,m} \in [0,1], ~\forall n \in \mathcal{N}, ~\forall m \in \mathcal{M},\label{zeta_constr1}\\
&\sum\limits_{n\in \mathcal{N}} x_{n,m} \zeta_{n,m} \leq 1, ~\forall m \in \mathcal{M},\label{zeta_constr2}\\
&\rho_n \in [0,1], ~\forall n \in \mathcal{N},\label{rho_constr}\\
&\psi_n \in [0,1], ~\forall n \in \mathcal{N},\label{psi_constr}
\end{align}
\end{subequations}
where $c_n$ is the DPE preference of the user $n$, $c_{n,m}$ is that between the server $m$ and the user $n$, and ``s.t.'' is short for ``subject to''. 
\iffalse
\subsection{Handle cases where variables are zero}
When $\psi_n$ is zero, Equation (\text{\ref{eq_T_up}}) would be undefined. Similarly, when $\phi_{n,m} = 0$, $r_{n,m}$ would be undefined. If $\rho_n = 0$, which leads to $r_{n,m} = 0$ and further makes $T_{n,m}^{(ut)}$ and $E_{n,m}^{(ut)}$ undefined. Besides, when $\zeta_{n,m} = 0$, $T_{n,m}^{(sp)}$ and $T_{n,m}^{(sv)}$ would be undefined. In those cases, we further modify these variables as $\psi_n = \psi_n + \epsilon_\psi$, $\phi_{n,m} = \phi_{n,m} + \epsilon_\phi$, $\rho_n = \rho_n + \epsilon_\rho$, and $\zeta_{n,m} = \zeta_{n,m} + \epsilon_\zeta$,
where $\epsilon_\psi$, $\epsilon_\phi$, $\epsilon_\rho$, and $\epsilon_\zeta$ are parameters that should be sufficiently small to avoid substantially altering the outcome when the delay is nonzero, while still being adequately large to avert a division by zero error.
\fi
\section{Our Proposed DAUR Algorithm to Solve the Optimization Problem}\label{sec.DAUR_algo}
Problem $\mathbb{P}_{1}$ is a sum of ratios problem coupled by many complicated non-convex terms, which is generally NP-complete and difficult to solve directly. We will make Problem $\mathbb{P}_{1}$ a solvable convex problem within a series of transformations by our proposed \underline{D}PE-\underline{A}ware \underline{U}ser association and \underline{R}esource allocation (DAUR) algorithm. 
\begin{theorem}\label{theorem_solvep1}
    Problem $\mathbb{P}_{1}$ can be transformed into a solvable problem if we alternatively optimize $[\bm{x},\bm{\varphi},\bm{\gamma}]$ and $[\bm{\phi},\bm{\rho},\bm{\zeta},\bm{\psi}]$.
\end{theorem}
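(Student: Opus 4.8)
The plan is to render $\mathbb{P}_1$ solvable by layering three standard devices — a fractional-programming (FP) transform for the sum of ratios, block-coordinate (alternating) maximization over the two stated variable groups, and a semidefinite relaxation (SDR) for the binary association — and then to convexify whatever residual nonconvexity survives inside each block. I would begin by eliminating the sum-of-ratios structure of the objective: each summand has the form $\frac{\text{numerator}}{cost^{(u)}_n}$ or $\frac{\text{numerator}}{cost^{(s)}_{n,m}}$ with a positive denominator, so a Dinkelbach-type or quadratic FP transform introduces one auxiliary variable per ratio and replaces $\frac{A}{B}$ by a term of the shape $2y\sqrt{A}-y^{2}B$. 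Maximizing over the auxiliaries recovers the original objective, while for fixed auxiliaries it is additively separable and is concave precisely when each numerator $A$ is concave and each denominator $B$ convex; the whole difficulty is thereby pushed onto showing that, block by block, the $cost$ denominators are convex.

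The decisive observation is that the split $[\bm{x},\bm{\varphi},\bm{\gamma}]$ versus $[\bm{\phi},\bm{\rho},\bm{\zeta},\bm{\psi}]$ is engineered so that the cross-block bilinear couplings vanish once one group is frozen. Every genuinely multiplicative coupling — the products $\gamma_{n,m}\zeta_{n,m}f_m$ inside $T^{(sp)}_{n,m},E^{(sp)}_{n,m},T^{(sg)}_{n,m},E^{(sg)}_{n,m}$, the factor $\psi_n f_n$ paired with $(1-\varphi_n)$ in $cost^{(u)}_n$, and the products $x_{n,m}\phi_{n,m}$, $x_{n,m}\zeta_{n,m}$ in the coupling constraints \eqref{phi_constr2} and \eqref{zeta_constr2} — pairs exactly one variable from each group. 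Freezing one group turns each such product affine in the remaining free variable, so both the objective terms and the shared constraints become affine in the coupled directions. I would then verify the remaining intra-block curvatures: the rate $r_{n,m}=\phi_{n,m}b_m\log_2\!\left(1+\frac{g_{n,m}\rho_np_n}{\sigma^2\phi_{n,m}b_m}\right)$ is jointly concave in $(\phi_{n,m},\rho_n)$, being the perspective of the concave map $v\mapsto\log_2(1+v)$, so the delay terms $\propto 1/r_{n,m}$ are convex, as are the local terms $1/(\psi_nf_n)$ and $(\psi_nf_n)^2$. For the binary matrix $\bm{x}$ in the first group, I would relax the integrality \eqref{x_constr1} by SDR: encode $x_{n,m}\in\{0,1\}$ via a lifted quadratic identity together with the assignment constraint \eqref{x_constr2}, lift to a positive-semidefinite matrix, drop the rank-one requirement to obtain a convex feasible set, and recover a binary point by rounding. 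Assembling these pieces, one half of the alternation maximizes over $[\bm{x},\bm{\varphi},\bm{\gamma}]$ with the other group fixed and vice versa; after the FP transform and the relaxation each half is (up to the caveat below) a convex program, and iterating the two convex solves yields a monotone, hence convergent, alternating scheme — exactly the sense in which $\mathbb{P}_1$ becomes solvable.

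I expect the real obstacle to be the handful of intra-block terms that remain nonconvex even after the split. In the second group the transmission energy contributes $\omega_e E^{(ut)}_{n,m}\propto \rho_n/r_{n,m}$ to the denominator, and $\rho_n/r_{n,m}$ is a ratio of an affine function to a concave function that is neither jointly convex nor concave in $(\phi_{n,m},\rho_n)$; in the first group the server delays contribute terms such as $\varphi_n/\gamma_{n,m}$ and $\varphi_n/(1-\gamma_{n,m})$ whose Hessians are indefinite, and the server numerator $x_{n,m}\varphi_n$ is bilinear even within the group. Making the theorem true therefore requires an additional inner convexification — a successive-convex-approximation surrogate (a first-order bound on the offending parts of the denominator, equivalently an upper bound on the ratio terms) applied at each iterate — together with a check that this surrogate preserves the concavity produced by the FP transform. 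Establishing joint convexity of each fully substituted subproblem over its feasible region, rather than mere separate convexity in each scalar variable, is where the bulk of the work lies.
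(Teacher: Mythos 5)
Your architecture coincides with the paper's at every major joint: an FP device to dissolve the sum of ratios, alternation over exactly the two stated blocks, the observation that every cross-block coupling ($\gamma_{n,m}\zeta_{n,m}$, $\psi_n$ with $\varphi_n$, $x_{n,m}\phi_{n,m}$, $x_{n,m}\zeta_{n,m}$) becomes affine once one block is frozen, joint concavity of $r_{n,m}$ as a perspective function, and a lifted SDR with rounding for the binary association. The differences are in which tools are used for the residual nonconvexities you correctly flag at the end, and there your proposal is materially less complete than the paper. First, for the outer sum of ratios the paper does not use the $2y\sqrt{A}-y^{2}B$ quadratic transform but the epigraph reformulation of Lemma~\ref{lemma_p1top2} followed by the multiplier-based split of Lemma~\ref{lemma_p2top3}, in which $\alpha=1/cost$ and $\vartheta=A/cost$ are updated in closed form from the KKT conditions and each subproblem maximizes $\sum\alpha(A-\vartheta\,cost)$. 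This choice is not cosmetic: it keeps the server numerator as the bilinear monomial $x_{n,m}\varphi_n$, which the subsequent QCQP lifting of Lemma~\ref{lemma_p8top9} and the SDR of Lemma~\ref{lemma_p9top10} absorb exactly, whereas your transform would put $\sqrt{x_{n,m}\varphi_n}$ into the objective and break the quadratic structure the SDR step relies on. Second, the intra-block terms you propose to handle by an unspecified SCA surrogate are all resolved exactly in the paper: the $\varphi_n/\gamma_{n,m}$ and $\varphi_n/(1-\gamma_{n,m})$ terms disappear because Lemma~\ref{lemma_gamma} derives the closed-form optimum $\gamma_{n,m}^\star=\tfrac{1}{1+\omega_b}=\tfrac12$ by AM--GM on $T^{(sp)}_{n,m}+T^{(sg)}_{n,m}$, and the ratio $\rho_n p_n/r_{n,m}$ in the second block is convexified by a second, inner quadratic transform with $\upsilon^{(s)}_{n,m}=\frac{1}{2x_{n,m}\rho_np_n\varphi_nd_nr_{n,m}}$ (Lemma~\ref{lemma_p4top5}), whose matched first-order conditions guarantee a stationary point without any SCA. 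So the gap in your proposal is precisely the part you defer: you assert that an SCA surrogate can be found that preserves concavity and convergence, but you neither exhibit it nor notice that the problem admits exact reformulations that make it unnecessary.
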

\begin{proof}
    \textbf{Theorem \ref{theorem_solvep1}} is proven by the following \textbf{Lemma \ref{lemma_p1top2}}, \textbf{Lemma \ref{lemma_p2top3}}, \textbf{Theorem \ref{theorem_p4toconcave}} in Section \ref{sec.AO_step1}, and \textbf{Theorem \ref{theorem_p7toconvex}} in Section \ref{sec.AO_step2}.
\end{proof}
\begin{lemma}\label{lemma_p1top2}
Define new auxiliary variables $\vartheta_n^{(u)}$, $\vartheta_{n,m}^{(s)}$, $T_n^{(u)}$, and $T_{n,m}^{(s)}$. Let $\bm{\vartheta^{(u)}}:=[\vartheta_n^{(u)}]|_{n\in\mathcal{N}}$, $\bm{\vartheta^{(s)}}:=[\vartheta_{n,m}^{(s)}]|_{n\in\mathcal{N},m\in\mathcal{M}}$, $\bm{T^{(u)}}:=[T_n^{(u)}]|_{n\in\mathcal{N}}$, $\bm{T^{(s)}}:=[T_{n,m}^{(s)}]|_{n\in\mathcal{N},m\in\mathcal{M}}$, $\bm{T}:=\{\bm{T^{(u)}},\bm{T^{(s)}}\}$, and $\bm{\vartheta}:=\{\bm{\vartheta^{(u)}},\bm{\vartheta^{(s)}}\}$. The sum of ratios Problem $\mathbb{P}_{1}$ can be transformed into a summation Problem $\mathbb{P}_{2}$:
\begin{subequations}\label{prob2}
\begin{align}
&\mathbb{P}_{2}:\max\limits_{\bm{x},\bm{\varphi},\bm{\gamma},\bm{\phi},\bm{\rho},\bm{\zeta},\bm{\psi}, \bm{\vartheta}, \bm{T}}  \sum\limits_{n \in \mathcal{N}} \vartheta_n^{(u)} + \sum\limits_{n \in \mathcal{N}}\sum\limits_{m \in \mathcal{M}}\vartheta_{n,m}^{(s)} \tag{\ref{prob2}}\\
&\text{s.t.} \quad (\text{\ref{x_constr1}})\text{-}(\text{\ref{psi_constr}}), \nonumber \\
& \omega_t T^{(u)}_n + \omega_e \kappa_n(1-\varphi_n)d_n\eta_n(\psi_nf_n)^2 - \frac{c_n(1-\varphi_n)d_n}{\vartheta_n^{(u)}}\leq 0, \label{vartheta_u_constr}\\
& \omega_t T^{(s)}_{n,m} - \frac{c_{n,m}x_{n,m}\varphi_nd_n}{\vartheta_{n,m}^{(s)}} + \omega_e \{\frac{x_{n,m}\rho_np_n\varphi_nd_n}{r_{n,m}} \nonumber \\ 
    &+ \kappa_mx_{n,m}\varphi_nd_n\eta_m(\gamma_{n,m}\zeta_{n,m}f_m)^2 \nonumber \\ 
    &+ \kappa_mx_{n,m}\varphi_nd_n\eta_m \omega_b [(1-\gamma_{n,m})\zeta_{n,m}f_m]^2\}  \leq 0, \label{vartheta_s_constr}\\
&T^{(up)}_n \leq T^{(u)}_n, \label{Tu_constr}\\
&T^{(ut)}_{n,m} + T^{(sp)}_{n,m} + T^{(sg)}_{n,m} + T^{(bp)}_{n,m} + T^{(sv)}_{n,m} \leq T^{(s)}_{n,m}. \label{Ts_constr}
\end{align}
\end{subequations}
\end{lemma}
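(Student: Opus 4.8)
The plan is to convert the two sums of ratios in the objective of $\mathbb{P}_1$ into a single linear summation through an epigraph-style lifting. For each user term I would introduce a scalar $\vartheta_n^{(u)}$ that lower-bounds the user DPE ratio, and for each user--server pair a scalar $\vartheta_{n,m}^{(s)}$ that lower-bounds the server DPE ratio; maximizing $\sum_{n}\vartheta_n^{(u)}+\sum_{n,m}\vartheta_{n,m}^{(s)}$ is then a summation objective, as claimed. Because the delay expression $T_n^{(up)}$ and the server delay sum $T_{n,m}^{(ut)}+T_{n,m}^{(sp)}+T_{n,m}^{(sg)}+T_{n,m}^{(bp)}+T_{n,m}^{(sv)}$ are the awkward nonlinear pieces, I would additionally lift them to auxiliary upper bounds $T_n^{(u)}$ and $T_{n,m}^{(s)}$ via~(\ref{Tu_constr}) and~(\ref{Ts_constr}); substituting the closed forms of the energy terms $E_n^{(up)},E_{n,m}^{(ut)},E_{n,m}^{(sp)},E_{n,m}^{(sg)}$ into the cost definitions~(\ref{eq.cost_u})--(\ref{eq.cost_s}) then yields exactly the coupling constraints~(\ref{vartheta_u_constr}) and~(\ref{vartheta_s_constr}). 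All constraints~(\ref{x_constr1})--(\ref{psi_constr}) of $\mathbb{P}_1$ are carried over unchanged.

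\textbf{Forward direction ($\mathrm{opt}(\mathbb{P}_2)\ge\mathrm{opt}(\mathbb{P}_1)$).} I would take any point feasible for $\mathbb{P}_1$ and extend it by choosing $T_n^{(u)}:=T_n^{(up)}$, $T_{n,m}^{(s)}:=T_{n,m}^{(ut)}+T_{n,m}^{(sp)}+T_{n,m}^{(sg)}+T_{n,m}^{(bp)}+T_{n,m}^{(sv)}$, $\vartheta_n^{(u)}:=c_n(1-\varphi_n)d_n/cost_n^{(u)}$, and $\vartheta_{n,m}^{(s)}:=c_{n,m}x_{n,m}\varphi_n d_n/cost_{n,m}^{(s)}$. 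With these choices~(\ref{Tu_constr}) and~(\ref{Ts_constr}) hold with equality, and~(\ref{vartheta_u_constr})--(\ref{vartheta_s_constr}) collapse to $cost_n^{(u)}\le cost_n^{(u)}$ and $cost_{n,m}^{(s)}\le cost_{n,m}^{(s)}$, hence are satisfied. The resulting point is feasible for $\mathbb{P}_2$ and attains the same objective value, giving the inequality.

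\textbf{Reverse direction via tightness (the main obstacle).} The step requiring the most care is $\mathrm{opt}(\mathbb{P}_2)\le\mathrm{opt}(\mathbb{P}_1)$, which I would obtain by proving that every introduced inequality is active at an optimizer of $\mathbb{P}_2$. Since the objective is strictly increasing in each $\vartheta$, and in~(\ref{vartheta_u_constr})/(\ref{vartheta_s_constr}) the subtracted reciprocal $c_n(1-\varphi_n)d_n/\vartheta_n^{(u)}$ (resp.\ $c_{n,m}x_{n,m}\varphi_n d_n/\vartheta_{n,m}^{(s)}$) is strictly decreasing in the $\vartheta$ variable while the remaining terms do not depend on it, any slack would permit increasing the corresponding $\vartheta$ and strictly improving the objective---a contradiction. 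Hence~(\ref{vartheta_u_constr}) and~(\ref{vartheta_s_constr}) are tight. Similarly, raising $T_n^{(u)}$ or $T_{n,m}^{(s)}$ only tightens those same constraints and never helps the objective, so~(\ref{Tu_constr}) and~(\ref{Ts_constr}) are tight as well. Back-substituting the tight delay constraints makes the left-hand sides of~(\ref{vartheta_u_constr})/(\ref{vartheta_s_constr}) equal to $cost_n^{(u)}$ and $cost_{n,m}^{(s)}$, forcing $\vartheta_n^{(u)}=c_n(1-\varphi_n)d_n/cost_n^{(u)}$ and $\vartheta_{n,m}^{(s)}=c_{n,m}x_{n,m}\varphi_n d_n/cost_{n,m}^{(s)}$; summing recovers precisely the $\mathbb{P}_1$ objective at the underlying $(\bm x,\bm\varphi,\bm\gamma,\bm\phi,\bm\rho,\bm\zeta,\bm\psi)$. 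Combining both directions yields equal optimal values and a value-preserving correspondence between optimizers, which is the claimed transformation. Throughout, the monotonicity argument relies on the denominators $cost_n^{(u)}$, $cost_{n,m}^{(s)}$ (and the $\vartheta$'s) being strictly positive so the reciprocals are well defined; I would secure this from the feasible ranges of the usage ratios, together with a vanishing regularization of any variable that could drive a denominator to zero, and I would note that the server-side constraints are read on the active association $x_{n,m}=1$, the only case in which the corresponding ratio contributes to the objective.
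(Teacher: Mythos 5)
Your proposal follows the same epigraph-lifting route as the paper's own proof: introduce $\vartheta_n^{(u)},\vartheta_{n,m}^{(s)}$ as lower bounds on the two DPE ratios and $T_n^{(u)},T_{n,m}^{(s)}$ as upper bounds on the delays, which yields exactly constraints (\ref{vartheta_u_constr})--(\ref{Ts_constr}). You actually go a step further than the paper, which only exhibits the forward substitution; your explicit tightness argument for the reverse direction (valid under the positivity caveats you note, e.g.\ $\varphi_n<1$ and active association) is a correct and welcome strengthening of the same approach.
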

\begin{proof}
    Refer to Appendix \ref{append_lemma_p1top2} in the full version paper~\cite{fullversion}.
\end{proof}
According to \textbf{Lemma \ref{lemma_p1top2}}, we can transform the sum of ratios Problem $\mathbb{P}_{1}$ to a summation Problem $\mathbb{P}_{2}$ by adding the extra auxiliary variables $\vartheta_n^{(u)}$, $\vartheta_{n,m}^{(s)}$, $T_n^{(u)}$, and $T_{n,m}^{(s)}$. Thanks to $\vartheta_n^{(u)}$ and $\vartheta_{n,m}^{(s)}$, we convert the sum of ratios of the objective function in \mbox{Problem $\mathbb{P}_{1}$} to the sum of two variables. Besides, we can transfer the troublesome terms about the delay of the objective function in \mbox{Problem $\mathbb{P}_{1}$} into the constraints (\ref{Tu_constr}) and (\ref{Ts_constr}) by introducing the variables $T_n^{(u)}$ and $T_{n,m}^{(s)}$. However, the constraints (\ref{vartheta_u_constr}) and (\ref{vartheta_s_constr}) are not convex and \mbox{Problem $\mathbb{P}_{2}$} is still hard to solve.

\begin{lemma}\label{lemma_p2top3}
Define new auxiliary variables $\alpha^{(u)}_n$ and $\alpha^{(s)}_{n,m}$. Let $\bm{\alpha^{(u)}}:=[\alpha^{(u)}_n]|_{n\in\mathcal{N}}$, $\bm{\alpha^{(s)}}:=[\alpha^{(s)}_{n,m}]|_{n\in\mathcal{N},m\in\mathcal{M}}$, and $\bm{\alpha}:=\{\bm{\alpha^{(u)}}$, $\bm{\alpha^{(s)}}\}$. After that, Problem $\mathbb{P}_{2}$ can be transformed into Problem $\mathbb{P}_{3}$:
\begin{subequations}\label{prob3}
\begin{align}
&\mathbb{P}_{3}:\max\limits_{\bm{x},\bm{\varphi},\bm{\gamma},\bm{\phi},\bm{\rho},\bm{\zeta},\bm{\psi}, \bm{\vartheta}, \bm{\alpha}, \bm{T}}  \nonumber \\
&\sum\limits_{n \in \mathcal{N}} \alpha^{(u)}_n [c_n(1 - \varphi_n)d_n - \vartheta_n^{(u)}cost^{(u)}_n] \nonumber \\
&+ \sum\limits_{n \in \mathcal{N}}\sum\limits_{m \in \mathcal{M}}\alpha^{(s)}_{n,m} (c_{n,m}x_{n,m}\varphi_n d_n - \vartheta_{n,m}^{(s)}cost^{(s)}_{n,m})\tag{\ref{prob3}}\\
&\text{s.t.} \quad (\text{\ref{x_constr1}})\text{-}(\text{\ref{psi_constr}}), (\text{\ref{Tu_constr}})\text{-}(\text{\ref{Ts_constr}}). \nonumber
\end{align}
\end{subequations}
At Karush–Kuhn–Tucker (KKT) points of Problem $\mathbb{P}_{3}$, we can obtain that 
\begin{talign}
    &\alpha^{(u)}_n = \frac{1}{cost^{(u)}_n},\label{eq_alpha_u}\\
    &\alpha^{(s)}_{n,m} = \frac{1}{cost^{(s)}_{n,m}},\label{eq_alpha_s}\\
    &\vartheta_n^{(u)} = \frac{c_n(1 - \varphi_n)d_n}{cost^{(u)}_n},\label{eq_vartheta_u}\\
 \text{and }   &\vartheta_{n,m}^{(s)} = \frac{c_{n,m}x_{n,m}\varphi_n d_n}{cost^{(s)}_{n,m}}.\label{eq_vartheta_s}
\end{talign}
\end{lemma}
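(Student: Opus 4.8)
The plan is to read Lemma~\ref{lemma_p2top3} as the standard parametric (Dinkelbach-type) transform for sum-of-ratios fractional programming, applied to the two fractional, non-convex constraints (\ref{vartheta_u_constr}) and (\ref{vartheta_s_constr}) that still obstruct $\mathbb{P}_2$. First I would record the sign structure that makes the transform legitimate: on the feasible region every numerator $c_n(1-\varphi_n)d_n$ and $c_{n,m}x_{n,m}\varphi_n d_n$ is non-negative and every denominator $cost^{(u)}_n$, $cost^{(s)}_{n,m}$ is strictly positive (delay plus energy is positive for any non-degenerate allocation), so each auxiliary ratio is well defined. I would also note that, since the objective of $\mathbb{P}_2$ is increasing in $\vartheta_n^{(u)}$, $\vartheta_{n,m}^{(s)}$ and the epigraph constraints (\ref{Tu_constr})--(\ref{Ts_constr}) are monotone, at any optimum (\ref{vartheta_u_constr}), (\ref{vartheta_s_constr}) and (\ref{Tu_constr})--(\ref{Ts_constr}) are all active; hence $cost^{(u)}_n$, $cost^{(s)}_{n,m}$ in $\mathbb{P}_3$ may be taken to use the true delays $T^{(up)}_n$, $T^{(sp)}_{n,m}+\cdots$.

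Next I would construct $\mathbb{P}_3$ explicitly. Multiplying (\ref{vartheta_u_constr}) and (\ref{vartheta_s_constr}) through by the positive $\vartheta_n^{(u)}$, $\vartheta_{n,m}^{(s)}$ turns them into the polynomial constraints $\vartheta_n^{(u)}cost^{(u)}_n - c_n(1-\varphi_n)d_n \le 0$ and $\vartheta_{n,m}^{(s)}cost^{(s)}_{n,m} - c_{n,m}x_{n,m}\varphi_n d_n \le 0$. I then dualize only these two families with multipliers $\alpha_n^{(u)}\ge 0$, $\alpha_{n,m}^{(s)}\ge 0$, keeping (\ref{x_constr1})--(\ref{psi_constr}) and (\ref{Tu_constr})--(\ref{Ts_constr}) explicit. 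The resulting partial Lagrangian is $\sum_n[\vartheta_n^{(u)} + \alpha_n^{(u)}(c_n(1-\varphi_n)d_n - \vartheta_n^{(u)}cost^{(u)}_n)] + \sum_{n,m}[\vartheta_{n,m}^{(s)} + \alpha_{n,m}^{(s)}(c_{n,m}x_{n,m}\varphi_n d_n - \vartheta_{n,m}^{(s)}cost^{(s)}_{n,m})]$, whose subtractive core is exactly the objective of $\mathbb{P}_3$ in (\ref{prob3}).

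The parametric identities then follow from the stationarity (KKT) conditions of this Lagrangian. Differentiating with respect to $\alpha_n^{(u)}$ and $\alpha_{n,m}^{(s)}$ and setting the results to zero gives $c_n(1-\varphi_n)d_n - \vartheta_n^{(u)}cost^{(u)}_n = 0$ and its server analogue, which are precisely (\ref{eq_vartheta_u}) and (\ref{eq_vartheta_s}); differentiating with respect to $\vartheta_n^{(u)}$ and $\vartheta_{n,m}^{(s)}$ (here the linear $\vartheta$-terms inherited from the $\mathbb{P}_2$ objective are essential) gives $1 - \alpha_n^{(u)}cost^{(u)}_n = 0$ and $1 - \alpha_{n,m}^{(s)}cost^{(s)}_{n,m} = 0$, i.e. (\ref{eq_alpha_u}) and (\ref{eq_alpha_s}). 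To close the equivalence I would substitute these four identities back: each bracket $c_n(1-\varphi_n)d_n - \vartheta_n^{(u)}cost^{(u)}_n$ vanishes, so the Lagrangian collapses to $\sum_n \vartheta_n^{(u)} + \sum_{n,m}\vartheta_{n,m}^{(s)}$, the $\mathbb{P}_2$ objective, and by the chain rule $\nabla_{[\bm{x},\bm{\varphi},\ldots]}\sum \frac{\text{num}}{cost} = \nabla_{[\bm{x},\bm{\varphi},\ldots]}\sum \alpha(\text{num} - \vartheta\, cost)$ exactly when $\alpha = 1/cost$ and $\vartheta = \text{num}/cost$; this equality of gradients in the physical variables certifies that the stationary points of $\mathbb{P}_3$ coincide with those of $\mathbb{P}_1$/$\mathbb{P}_2$.

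I anticipate the main obstacle to be bookkeeping the $\vartheta$-stationarity correctly: with the bare subtractive objective of $\mathbb{P}_3$ alone, $\partial/\partial\vartheta$ would spuriously force $\alpha=0$, so (\ref{eq_alpha_u})--(\ref{eq_alpha_s}) genuinely rely on retaining the linear $\vartheta$-terms from the $\mathbb{P}_2$ objective (equivalently, on interpreting $\alpha$ as the Dinkelbach parameter that equates the two gradients above). The second delicate point is justifying that the relaxation is tight, i.e. that optimizing the subtractive surrogate does not enlarge the feasible set; I would secure this through the activeness of (\ref{vartheta_u_constr})--(\ref{vartheta_s_constr}) at optimum together with the strict positivity of the costs established in the first step.
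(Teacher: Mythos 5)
Your proposal is correct and takes essentially the same route as the paper's proof: both form the Lagrangian of $\mathbb{P}_{2}$ with multipliers $\alpha^{(u)}_n,\alpha^{(s)}_{n,m}$ attached to the (positively rescaled) constraints (\ref{vartheta_u_constr})--(\ref{vartheta_s_constr}), read off $\alpha^{(u)}_n=1/cost^{(u)}_n$ and $\alpha^{(s)}_{n,m}=1/cost^{(s)}_{n,m}$ from stationarity in $\bm{\vartheta}$, and then deduce (\ref{eq_vartheta_u})--(\ref{eq_vartheta_s}) from those constraints being active (the paper invokes complementary slackness where you ``differentiate with respect to $\alpha$'', but since $\alpha>0$ the content is identical). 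Your added gradient-matching and tightness remarks go beyond what the paper records but are consistent with its argument.
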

\begin{proof}
    Refer to Appendix \ref{append_lemma_p2top3} in the full version paper~\cite{fullversion}.
\end{proof}

Based on \textbf{Lemma \ref{lemma_p2top3}}, we can split the ratio form of the objective function in Problem $\mathbb{P}_{1}$ and transform the non-convex constraints (\ref{vartheta_u_constr}) and (\ref{vartheta_s_constr}) into the objective function in Problem $\mathbb{P}_{3}$ by introducing new auxiliary variables $\alpha^{(u)}_n$ and $\alpha^{(s)}_{n,m}$. Besides, based on the analysis of the KKT conditions of Problem $\mathbb{P}_{3}$, we can obtain the relationships between auxiliary variables $[\alpha^{(u)}_n$, $\alpha^{(s)}_{n,m}$, $\vartheta_n^{(u)}$, $\vartheta_{n,m}^{(s)}]$ and original variables $[x_{n,m}$, $\varphi_n$, $\gamma_{n,m}$, $\phi_{n,m}$, $\rho_n$, $\zeta_{n,m}$, $\psi_n$, $T^{(u)}_n$, $T^{(s)}_{n,m}]$ as Equations (\ref{eq_alpha_u}), (\ref{eq_alpha_s}), (\ref{eq_vartheta_u}), (\ref{eq_vartheta_s}). Next, we consider alternative optimization to solve this complex problem. At the \mbox{$i$-th} iteration, we first fix $\bm{\alpha}^{(i-1)}$ and $\bm{\vartheta}^{(i-1)}$, and then optimize $\bm{x}^{(i)},\bm{\varphi}^{(i)},\bm{\gamma}^{(i)},\bm{\phi}^{(i)},\bm{\rho}^{(i)},\bm{\zeta}^{(i)},\bm{\psi}^{(i)}, \bm{T}^{(i)}$. We then update $\bm{\alpha}^{(i)}$ and $\bm{\vartheta}^{(i)}$ according to their results. Repeat the above optimization steps until the relative difference between the objective function values of Problem $\mathbb{P}_{3}$ in the $i$-th and \mbox{$(i-1)$-th} iterations is less than an acceptable threshold, and we get a stationary point for Problem $\mathbb{P}_{3}$. Next, we analyze how to optimize $\bm{x},\bm{\varphi},\bm{\gamma},\bm{\phi},\bm{\rho},\bm{\zeta},\bm{\psi},\bm{T}$ with the given $\bm{\vartheta}, \bm{\alpha}$.

\subsection{Solve \texorpdfstring{$\bm{\phi},\bm{\rho},\bm{\zeta},\bm{\psi}$}{}, and \texorpdfstring{$\bm{T}$}{} with fixed \texorpdfstring{$\bm{x}$}{}, \texorpdfstring{$\bm{\gamma}$}{}, and \texorpdfstring{$\bm{\varphi}$}{}}\label{sec.AO_step1}
If we first fix $\bm{x}$, $\bm{\gamma}$ and $\bm{\varphi}$, Problem $\mathbb{P}_{3}$ would be the following new Problem $\mathbb{P}_{4}$:
\begin{subequations}\label{prob4}
\begin{align}
\mathbb{P}_{4}:&\max\limits_{\bm{\phi},\bm{\rho},\bm{\zeta},\bm{\psi}, \bm{T}}  \sum\limits_{n \in \mathcal{N}} \alpha^{(u)}_n [c_n(1 - \varphi_n)d_n - \vartheta_n^{(u)}cost^{(u)}_n] \nonumber \\
&+ \!\!\sum\limits_{n \in \mathcal{N}}\sum\limits_{m \in \mathcal{M}}\alpha^{(s)}_{n,m} (c_{n,m}x_{n,m}\varphi_n d_n - \vartheta_{n,m}^{(s)}cost^{(s)}_{n,m}) \tag{\ref{prob4}}\\
&\text{s.t.} \quad (\text{\ref{phi_constr1}})\text{-}(\text{\ref{psi_constr}}), (\text{\ref{Tu_constr}})\text{-}(\text{\ref{Ts_constr}}). \nonumber
\end{align}
\end{subequations}
Note that $cost^{(s)}_{n,m}$ is still not convex. 
\begin{theorem}\label{theorem_p4toconcave}
    Problem $\mathbb{P}_{4}$ can be transformed into a solvable concave optimization problem by a fractional programming (FP) technique.
\end{theorem}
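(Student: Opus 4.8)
The plan is to keep $\bm{x},\bm{\gamma},\bm{\varphi}$ fixed (together with the multipliers $\bm{\alpha},\bm{\vartheta}$ supplied by the outer iteration, which are nonnegative at the relevant points) and to show that, in the remaining block $(\bm{\phi},\bm{\rho},\bm{\zeta},\bm{\psi},\bm{T})$, every ingredient of $\mathbb{P}_4$ is already compatible with concave maximization except for one fractional term, which the fractional-programming step then absorbs. First I would settle the feasible set. Constraints (\ref{phi_constr1})--(\ref{psi_constr}) are box constraints together with the budget constraints $\sum_{n}x_{n,m}\phi_{n,m}\le1$ and $\sum_{n}x_{n,m}\zeta_{n,m}\le1$, which are linear once $\bm{x}$ is frozen, hence convex. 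The delay constraints (\ref{Tu_constr}) and (\ref{Ts_constr}) are convex as well: the key fact is that $r_{n,m}$ is the perspective of the concave map $\rho\mapsto b_m\log_2\!\bigl(1+\tfrac{g_{n,m}p_n}{\sigma^2 b_m}\rho\bigr)$, namely $r_{n,m}=\phi_{n,m}\,f(\rho_n/\phi_{n,m})$, and is therefore jointly concave in $(\rho_n,\phi_{n,m})$; consequently $1/r_{n,m}$ is convex, so $T^{(ut)}_{n,m}$ is convex, while $T^{(sp)}_{n,m},T^{(sg)}_{n,m}\propto 1/\zeta_{n,m}$ and $T^{(up)}_n\propto 1/\psi_n$ are convex, $T^{(sv)}_{n,m}$ is a pointwise maximum of convex reciprocals (hence convex), and $T^{(bp)}_{n,m}=S_b/R_m$ is constant. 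Since each left-hand side is convex and each right-hand side is the free variable $T^{(u)}_n$ or $T^{(s)}_{n,m}$, both constraints carve out convex sets.

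Next I would classify the objective. With $\bm{x},\bm{\gamma},\bm{\varphi},\bm{\alpha},\bm{\vartheta}$ fixed, the numerators $c_n(1-\varphi_n)d_n$ and $c_{n,m}x_{n,m}\varphi_nd_n$ are constants; the delays enter through $-\vartheta^{(u)}_n\omega_tT^{(u)}_n$ and $-\vartheta^{(s)}_{n,m}\omega_tT^{(s)}_{n,m}$, which are affine; and the energies $E^{(up)}_n\propto\psi_n^2$ and $E^{(sp)}_{n,m},E^{(sg)}_{n,m}\propto\zeta_{n,m}^2$ are convex and appear with nonpositive coefficients, hence contribute concavely. The single obstruction is the transmission energy $E^{(ut)}_{n,m}=\tfrac{x_{n,m}p_n\varphi_nd_n\,\rho_n}{r_{n,m}}$, which equals a constant times $\rho_n/r_{n,m}$. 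Writing $\rho_n/r_{n,m}=h(\rho_n/\phi_{n,m})$ shows it is positively homogeneous of degree zero in $(\rho_n,\phi_{n,m})$, and a direct check confirms it is \emph{neither} convex nor concave on the positive orthant; in particular $-\rho_n/r_{n,m}$ is not concave, so $\mathbb{P}_4$ is a concave maximization spoiled only by this one ratio.

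The fractional-programming step targets exactly this term. I would introduce an auxiliary variable $\lambda_{n,m}\ge0$ and apply a Dinkelbach-type parametric transform (equivalently, a quadratic transform) to $\rho_n/r_{n,m}$: setting $K_{n,m}:=\vartheta^{(s)}_{n,m}\omega_e x_{n,m}p_n\varphi_nd_n\ge0$, the contribution $-K_{n,m}\,\rho_n/r_{n,m}$ is replaced, for a frozen $\lambda_{n,m}$, by the surrogate $-K_{n,m}\bigl(\rho_n-\lambda_{n,m}r_{n,m}\bigr)=-K_{n,m}\rho_n+K_{n,m}\lambda_{n,m}r_{n,m}$, which is concave because $r_{n,m}$ is concave, $K_{n,m}\lambda_{n,m}\ge0$, and $-K_{n,m}\rho_n$ is affine. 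The parameter admits the closed-form update $\lambda_{n,m}^\star=\rho_n/r_{n,m}$ evaluated at the incumbent point, at which the surrogate is tight and recovers the original value. Substituting this surrogate converts the last non-concave term into a concave one while leaving the already-convex constraints untouched, producing a concave maximization problem (say $\mathbb{P}_5$) that is solvable by standard convex solvers, with $\lambda_{n,m}$ refreshed by its closed-form update between solves.

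The main obstacle is precisely this last step: verifying that the FP surrogate is (i) jointly concave in $(\bm{\phi},\bm{\rho},\bm{\zeta},\bm{\psi},\bm{T})$ once the auxiliary variables are frozen—which hinges entirely on the perspective-concavity of $r_{n,m}$ established in the first paragraph—and (ii) exact at its optimal auxiliary value, so that the transformed problem shares the stationary points of $\mathbb{P}_4$. The remaining bookkeeping, namely confirming the signs of $\alpha^{(u)}_n,\alpha^{(s)}_{n,m},\vartheta^{(u)}_n,\vartheta^{(s)}_{n,m}$ and that all reciprocal arguments ($\psi_n$, $\zeta_{n,m}$, $r_{n,m}$) stay strictly positive on the feasible set so that the reciprocals and the perspective are well defined, is routine. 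Assembling these facts yields the claimed reduction of $\mathbb{P}_4$ to a solvable concave program.
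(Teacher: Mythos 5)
There is a genuine gap, and it sits exactly at the step your whole argument hinges on: the surrogate you write down for the transmission-energy ratio is not a valid fractional-programming transform. You replace the term $-K_{n,m}\,\rho_n/r_{n,m}$ by $-K_{n,m}(\rho_n-\lambda_{n,m}r_{n,m})$ and claim that at the update $\lambda_{n,m}^\star=\rho_n/r_{n,m}$ the surrogate ``is tight and recovers the original value.'' It does not: evaluating at the incumbent point gives $-K_{n,m}\bigl(\rho_n-\tfrac{\rho_n}{r_{n,m}}r_{n,m}\bigr)=0$, whereas the original term equals $-K_{n,m}\rho_n/r_{n,m}\neq0$. Moreover $\rho_n-\lambda r_{n,m}$ is neither an upper nor a lower bound on $\rho_n/r_{n,m}$ over the feasible set, so the alternating scheme has no monotone-improvement property and the transformed problem need not share stationary points with $\mathbb{P}_4$. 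The deeper issue is that a Dinkelbach-type linear parametrization is only equivalent to the original problem when the ratio \emph{is} the entire objective (via the root of $F(\lambda)=\max(a-\lambda b)$); it is not a term-by-term substitute for a ratio appearing additively inside a larger objective, which is precisely the situation here and precisely why the quadratic-transform style of FP exists. Your parenthetical ``(equivalently, a quadratic transform)'' conflates two non-equivalent devices.

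The paper's proof uses the correct device: with $\chi=x_{n,m}\rho_np_n\varphi_nd_n$ and $\varsigma=r_{n,m}$, it replaces $\chi/\varsigma$ by $\chi^2\upsilon^{(s)}_{n,m}+\tfrac{1}{4\varsigma^2\upsilon^{(s)}_{n,m}}$ with the closed-form update $\upsilon^{(s)}_{n,m}=\tfrac{1}{2\chi\varsigma}$. By AM--GM this surrogate is a convex (in $(\rho_n,\phi_{n,m})$, for frozen $\upsilon^{(s)}_{n,m}>0$, using exactly the perspective-concavity of $r_{n,m}$ you established) \emph{upper} bound on $\chi/\varsigma$ that is tight at the update, and the paper additionally checks that all first-order partials of the surrogate cost match those of the original cost at that point, which is what justifies the stationary-point equivalence. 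The rest of your analysis --- convexity of the delay constraints via the perspective function, the affine $T$-terms, and the concave contribution of the quadratic energy terms --- is sound and matches the paper; to repair the proof you only need to swap your linear surrogate for this reciprocal/quadratic one and verify tightness of values and gradients at $\upsilon^{(s)}_{n,m}=\tfrac{1}{2\chi\varsigma}$.
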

\begin{proof}
   \textbf{Theorem \ref{theorem_p4toconcave}} is proven by the following \mbox{\textbf{Lemma \ref{lemma_p4top5}}}.
\end{proof}

\begin{lemma}\label{lemma_p4top5}
Define a new auxiliary variable $\upsilon^{(s)}_{n,m}$, where $\upsilon^{(s)}_{n,m} = \frac{1}{2x_{n,m}\rho_np_n\varphi_nd_nr_{n,m}}$. Rewrite $cost^{(s)}_{n,m}$ as a new variable $\widetilde{cost}^{(s)}_{n,m}$ with $\upsilon^{(s)}_{n,m}$. Let $\bm{\upsilon^{(s)}} := [\upsilon^{(s)}_{n,m}|_{\forall n \in \mathcal{N},\forall m \in \mathcal{M}}]$. The Problem $\mathbb{P}_{4}$ can be transformed into the following Problem $\mathbb{P}_{5}$:
\begin{subequations}\label{prob5}
\begin{align}
\mathbb{P}_{5}:&\max\limits_{\bm{\phi},\bm{\rho},\bm{\zeta},\bm{\psi}, \bm{\upsilon^{(s)}},\bm{T}}  \sum\limits_{n \in \mathcal{N}} \alpha^{(u)}_n [c_n(1 - \varphi_n)d_n - \vartheta_n^{(u)}cost^{(u)}_n] \nonumber \\
&+ \!\!\sum\limits_{n \in \mathcal{N}}\sum\limits_{m \in \mathcal{M}}\alpha^{(s)}_{n,m} (c_{n,m}x_{n,m}\varphi_n d_n - \vartheta_{n,m}^{(s)}\widetilde{cost}^{(s)}_{n,m}) \tag{\ref{prob5}}\\
&\text{s.t.} \quad (\text{\ref{phi_constr1}})\text{-}(\text{\ref{psi_constr}}),(\text{\ref{Tu_constr}})\text{-}(\text{\ref{Ts_constr}}). \nonumber
\end{align}
\end{subequations}
If we alternatively optimize $\bm{\upsilon^{(s)}}$ and $\bm{\phi},\bm{\rho},\bm{\zeta},\bm{\psi},\bm{T}$, Problem $\mathbb{P}_{5}$ would be a concave problem. Besides, with the local optimum $\bm{\upsilon}^{\bm{(s)}(\star)}$, we can find $\bm{\phi}^{(\star)},\bm{\rho}^{(\star)},\bm{\zeta}^{(\star)},\bm{\psi}^{(\star)}, \bm{T}^{(\star)}$, which is a stationary point of Problem $\mathbb{P}_{5}$.
\end{lemma}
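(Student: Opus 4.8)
The plan is to isolate the single source of non-concavity in Problem $\mathbb{P}_{4}$ and to remove it with a quadratic-transform fractional-programming step. First I would observe that, once $\bm{x},\bm{\gamma},\bm{\varphi}$ are fixed, almost every ingredient of $\mathbb{P}_{4}$ is already well behaved: the server energies $E^{(sp)}_{n,m}$ and $E^{(sg)}_{n,m}$ are quadratic in $\zeta_{n,m}$, the local energy inside $cost^{(u)}_n$ is quadratic in $\psi_n$, and the delay upper-bound constraints (\ref{Tu_constr})--(\ref{Ts_constr}) involve only the reciprocals $1/\psi_n$, $1/\zeta_{n,m}$, $1/r_{n,m}$ together with the rate $r_{n,m}$ itself, all of which define convex constraints once we note that $r_{n,m}$ is jointly concave in $(\phi_{n,m},\rho_n)$ (it is the perspective of a logarithm) and hence $1/r_{n,m}$ is convex. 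Since $\alpha^{(u)}_n,\alpha^{(s)}_{n,m},\vartheta^{(u)}_n,\vartheta^{(s)}_{n,m}$ are fixed nonnegative constants and the terms $c_{n,m}x_{n,m}\varphi_n d_n$ are now constants, the only obstruction to concavity is the transmission-energy ratio $E^{(ut)}_{n,m}=\frac{x_{n,m}\rho_n p_n\varphi_n d_n}{r_{n,m}}$ inside $cost^{(s)}_{n,m}$, a ratio of a term affine in $\rho_n$ over the concave $r_{n,m}$.

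To treat this ratio I would invoke the elementary identity, valid for $A\ge 0$ and $B>0$,
\begin{talign}
\frac{A}{B}=\min_{\upsilon>0}\Big(\upsilon A^2+\frac{1}{4\upsilon B^2}\Big),
\end{talign}
whose unique minimizer is $\upsilon=\frac{1}{2AB}$. Taking $A=x_{n,m}\rho_n p_n\varphi_n d_n$ and $B=r_{n,m}$ reproduces exactly the auxiliary variable $\upsilon^{(s)}_{n,m}=\frac{1}{2x_{n,m}\rho_n p_n\varphi_n d_n r_{n,m}}$ in the statement, and substituting the surrogate $\upsilon^{(s)}_{n,m}A^2+\frac{1}{4\upsilon^{(s)}_{n,m}B^2}$ for $E^{(ut)}_{n,m}$ in $cost^{(s)}_{n,m}$ defines $\widetilde{cost}^{(s)}_{n,m}$. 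Because the surrogate enters the maximized objective with a negative coefficient $-\vartheta^{(s)}_{n,m}\omega_e$, maximizing over $\upsilon^{(s)}_{n,m}$ amounts to minimizing the surrogate, which by the identity recovers $E^{(ut)}_{n,m}$ exactly; hence $\mathbb{P}_{5}$ and $\mathbb{P}_{4}$ share the same optimal value and the same maximizers in the original variables.

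Next I would verify the concavity claim for fixed $\bm{\upsilon^{(s)}}$. With $\upsilon^{(s)}_{n,m}>0$ fixed, $\upsilon^{(s)}_{n,m}A^2$ is convex as the square of an affine function of $\rho_n$, and $\frac{1}{4\upsilon^{(s)}_{n,m}B^2}=\frac{1}{4\upsilon^{(s)}_{n,m}}\big(1/r_{n,m}\big)^2$ is convex because $1/r_{n,m}$ is convex and nonnegative and squaring preserves convexity. Therefore $\widetilde{cost}^{(s)}_{n,m}$ is convex in $(\bm{\phi},\bm{\rho},\bm{\zeta},\bm{T})$, so $-\vartheta^{(s)}_{n,m}\widetilde{cost}^{(s)}_{n,m}$ is concave and, together with the concave user term $-\vartheta^{(u)}_n cost^{(u)}_n$ and the convex feasible set, makes $\mathbb{P}_{5}$ a concave maximization in this block. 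Conversely, for fixed original variables the objective is separable and strictly convex in each $\upsilon^{(s)}_{n,m}$, so the $\bm{\upsilon^{(s)}}$ update is closed form, $\upsilon^{(s)}_{n,m}=\frac{1}{2x_{n,m}\rho_n p_n\varphi_n d_n r_{n,m}}$. Alternating the two blocks is a block-coordinate ascent on an objective that is bounded above on the feasible region and concave in each block, so the iterates increase the objective monotonically and converge; at a fixed point the closed-form $\upsilon^{(s)}_{n,m}$ relation and the stationarity conditions of the concave subproblem jointly coincide with the stationarity conditions of $\mathbb{P}_{5}$, yielding the claimed stationary point $\bm{\phi}^{(\star)},\bm{\rho}^{(\star)},\bm{\zeta}^{(\star)},\bm{\psi}^{(\star)},\bm{T}^{(\star)}$.

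The main obstacle I anticipate is the convexity bookkeeping rather than the fractional-programming identity: I must confirm carefully that $r_{n,m}$ is genuinely jointly concave in $(\phi_{n,m},\rho_n)$ through the perspective-of-a-concave-function argument, so that both $1/r_{n,m}$ appearing in the delay constraints and $(1/r_{n,m})^2$ appearing in $\widetilde{cost}^{(s)}_{n,m}$ are convex, and that no coupling between $\rho_n$ and $\phi_{n,m}$ spoils this after the substitution. Checking tightness of the inner minimization and that the alternating scheme inherits the standard quadratic-transform convergence guarantee is then routine.
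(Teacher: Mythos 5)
Your proposal is correct and follows essentially the same route as the paper: both introduce the quadratic-transform auxiliary variable $\upsilon^{(s)}_{n,m}=\frac{1}{2AB}$ with $A=x_{n,m}\rho_np_n\varphi_nd_n$, $B=r_{n,m}$ to replace the transmission-energy ratio, and both then rely on the joint concavity of $r_{n,m}$ in $(\phi_{n,m},\rho_n)$ to make the fixed-$\bm{\upsilon^{(s)}}$ block a concave maximization, followed by monotone block-coordinate ascent. The only cosmetic difference is that you justify the equivalence via the variational identity $\frac{A}{B}=\min_{\upsilon>0}\bigl(\upsilon A^2+\frac{1}{4\upsilon B^2}\bigr)$, whereas the paper verifies term-by-term that the surrogate $\mathcal{G}$ agrees with the original $\mathcal{F}$ in value and in every partial derivative at $\upsilon^{(s)}_{n,m}=\frac{1}{2AB}$ --- the same fact presented two ways.
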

\begin{proof}
    Refer to Appendix \ref{append_lemma_p4top5} in the full version paper~\cite{fullversion}.
\end{proof}
Thanks to \textbf{Lemma \ref{lemma_p4top5}}, we can transform Problem $\mathbb{P}_{4}$ into a solvable concave problem Problem $\mathbb{P}_{5}$ with the help of $\upsilon^{(s)}_{n,m}$. During the $i$-th iteration, we initially hold $\bm{\upsilon}^{\bm{(s)}(i-1)}$ constant and focus on optimizing $\bm{\phi}^{(i)}, \bm{\rho}^{(i)}, \bm{\zeta}^{(i)}, \bm{\psi}^{(i)}, \bm{T}^{(i)}$. Once these values are determined, we update $\bm{\upsilon}^{\bm{(s)}(i)}$ based on the obtained results. This optimization cycle is repeated until the difference in the objective function value of Problem $\mathbb{P}_{5}$ between the $i$-th and $(i-1)$-th iterations falls in a predefined threshold. Reaching this point signifies a solution for Problem $\mathbb{P}_{5}$, and consequently, for Problem $\mathbb{P}_{4}$. Next, we analyze how to optimize $\bm{x}$, $\bm{\gamma}$, $\bm{\varphi}$, and $\bm{T}$ with fixed $\bm{\phi},\bm{\rho},\bm{\zeta}$, and $\bm{\psi}$.

\subsection{Solve \texorpdfstring{$\bm{x}$}{}, \texorpdfstring{$\bm{\gamma}$}{}, \texorpdfstring{$\bm{\varphi}$}{}, and  \texorpdfstring{$\bm{T}$}{} with fixed \texorpdfstring{$\bm{\phi},\bm{\rho},\bm{\zeta}$}{}, and \texorpdfstring{$\bm{\psi}$}{}}\label{sec.AO_step2}
If $\bm{\phi},\bm{\rho},\bm{\zeta}$, and $\bm{\psi}$ are given, Problem $\mathbb{P}_{3}$ would be the following Problem:
\begin{subequations}\label{prob6}
\begin{align}
\mathbb{P}_{6}:&\max\limits_{\bm{x},\bm{\varphi},\bm{\gamma},\bm{T}}  \sum\limits_{n \in \mathcal{N}} \alpha^{(u)}_n [c_n(1 - \varphi_n)d_n - \vartheta_n^{(u)}cost^{(u)}_n] \nonumber \\
&+ \!\!\sum\limits_{n \in \mathcal{N}}\sum\limits_{m \in \mathcal{M}}\alpha^{(s)}_{n,m} (c_{n,m}x_{n,m}\varphi_n d_n - \vartheta_{n,m}^{(s)}cost^{(s)}_{n,m}) \tag{\ref{prob6}}\\
&\text{s.t.} \quad (\text{\ref{x_constr1}})\text{-}(\text{\ref{gamma_constr}}),~(\text{\ref{phi_constr2}}),~(\text{\ref{zeta_constr2}}),~(\text{\ref{Tu_constr}})\text{-}(\text{\ref{Ts_constr}}). \nonumber
\end{align}
\end{subequations}
Since $x_{n,m}$ is a binary discrete variable and other variables are continuous, Problem $\mathbb{P}_{6}$ is a mixed-integer nonlinear programming problem. We transform the constraint (\text{\ref{x_constr1}}) to $x_{n,m}(x_{n,m}-1)=0$. Therefore, Problem $\mathbb{P}_{6}$ can be further rewritten as:
\begin{subequations}\label{prob7}
\begin{align}
\mathbb{P}_{7}:&\max\limits_{\bm{x},\bm{\varphi},\bm{\gamma},\bm{T}}  \sum\limits_{n \in \mathcal{N}} \alpha^{(u)}_n [c_n(1 - \varphi_n)d_n - \vartheta_n^{(u)}cost^{(u)}_n] \nonumber \\
&+ \!\!\sum\limits_{n \in \mathcal{N}}\sum\limits_{m \in \mathcal{M}}\alpha^{(s)}_{n,m} (c_{n,m}x_{n,m}\varphi_n d_n - \vartheta_{n,m}^{(s)}cost^{(s)}_{n,m}) \tag{\ref{prob7}}\\
&\text{s.t.} \quad x_{n,m}(x_{n,m}-1)=0, \forall n \in \mathcal{N}, \forall m \in \mathcal{M}, \label{x_constr1_new}\\
&\quad\quad(\text{\ref{x_constr2}})\text{-}(\text{\ref{gamma_constr}}),~(\text{\ref{phi_constr2}}),~(\text{\ref{zeta_constr2}}),~(\text{\ref{Tu_constr}})\text{-}(\text{\ref{Ts_constr}}). \nonumber
\end{align}
\end{subequations}
\begin{theorem}\label{theorem_p7toconvex}
    Problem $\mathbb{P}_{7}$ can be transformed into a solvable convex optimization problem.
\end{theorem}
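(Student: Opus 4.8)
The plan is to recognize Problem $\mathbb{P}_{7}$ as (after mild preprocessing) a quadratically constrained quadratic program and then apply semidefinite relaxation (SDR), which produces a convex semidefinite program that is solvable by standard methods. The reformulated integrality constraint \eqref{x_constr1_new}, namely $x_{n,m}(x_{n,m}-1)=0$, is already quadratic, so the only obstructions to QCQP form are the remaining nonlinearities. Since $\bm{\phi},\bm{\rho},\bm{\zeta},\bm{\psi}$ are fixed, $r_{n,m}$ and all server resource levels are constants, and the surviving nonlinearities are exactly: (i) the bilinear coupling $x_{n,m}\varphi_n$ in the reward $c_{n,m}x_{n,m}\varphi_n d_n$, in $E^{(ut)}_{n,m}$, and in $T^{(ut)}_{n,m}$; (ii) the quartic energy terms $E^{(sp)}_{n,m}\propto x_{n,m}\varphi_n\gamma_{n,m}^2$ and $E^{(sg)}_{n,m}\propto x_{n,m}\varphi_n(1-\gamma_{n,m})^2$ in the objective; and (iii) the fractional delays $T^{(sp)}_{n,m}\propto x_{n,m}\varphi_n/\gamma_{n,m}$, $T^{(sg)}_{n,m}\propto x_{n,m}\varphi_n/(1-\gamma_{n,m})$, and $T^{(sv)}_{n,m}$ entering \eqref{Ts_constr}.

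First I would drive every polynomial term down to degree two with a two-layer product substitution: set $\mu_{n,m}:=x_{n,m}\varphi_n$ and $\nu_{n,m}:=\mu_{n,m}\gamma_{n,m}$. Then the objective becomes quadratic, because the reward and $E^{(ut)}_{n,m}$ are linear in $\mu_{n,m}$, while $E^{(sp)}_{n,m}\propto\nu_{n,m}\gamma_{n,m}$ and $E^{(sg)}_{n,m}\propto \mu_{n,m}-2\nu_{n,m}+\nu_{n,m}\gamma_{n,m}$ after expanding $(1-\gamma_{n,m})^2$. For the fractional delays I would introduce epigraph variables $\tau^{sp}_{n,m},\tau^{sg}_{n,m},\tau^{sv}_{n,m}$, substitute them for $T^{(sp)}_{n,m},T^{(sg)}_{n,m},T^{(sv)}_{n,m}$ in \eqref{Ts_constr}, and impose the equivalent bilinear bounds $\gamma_{n,m}\tau^{sp}_{n,m}\ge C_1\mu_{n,m}$, $(1-\gamma_{n,m})\tau^{sg}_{n,m}\ge C_2\mu_{n,m}$, and $(1-\gamma_{n,m'})\tau^{sv}_{n,m}\ge C_3$ for each $m'\neq m$, where $C_1,C_2,C_3$ absorb the fixed quantities. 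Because $\gamma_{n,m},1-\gamma_{n,m}>0$, these inequalities are equivalent to the original fractional bounds, and every term of the problem is now quadratic in the stacked vector $\bm{v}$ collecting $\{x_{n,m},\varphi_n,\gamma_{n,m},\mu_{n,m},\nu_{n,m},\tau^{sp}_{n,m},\tau^{sg}_{n,m},\tau^{sv}_{n,m},T^{(u)}_n,T^{(s)}_{n,m}\}$.

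Next I would lift: define $\bm{V}:=\bm{v}\bm{v}^{\top}$, rewrite each quadratic form as a linear function of $(\bm{v},\bm{V})$, and encode the substitutions through linear consistency constraints on the lifted variables — the binary condition $V_{x_{n,m},x_{n,m}}=v_{x_{n,m}}$ (i.e. $x_{n,m}^2=x_{n,m}$), together with $v_{\mu_{n,m}}=V_{x_{n,m},\varphi_n}$ and $v_{\nu_{n,m}}=V_{\mu_{n,m},\gamma_{n,m}}$. Relaxing the rank-one identity $\bm{V}=\bm{v}\bm{v}^{\top}$ to $\begin{bmatrix}\bm{V}&\bm{v}\\ \bm{v}^{\top}&1\end{bmatrix}\succeq0$ via the Schur complement and dropping the rank-one requirement turns $\mathbb{P}_{7}$ into a semidefinite program with a linear objective and linear-plus-PSD constraints; this is convex and solvable by interior-point methods, which is exactly what the theorem asserts.

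The hard part will be the looseness created by discarding the rank-one constraint together with the correctness of the two-layer lifting. The step through $\mu_{n,m}$ and $\nu_{n,m}$ is what tames the quartic energy terms, but the relaxed SDP need not return a rank-one $\bm{V}$, so the recovered $x_{n,m}$ may be fractional and the moment consistency $v_{\nu_{n,m}}=V_{\mu_{n,m},\gamma_{n,m}}$ need not be exact. I would handle this either by proving tightness for the present constraint pattern or, failing that, by an eigenvector/Gaussian-randomization rounding of the $\bm{x}$-block followed by re-solving the resulting convex continuous subproblem in $\bm{\varphi},\bm{\gamma},\bm{T}$ to restore feasibility of \eqref{x_constr2}, \eqref{phi_constr2}, and \eqref{zeta_constr2}.
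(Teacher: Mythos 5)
Your proposal is correct in outline and lands in the same place as the paper (a lifted semidefinite relaxation of a QCQP, followed by rounding), but it takes a noticeably heavier route through the $\gamma_{n,m}$-dependent terms. The paper's proof of Theorem~\ref{theorem_p7toconvex} rests on three lemmas: Lemma~\ref{lemma_gamma} first shows that, with $\omega_b=1$, the energy term $\gamma_{n,m}^2+\omega_b(1-\gamma_{n,m})^2$ and the delay sum $T^{(sp)}_{n,m}+T^{(sg)}_{n,m}$ are \emph{simultaneously} minimized at $\gamma_{n,m}^\star=\tfrac{1}{2}$, so $\gamma$ is eliminated analytically before any lifting; this collapses your items (ii) and (iii) into constants times $x_{n,m}\varphi_n$, leaving the single bilinear product $x_{n,m}\varphi_n$ as the only nonlinearity. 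Lemmas~\ref{lemma_p8top9} and~\ref{lemma_p9top10} then cast the result as a standard QCQP in $\bm{Q}=(\bm{\varphi}^\intercal,\bm{x}_1^\intercal,\dots,\bm{x}_M^\intercal)^\intercal$ and relax it via $\bm{S}=(\bm{Q}^\intercal,1)^\intercal(\bm{Q}^\intercal,1)$ with $\bm{S}\succeq 0$, recovering integrality by Hungarian rounding. You instead keep $\gamma_{n,m}$ as a decision variable and tame the quartic energies and fractional delays with the two-layer substitution $\mu_{n,m}=x_{n,m}\varphi_n$, $\nu_{n,m}=\mu_{n,m}\gamma_{n,m}$ plus epigraph variables with bilinear bounds. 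This is workable and buys generality --- it does not require $\omega_b=1$ or the closed-form optimality of $\gamma$ --- but it pays for it with a larger lifted matrix, third-order moment-consistency constraints ($v_{\nu}=V_{\mu,\gamma}$) that typically make the relaxation looser, and a more delicate rounding step; the paper's observation that $\gamma$ can be optimized out in closed form is the key simplification you miss, and it is what keeps the final SDR small and the only lifted product the single bilinear term $x_{n,m}\varphi_n$. Both approaches share the same unresolved rank-one/tightness caveat, which the paper, like you, handles by rounding rather than by proving exactness.
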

\begin{proof}
    \textbf{Theorem \ref{theorem_p7toconvex}} is proven by Lemmas \ref{lemma_gamma},
    \ref{lemma_p8top9}, and  \ref{lemma_p9top10} below.
    % \mbox{\textbf{Lemma }}, \mbox{\textbf{Lemma }}, and \mbox{\textbf{Lemma }}.
\end{proof}

\begin{lemma}\label{lemma_gamma}
In Problem $\mathbb{P}_{7}$, if focusing on $cost_{n,m}^{(s)}$, $T^{(sp)}_{n,m}$, $T^{(sg)}_{n,m}$ and setting $\omega_b = 1$, we can obtain the optimum solution of $\gamma_{n,m}$ as $\gamma_{n,m}^\star = \frac{1}{2}$.
\end{lemma}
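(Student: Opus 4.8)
The plan is to observe that, with $\bm{\phi},\bm{\rho},\bm{\zeta},\bm{\psi}$ held fixed in Problem $\mathbb{P}_{7}$, the variable $\gamma_{n,m}$ enters the objective only through the term $\vartheta_{n,m}^{(s)}cost^{(s)}_{n,m}$, and within $cost^{(s)}_{n,m}$ only through the four quantities $T^{(sp)}_{n,m}$, $T^{(sg)}_{n,m}$, $E^{(sp)}_{n,m}$, and $E^{(sg)}_{n,m}$; the remaining terms $T^{(ut)}_{n,m}$, $T^{(bp)}_{n,m}$, $T^{(sv)}_{n,m}$, and $E^{(ut)}_{n,m}$ are constants with respect to $\gamma_{n,m}$ once the other blocks of variables are fixed. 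Because the fixed multipliers $\alpha_{n,m}^{(s)}$ and $\vartheta_{n,m}^{(s)}$ are nonnegative and $cost^{(s)}_{n,m}$ enters the objective with a minus sign, maximizing over $\gamma_{n,m}$ is equivalent to minimizing the $\gamma_{n,m}$-dependent part of $cost^{(s)}_{n,m}$.

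First I would isolate the delay contribution $T^{(sp)}_{n,m}+T^{(sg)}_{n,m}$. Setting $\omega_b=1$ and collecting the common positive factor $A:=\tfrac{x_{n,m}\varphi_n d_n \eta_m}{\zeta_{n,m}f_m}$ (a constant once $\bm{x},\bm{\varphi},\bm{\zeta}$ are fixed), this reduces to $A\bigl(\tfrac{1}{\gamma_{n,m}}+\tfrac{1}{1-\gamma_{n,m}}\bigr)$, so the task becomes minimizing $h(\gamma):=\tfrac{1}{\gamma}+\tfrac{1}{1-\gamma}$ over the open interval $(0,1)$ demanded by constraint (\ref{gamma_constr}).

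The key step is then a one-variable convex minimization. The function $h$ is strictly convex on $(0,1)$ and symmetric, $h(\gamma)=h(1-\gamma)$; hence its unique minimizer satisfies $\gamma^\star=1-\gamma^\star$, giving $\gamma_{n,m}^\star=\tfrac12$. Equivalently, the first-order condition $h'(\gamma)=-\gamma^{-2}+(1-\gamma)^{-2}=0$ yields $(1-\gamma)^2=\gamma^2$, whose only root in $(0,1)$ is $\tfrac12$, and $h''(\gamma)=2\gamma^{-3}+2(1-\gamma)^{-3}>0$ confirms this is a minimum. I would also note that under $\omega_b=1$ the energy terms $E^{(sp)}_{n,m}+E^{(sg)}_{n,m}$ reduce to a positive multiple of $\gamma^2+(1-\gamma)^2$, which is likewise strictly convex and symmetric about $\tfrac12$; consequently the full $\gamma_{n,m}$-dependent part of $cost^{(s)}_{n,m}$ is minimized at the same point, so the conclusion is unaffected by whether the energy terms are retained alongside the delay terms named in the statement.

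There is no substantive obstacle here: the argument is a routine single-variable optimization. The one point that genuinely matters is the role of the assumption $\omega_b=1$, which is exactly what makes $T^{(sg)}_{n,m}$ (and $E^{(sg)}_{n,m}$) the mirror image of $T^{(sp)}_{n,m}$ (and $E^{(sp)}_{n,m}$) under $\gamma\mapsto 1-\gamma$, producing the symmetry that pins the optimum to $\tfrac12$; for $\omega_b\neq 1$ the same first-order analysis applies but the minimizer shifts off $\tfrac12$. The only remaining care is to confirm that $\tfrac12$ lies strictly inside the feasible interval $(0,1)$ required by (\ref{gamma_constr}), which it clearly does.
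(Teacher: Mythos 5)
Your proof is correct and follows the same decomposition as the paper: both isolate the $\gamma_{n,m}$-dependent part of $cost^{(s)}_{n,m}$ into an energy piece proportional to $\gamma^2+\omega_b(1-\gamma)^2$ and a delay piece proportional to $\tfrac{1}{\gamma}+\tfrac{\omega_b}{1-\gamma}$, and observe that $\omega_b=1$ makes both minimizers coincide at $\tfrac12$. Where you differ is in how the delay piece is minimized: the paper invokes the AM--GM inequality $T^{(sp)}_{n,m}+T^{(sg)}_{n,m}\geq 2\sqrt{T^{(sp)}_{n,m}T^{(sg)}_{n,m}}$ and reads off the minimizer from the equality condition $T^{(sp)}_{n,m}=T^{(sg)}_{n,m}$, whereas you solve the first-order condition of the strictly convex function $h(\gamma)=\gamma^{-1}+(1-\gamma)^{-1}$ directly. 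Your route is actually the more careful one: the AM--GM lower bound $2\sqrt{T^{(sp)}_{n,m}T^{(sg)}_{n,m}}$ is itself a function of $\gamma$ (the product is proportional to $\tfrac{1}{\gamma(1-\gamma)}$, not a constant), so equality in AM--GM does not by itself certify a minimizer of the sum; this is also why the paper's intermediate claim that the delay is minimized at $\gamma=\tfrac{1}{1+\omega_b}$ for general $\omega_b$ is off (the correct stationary point of $\gamma^{-1}+\omega_b(1-\gamma)^{-1}$ is $\tfrac{1}{1+\sqrt{\omega_b}}$), though both expressions collapse to $\tfrac12$ at $\omega_b=1$ so the lemma as stated is unaffected. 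Your explicit convexity check and the remark on the role of $\omega_b=1$ (it is exactly the symmetry $\gamma\mapsto 1-\gamma$ that aligns the energy and delay minimizers) make the argument self-contained and slightly more robust than the published one.
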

\begin{proof}
    Refer to Appendix \ref{append_lemma_gamma} in the full version paper~\cite{fullversion}.
\end{proof}
With \textbf{Lemma \ref{lemma_gamma}}, we obtain the optimum solution of $\gamma_{n,m}$ as $\gamma_{n,m}^\star = \frac{1}{2}$. By substituting the value of $\gamma_{n,m}^\star$, the objective function of Problem $\mathbb{P}_{7}$ becomes:
\begin{talign}
    &\sum_{n \in \mathcal{N}} \alpha^{(u)}_n [c_n(1 - \varphi_n)d_n - \vartheta_n^{(u)}cost^{(u)}_n] \nonumber \\
    &+ \sum_{n \in \mathcal{N}}\sum_{m \in \mathcal{M}}\alpha^{(s)}_{n,m} (c_{n,m}x_{n,m}\varphi_n d_n - \vartheta_{n,m}^{(s)}cost^{(s)}_{n,m})\nonumber \\
    &=\!\!\sum_{n \in \mathcal{N}}\{\alpha^{(u)}_n c_n d_n \!\!-\!\! \alpha^{(u)}_n \vartheta_n^{(u)} \omega_t T_n^{(u)} \!\!-\!\!\alpha^{(u)}_n \vartheta_n^{(u)} \omega_e \kappa_n d_n \eta_n f_n^2 \psi_n^2 \nonumber \\
    &- (\alpha^{(u)}_n c_n d_n - \alpha^{(u)}_n \vartheta_n^{(u)} \omega_e \kappa_n d_n \eta_n f_n^2 \psi_n^2) \varphi_n\} \nonumber \\
    &+ \sum_{n \in \mathcal{N}}\sum_{m \in \mathcal{M}}\{\alpha^{(s)}_{n,m}c_{n,m}d_n x_{n,m}\varphi_n - \alpha^{(s)}_{n,m} \vartheta_{n,m}^{(s)} \omega_t T^{(s)}_{n,m} \nonumber \\
    &- \alpha^{(s)}_{n,m} \vartheta_{n,m}^{(s)} \omega_e \frac{\rho_n p_n d_n}{r_{n,m}}x_{n,m}\varphi_n \nonumber \\
    &- \frac{1}{2}\alpha^{(s)}_{n,m} \vartheta_{n,m}^{(s)} \omega_e \kappa_m d_n \eta_m\zeta_{n,m}^2 f_m^2 x_{n,m}\varphi_n\}.
\end{talign}
For brevity, let $A_n:=\alpha^{(u)}_n c_n d_n - \alpha^{(u)}_n \vartheta_n^{(u)} \omega_e \kappa_n d_n \eta_n f_n^2 \psi_n^2$, $B_{n,m}:=\alpha^{(s)}_{n,m} \vartheta_{n,m}^{(s)} \omega_e \frac{\rho_n p_n d_n}{r_{n,m}} + \frac{1}{2}\alpha^{(s)}_{n,m} \vartheta_{n,m}^{(s)} \omega_e \kappa_m d_n \eta_m\zeta_{n,m}^2 f_m^2 - \alpha^{(s)}_{n,m}c_{n,m}d_n$, and $C:=\sum_{n \in \mathcal{N}}(-\alpha^{(u)}_n c_n d_n + \alpha^{(u)}_n \vartheta_n^{(u)} \omega_e \kappa_n d_n \eta_n f_n^2 \psi_n^2)$. Let $\bm{A}:=[A_n]|_{n \in \mathcal{N}}$ and $\bm{B}:=[B_{n,m}]|_{n \in \mathcal{N},m \in \mathcal{M}}$. Besides, the ``max'' problem in Problem $\mathbb{P}_{7}$ can also be rewritten as a ``min'' problem:
\begin{subequations}\label{prob8}
\begin{align}
\mathbb{P}_{8}:&\min\limits_{\bm{x},\bm{\varphi},\bm{T}}  C+\sum\limits_{n \in \mathcal{N}} 
\alpha^{(u)}_n \vartheta_n^{(u)} \omega_t T_n^{(u)} + A_n \varphi_n \nonumber \\
&+ \sum_{n \in \mathcal{N}}\sum_{m \in \mathcal{M}}\alpha^{(s)}_{n,m} \vartheta_{n,m}^{(s)} \omega_t T^{(s)}_{n,m} + B_{n,m} x_{n,m} \varphi_n \tag{\ref{prob8}}\\
&\text{s.t.} \quad (\text{\ref{x_constr1_new}}), (\text{\ref{x_constr2}})\text{-}(\text{\ref{varphi_constr}}),
(\text{\ref{phi_constr2}}), (\text{\ref{zeta_constr2}}),
(\text{\ref{Tu_constr}})\text{-}(\text{\ref{Ts_constr}}). \nonumber
\end{align}
\end{subequations}
Since there is the quadratic term $x_{n,m} \varphi_n$ and the quadratic constraint (\text{\ref{x_constr1_new}}), Problem $\mathbb{P}_{8}$ is a quadratically constrained quadratic programming (QCQP) problem. Next, we will explain how to transform Problem $\mathbb{P}_{8}$ into a standard QCQP form problem. We first use a new matrix $\bm{Q}:=(\bm{\varphi}^\intercal,\bm{x_1}^\intercal,\cdots,\bm{x_M}^\intercal)^\intercal$ to combine $\bm{x}$ and $\bm{\varphi}$, where $\bm{\varphi} = (\varphi_1,\cdots,\varphi_N)^\intercal$ and $\bm{x_m} = (x_{1,m},\cdots,x_{N,m})$. Here, we define some auxiliary matrices and vectors to facilitate our transformation. Let $e_i:=(0,\cdots,1_{i\text{-th}},\cdots,0)_{NM+N\times1}^\intercal$, $\bm{e}_{i,j}:=(e_i,\cdots,e_j)^\intercal$, $e_{\bar{i}}:=(0,\cdots,1_{i\text{-th}},\cdots,1_{(i+N)\text{-th}},\cdots,1_{[i+N(M-1)]\text{-th}},\cdots,0)^\intercal$, $\bm{e}_{\bar{i},\bar{j}}$$:=$$(e_{\bar{i}}$, $\cdots$, $e_{\bar{j}})^\intercal$, $e_{i\rightarrow j}:=$$(0,\cdots,1_{i\text{-th}}$, $1$, $\cdots$, $1_{j\text{-th}}$, $0$, $\cdots,0)^\intercal$, where $i<j$, $\bm{I}_{NM+N\times N}:=(\bm{I}_N,\bm{0}_{N\times NM})^\intercal$, and $\bm{I}_{N\rightarrow NM}:=(\bm{I}_N,\cdots,\bm{I}_N)_{N\times NM}$. Making $\sum_{n \in \mathcal{N}} 
\alpha^{(u)}_n \vartheta_n^{(u)} \omega_t T_n^{(u)}$ and $\sum_{n \in \mathcal{N}}\sum_{m \in \mathcal{M}}\alpha^{(s)}_{n,m} \vartheta_{n,m}^{(s)} \omega_t T^{(s)}_{n,m}$ succincter, we introduce two variables $T^{(u)}$ and $T^{(s)}$. Then, the constraints (\text{\ref{Tu_constr}}) and (\text{\ref{Ts_constr}}) will be
\begin{talign}
&\sum_{n \in \mathcal{N}} 
\alpha^{(u)}_n \vartheta_n^{(u)} \omega_t T_n^{(u)} \leq T^{(u)}, \label{Tu_constr1}\\
&\sum_{n \in \mathcal{N}}\sum_{m \in \mathcal{M}}\alpha^{(s)}_{n,m} \vartheta_{n,m}^{(s)} \omega_t T^{(s)}_{n,m} \leq T^{(s)},\label{Ts_constr1}
\end{talign}
respectively.
\begin{lemma}\label{lemma_p8top9}
There're matrices $\bm{P}_0$, $\bm{W}_0$, $\bm{P}_2^{(T_u)}$, $\bm{P}_0^{(T_s)}$, and terms $P_1^{(T_u)}$ and $P_1^{(T_s)}$ that can transform Problem $\mathbb{P}_{8}$ into the standard QCQP Problem $\mathbb{P}_{9}$:
\begin{subequations}\label{prob9}
\begin{align}
\mathbb{P}_{9}:&\min\limits_{\bm{Q},T^{(u)},T^{(s)}} \bm{Q}^\intercal \bm{P}_0 \bm{Q} + \bm{W}_0^\intercal \bm{Q} + T^{(u)} + T^{(s)} + C \tag{\ref{prob9}}\\
\text{s.t.} \quad 
&\text{diag}(\boldsymbol{e}_{N+1,NM+N}^\intercal\boldsymbol{Q})(\text{diag}(\boldsymbol{e}_{N+1,NM+N}^\intercal\boldsymbol{Q})-\bm{I})=\bm{0}, \label{x_constr1_qcqp}\\       
& \text{diag}(\boldsymbol{e}_{\overline{1},\overline{M}}^\intercal\boldsymbol{e}_{N+1,NM+N}^\intercal\boldsymbol{Q})=\bm{I}, \label{x_constr2_qcqp}\\
& \text{diag}(\boldsymbol{e}_{1,N}^\intercal\boldsymbol{Q})\preceq\bm{I}, \label{varphi_constr1_qcqp}\\
& \text{diag}(\boldsymbol{e}_{1,N}^\intercal\boldsymbol{Q})\succeq\bm{0}, \label{varphi_constr2_qcqp}\\
&\boldsymbol{\phi}^\intercal\boldsymbol{e}_{N+1,NM+N}^\intercal\boldsymbol{Q}-1 \leq 0, \label{x_phi_constr_qcqp}\\
&\boldsymbol{\zeta}^\intercal\boldsymbol{e}_{N+1,NM+N}^\intercal\boldsymbol{Q}-1 \leq 0, \label{x_zeta_constr_qcqp}\\
& {\bm{P}_2^{(T_u)}}^\intercal \bm{Q} + P_1^{(T_u)} \leq T^{(u)}, \label{Tu_constr_qcqp}\\
& \bm{Q}^\intercal \bm{P}_0^{(T_s)} \bm{Q} + P_1^{(T_s)} \leq T^{(s)}. \label{Ts_constr_qcqp}
\end{align}
\end{subequations}
\end{lemma}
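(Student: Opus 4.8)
The plan is to verify, constraint by constraint and term by term, that once $\bm{x}$ and $\bm{\varphi}$ are stacked into $\bm{Q}=(\bm{\varphi}^\intercal,\bm{x_1}^\intercal,\cdots,\bm{x_M}^\intercal)^\intercal$, every piece of Problem $\mathbb{P}_{8}$ is either affine or quadratic in $\bm{Q}$ and therefore fits the standard QCQP template of $\mathbb{P}_{9}$. First I would confirm that the selector operators act as intended: $\bm{e}_{1,N}^\intercal\bm{Q}$ returns the $\bm{\varphi}$ block (coordinates $1,\dots,N$), $\bm{e}_{N+1,NM+N}^\intercal\bm{Q}$ returns the stacked $\bm{x}$ block, and $\bm{e}_{\bar{1},\bar{M}}^\intercal$ applied to the $\bm{x}$ block sums $x_{n,m}$ over $m$ for each fixed $n$. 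With these identities the combinatorial and box constraints translate mechanically: the relaxed binary condition (\ref{x_constr1_new}) becomes the diagonal quadratic equality (\ref{x_constr1_qcqp}); the one-server-per-user condition (\ref{x_constr2}) becomes the linear equality (\ref{x_constr2_qcqp}); and (\ref{varphi_constr}) splits into (\ref{varphi_constr1_qcqp})--(\ref{varphi_constr2_qcqp}). Since $\bm{\phi}$ and $\bm{\zeta}$ are fixed in this block, (\ref{phi_constr2}) and (\ref{zeta_constr2}) are affine in the $\bm{x}$ block and reduce to (\ref{x_phi_constr_qcqp}) and (\ref{x_zeta_constr_qcqp}).

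Next I would recast the objective. The sole coupling is the bilinear term $B_{n,m}x_{n,m}\varphi_n$: because $\varphi_n$ sits at coordinate $n$ of $\bm{Q}$ and $x_{n,m}$ at coordinate $N+(m-1)N+n$, I place $B_{n,m}/2$ in the two symmetric off-diagonal entries of $\bm{P}_0$ indexed by that pair, so that $\bm{Q}^\intercal\bm{P}_0\bm{Q}=\sum_{n,m}B_{n,m}x_{n,m}\varphi_n$. The remaining linear part $\sum_n A_n\varphi_n$ is absorbed into $\bm{W}_0^\intercal\bm{Q}$ by setting the $\bm{\varphi}$ block of $\bm{W}_0$ to $(A_1,\dots,A_N)$ and everything else to zero, while the constant $C$ carries over verbatim.

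The delay terms require the tightness argument. Using the epigraph variables introduced in (\ref{Tu_constr1})--(\ref{Ts_constr1}), I would eliminate the per-user and per-pair variables $T_n^{(u)},T_{n,m}^{(s)}$: their coefficients $\alpha^{(u)}_n\vartheta_n^{(u)}\omega_t$ and $\alpha^{(s)}_{n,m}\vartheta_{n,m}^{(s)}\omega_t$ are strictly positive and the problem minimizes, so constraints (\ref{Tu_constr})--(\ref{Ts_constr}) are tight at the optimum, allowing substitution of the closed forms $T_n^{(u)}=T_n^{(up)}$ and $T_{n,m}^{(s)}=T^{(ut)}_{n,m}+T^{(sp)}_{n,m}+T^{(sg)}_{n,m}+T^{(bp)}_{n,m}+T^{(sv)}_{n,m}$. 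With $\bm{\psi}$ fixed, $T_n^{(up)}$ in (\ref{eq_T_up}) is affine in $\varphi_n$, so weighting and summing yields the affine bound (\ref{Tu_constr_qcqp}) with coefficient vector $\bm{P}_2^{(T_u)}$ and constant $P_1^{(T_u)}$. On the server side $T^{(ut)}_{n,m},T^{(sp)}_{n,m},T^{(sg)}_{n,m}$ are bilinear in $x_{n,m}\varphi_n$ (with $\bm{\phi},\bm{\rho},\bm{\zeta}$ fixed and $\gamma_{n,m}=\tfrac{1}{2}$ from Lemma \ref{lemma_gamma}), whereas $T^{(bp)}_{n,m}$ and $T^{(sv)}_{n,m}$ are constants; collecting the bilinear coefficients into $\bm{P}_0^{(T_s)}$ exactly as for $\bm{P}_0$ and the constants into $P_1^{(T_s)}$ produces the quadratic bound (\ref{Ts_constr_qcqp}).

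I expect the main obstacle to be the index bookkeeping when assembling $\bm{P}_0$ and $\bm{P}_0^{(T_s)}$: one must map each pair $(n,m)$ to the correct off-diagonal block of an $(NM+N)\times(NM+N)$ matrix and symmetrize so that the quadratic form reproduces $x_{n,m}\varphi_n$ without introducing spurious cross terms. A secondary subtlety is justifying rigorously that eliminating $T_n^{(u)},T_{n,m}^{(s)}$ through tightness preserves both the feasible set and the optimal value; this needs the two-directional check that every $\mathbb{P}_{9}$-feasible triple $(\bm{Q},T^{(u)},T^{(s)})$ lifts back to a $\mathbb{P}_{8}$-feasible point of equal cost, and conversely.
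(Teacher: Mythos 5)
Your proposal is correct and follows essentially the same route as the paper's proof, which simply exhibits the selector-matrix constructions of $\bm{P}_0$, $\bm{W}_0$, $\bm{P}_2^{(T_u)}$, $\bm{P}_0^{(T_s)}$ and the constants $P_1^{(T_u)}$, $P_1^{(T_s)}$ so that the bilinear term $\sum_{n,m}B_{n,m}x_{n,m}\varphi_n$, the linear term $\sum_n A_n\varphi_n$, and the chained delay constraints (\ref{Tu_constr})--(\ref{Ts_constr}) with (\ref{Tu_constr1})--(\ref{Ts_constr1}) become the stated quadratic, linear, and epigraph forms in $\bm{Q}$ (the paper uses a non-symmetric $\bm{P}_0$ rather than your symmetrized $B_{n,m}/2$ placement, which is equivalent). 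Your added care about eliminating the per-element variables $T_n^{(u)},T_{n,m}^{(s)}$ via positivity of the weights $\alpha\vartheta\omega_t$ is a valid justification of a step the paper leaves implicit, not a different approach.
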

\begin{proof}
    Refer to Appendix \ref{append_lemma_p8top9} in the full version paper~\cite{fullversion}.
\end{proof}

Using \textbf{Lemma \ref{lemma_p8top9}}, we can transform Problem $\mathbb{P}_{8}$ into the standard QCQP form Problem $\mathbb{P}_{9}$. However, Problem $\mathbb{P}_{9}$ is still non-convex. Then, we will use the semidefinite programming (SDP) method to transform this QCQP problem into an SDR problem. We introduce a new variable $\bm{S}:=(\bm{Q}^\intercal,1)^\intercal(\bm{Q}^\intercal,1)$. 

\begin{lemma}\label{lemma_p9top10}
There exist matrices $\bm{P}_1$, $\bm{P}_2$, $\bm{P}_3$, $\bm{P}_4$, $\bm{P}_5$, $\bm{P}_6$, $\bm{P}_7$, and $\bm{P}_8$ that can convert the QCQP Problem $\mathbb{P}_{9}$ into the SDR Problem $\mathbb{P}_{10}$:
\begin{subequations}\label{prob10}
\begin{align}
\mathbb{P}_{10}: &\min\limits_{\bm{S},T^{(u)},T^{(s)}}\quad  \text{Tr}(\bm{P}_1 \bm{S})\tag{\ref{prob10}}\\
\text{s.t.} \quad & \text{Tr}(\bm{P}_2 \bm{S})=0, \label{x_constr1_sdr}\\       & \text{Tr}(\bm{P}_3 \bm{S})=0, \label{x_constr2_sdr}\\
         & \text{Tr}(\bm{P}_4 \bm{S})\leq0, \label{varphi_constr_sdr}\\
         & \text{Tr}(\bm{P}_5 \bm{S})\leq0, \label{x_phi_constr_sdr}\\
         & \text{Tr}(\bm{P}_6 \bm{S})\leq0, \label{x_zeta_constr_sdr}\\
         & \text{Tr}(\bm{P}_7 \bm{S})\leq T^{(u)}, \label{Tu_constr_sdr}\\
         & \text{Tr}(\bm{P}_8 \bm{S})\leq T^{(s)}, \label{Ts_constr_sdr}\\
         & \bm{S}\succeq0, \label{S_constr_sdr}
\end{align}
\end{subequations}
where $Tr(\cdot)$ means the trace of a matrix.
\end{lemma}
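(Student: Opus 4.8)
The plan is to prove Lemma \ref{lemma_p9top10} by the standard semidefinite relaxation (SDR) lifting: every quadratic, bilinear, and linear expression in $\bm{Q}$ is rewritten as a \emph{linear} function of the lifted matrix variable $\bm{S}$, after which the non-convex rank-one requirement hidden in the definition of $\bm{S}$ is discarded.

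First I would homogenize and record the block structure. Since $\bm{S}:=(\bm{Q}^\intercal,1)^\intercal(\bm{Q}^\intercal,1)$, we have, blockwise, $\bm{S}=\begin{pmatrix}\bm{Q}\bm{Q}^\intercal & \bm{Q}\\ \bm{Q}^\intercal & 1\end{pmatrix}$, so the top-left block carries every degree-two monomial $Q_iQ_j$ (in particular the bilinear couplings $x_{n,m}\varphi_n$ and the squares $x_{n,m}^2,\varphi_n^2$), the last column/row carries every degree-one monomial $Q_i$, and the bottom-right entry equals $1$. By construction $\bm{S}$ is symmetric, positive semidefinite, rank one, and any such $\bm{S}$ recovers a feasible $\bm{Q}$.

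Next I would establish the single identity that drives the whole lemma: for any symmetric $\bm{A}$, vector $\bm{b}$, and scalar $c$,
\begin{talign}
\bm{Q}^\intercal\bm{A}\bm{Q}+\bm{b}^\intercal\bm{Q}+c=\text{Tr}\!\left(\begin{pmatrix}\bm{A} & \tfrac12\bm{b}\\ \tfrac12\bm{b}^\intercal & c\end{pmatrix}\bm{S}\right),
\end{talign}
which follows by expanding the trace blockwise and using $\text{Tr}(\bm{A}\bm{Q}\bm{Q}^\intercal)=\bm{Q}^\intercal\bm{A}\bm{Q}$ together with the normalization of the last entry. Applying it term by term then \emph{defines} each matrix by explicit construction, so existence is immediate: $\bm{P}_1$ absorbs the objective triple $(\bm{P}_0,\bm{W}_0,C)$ (the affine $T^{(u)}+T^{(s)}$ being left untouched); $\bm{P}_2$ encodes the binary constraint (\ref{x_constr1_qcqp}) via $x_{n,m}^2-x_{n,m}$; $\bm{P}_3$ the association constraint (\ref{x_constr2_qcqp}); $\bm{P}_4$ the box constraints (\ref{varphi_constr1_qcqp})--(\ref{varphi_constr2_qcqp}) folded into $\varphi_n^2-\varphi_n\le0$; $\bm{P}_5,\bm{P}_6$ the constraints (\ref{x_phi_constr_qcqp})--(\ref{x_zeta_constr_qcqp}), which are merely linear once $\bm{\phi},\bm{\zeta}$ are fixed; $\bm{P}_7$ the affine bound on $T^{(u)}$; and $\bm{P}_8$ the genuinely quadratic bound on $T^{(s)}$ coming from $\bm{P}_0^{(T_s)}$. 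Each $\bm{P}_i$ is the $(NM+N+1)\times(NM+N+1)$ template above with the relevant $\bm{A}$, $\tfrac12\bm{b}$, $c$ inserted, the $\tfrac12$ split guaranteeing the symmetry that $\text{Tr}(\,\cdot\,\bm{S})$ requires.

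The hard part, and the only step beyond bookkeeping, is the relaxation itself. After the rewriting, the feasible set of Problem $\mathbb{P}_9$ is described exactly by the constructed trace (in)equalities \emph{together with} $\bm{S}\succeq0$, $\text{rank}(\bm{S})=1$, and the normalization $S_{NM+N+1,NM+N+1}=1$; dropping the rank-one requirement and keeping only (\ref{S_constr_sdr}) convexifies the problem and yields exactly Problem $\mathbb{P}_{10}$. I would therefore emphasize that this is a genuine relaxation rather than an exact reformulation: the enlarged feasible set may contain higher-rank optima, so one must note that the normalization is retained and that a rank-one, hence $\mathbb{P}_9$-feasible, point is recovered afterwards (by eigen-decomposition when the SDR solution is already rank one, or otherwise by Gaussian randomization). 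A final routine check is that the bilinear objective term $x_{n,m}\varphi_n$ and the quadratic $T^{(s)}$ constraint live entirely in the top-left block, so the identity above applies to them verbatim.
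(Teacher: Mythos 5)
Your proposal is correct and follows essentially the same route as the paper: the paper's proof simply writes out the eight bordered $(NM+N+1)\times(NM+N+1)$ matrices obtained from exactly the homogenization identity $\bm{Q}^\intercal\bm{A}\bm{Q}+\bm{b}^\intercal\bm{Q}+c=\text{Tr}\bigl(\begin{smallmatrix}\bm{A} & \frac{1}{2}\bm{b}\\ \frac{1}{2}\bm{b}^\intercal & c\end{smallmatrix}\bigr)\bm{S}$ that you state, with the rank-one recovery deferred to the rounding/Hungarian step in the main text. Your additional remarks --- that the normalization $S_{NM+N+1,NM+N+1}=1$ must be retained for the identity to hold and that the relaxation is not exact --- are accurate and in fact slightly more careful than the paper's own write-up, but they do not constitute a different approach.
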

\begin{proof}
    Refer to Appendix \ref{append_lemma_p9top10} in the full version paper~\cite{fullversion}.
\end{proof}

By \textbf{Lemma \ref{lemma_p9top10}}, Problem $\mathbb{P}_{10}$ is finally transformed into a solvable SDR Problem $\mathbb{P}_{10}$. Standard convex solvers can efficiently solve the SDR Problem $\mathbb{P}_{10}$ in polynomial time, providing a continuous version of $\bm{Q}$. However, this version often only serves as the lower bound for the ideal solution and may not satisfy the $\text{rank}(\mathbf{S})=1$ constraint. To rectify this, we apply rounding techniques. The final $NM$ components of $\bm{Q}$, represented by $x_{n,m}$ for every $n \in \mathcal{N}$ and $m \in \mathcal{M}$, reflect the partial connection of users to servers. If the sum $\sum_{m \in \mathcal{M}} x_{n,m}$ exceeds 1 for any user, we normalize $x_{n,m}$ by dividing it by the absolute sum. The Hungarian algorithm \cite{kuhn1955hungarian}, augmented with zero vectors, is used to identify the optimal matching, denoted as $\mathcal{X}$. Within this matching, we set $x_{n,m}$ to 1 if nodes $n$ and $m$ are paired, and 0 otherwise, labeling this as $\bm{x}^\star$. We set the results of $\bm{\varphi}$ in $\bm{Q}$ as $\bm{\varphi}^\star$.
\begin{figure}[htbp]
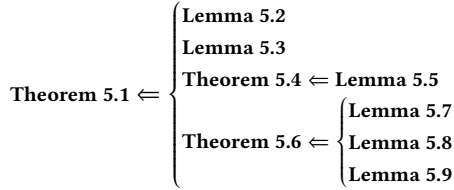

\begin{talign}
    \small\textnormal{\textbf{Theorem \ref{theorem_solvep1}}} \Leftarrow 
    \begin{cases}
        \textnormal{\textbf{Lemma \ref{lemma_p1top2}}}\\
        \textnormal{\textbf{Lemma \ref{lemma_p2top3}}}\\
        \textnormal{\textbf{Theorem \ref{theorem_p4toconcave}}} \Leftarrow \textnormal{\textbf{Lemma \ref{lemma_p4top5}}}\\
        \textnormal{\textbf{Theorem \ref{theorem_p7toconvex}}} \Leftarrow 
        \begin{cases}
            \textnormal{\textbf{Lemma \ref{lemma_gamma}}}\\
            \textnormal{\textbf{Lemma \ref{lemma_p8top9}}}\\
            \textnormal{\textbf{Lemma \ref{lemma_p9top10}}} \nonumber
        \end{cases}
    \end{cases}
\end{talign}
    \caption{Relationships between Theorems and Lemmas, where ``$\Leftarrow$'' means imply.}
    \label{fig:relat_theorem_lemma}
\end{figure}

\begin{algorithm}[htbp]
\caption{Proposed DAUR algorithm.}
\label{algorithm}

Initialize $i \leftarrow -1$ and for all $n \in \mathcal{N}, m \in \mathcal{M}$: $\bm{x}^{(0)} = (\bm{e_1},\cdots,\bm{e_M})^\intercal$, $\varphi_n^{(0)}=0.5$, $\phi_{n,m}^{(0)}=\frac{1}{N}$, $\rho_n^{(0)}=1$, $\zeta_{n,m}^{(0)}=\frac{1}{N}$, $\psi_n^{(0)}=1$, $\gamma_{n,m} = 0.5$;

Calculate $\bm{\alpha}^{(0)}, \bm{\vartheta}^{(0)}$ with $\bm{x}^{(0)} = (\bm{e_1},\cdots,\bm{e_M})^\intercal$, $\varphi_n^{(0)}=0.5$, $\phi_{n,m}^{(0)}=\frac{1}{N}$, $\rho_n^{(0)}=1$, $\zeta_{n,m}^{(0)}=\frac{1}{N}$, $\psi_n^{(0)}=1$, $\gamma_{n,m} = 0.5$;

\Repeat{$\frac{V_{\mathbb{P}_{3}}(\bm{x}^{(i+1)}, \bm{\varphi}^{(i+1)}, \bm{\phi}^{(i+1)}, \bm{\rho}^{(i+1)}, \bm{\zeta}^{(i+1)}, \bm{\psi}^{(i+1)})}{V_{\mathbb{P}_{3}}(\bm{x}^{(i)}, \bm{\varphi}^{(i)}, \bm{\phi}^{(i)}, \bm{\rho}^{(i)}, \bm{\zeta}^{(i)}, \bm{\psi}^{(i)})}- 1 \leq \epsilon_3$, where $\epsilon_3$ is a small positive number}{
Let $i\leftarrow i+1$;

Initialize $j = -1$; 

Calculate $\bm{\upsilon}^{(i,0)}$ with $\bm{x}^{(i)}, \bm{\varphi}^{(i)}, \bm{\phi}^{(i)}, \bm{\rho}^{(i)}, \bm{\zeta}^{(i)}, \bm{\psi}^{(i)}$;

Set $[\bm{\phi}^{(i,0)}, \bm{\rho}^{(i,0)}, \bm{\zeta}^{(i,0)}, \bm{\psi}^{(i,0)}, \bm{T}^{(i,0)}]\leftarrow$ $[\bm{\phi}^{(i)}, \bm{\rho}^{(i)}, \bm{\zeta}^{(i)}, \bm{\psi}^{(i)}, \bm{T}^{(i)}]$;

\Repeat{$\frac{V_{\mathbb{P}_{5}}(\bm{\phi}^{(i,j+1)}, \bm{\rho}^{(i,j+1)}, \bm{\zeta}^{(i,j+1)}, \bm{\psi}^{(i,j+1)})}{V_{\mathbb{P}_{5}}(\bm{\phi}^{(i,j)}, \bm{\rho}^{(i,j)}, \bm{\zeta}^{(i,j)}, \bm{\psi}^{(i,j)})}- 1 \leq \epsilon_1$, where $\epsilon_1$ is a small positive number}{
Let $j\leftarrow j+1$;

Obtain $[\bm{\phi}^{(i,j+1)}, \bm{\rho}^{(i,j+1)}, \bm{\zeta}^{(i,j+1)}, \bm{\psi}^{(i,j+1)}, \bm{T}^{(i,j+1)}]$ by solving Problem $\mathbb{P}_{5}$ with $\bm{\upsilon}^{(i,j)}$;

Update $\bm{\upsilon}^{(i,j+1)}$ with $[\bm{\phi}^{(i,j+1)}, \bm{\rho}^{(i,j+1)}, \bm{\zeta}^{(i,j+1)}, \bm{\psi}^{(i,j+1)}, \bm{T}^{(i,j+1)}]$;
}
Return $[\bm{\phi}^{(i,j+1)}, \bm{\rho}^{(i,j+1)}, \bm{\zeta}^{(i,j+1)}, \bm{\psi}^{(i,j+1)}]$ as a solution to Problem $\mathbb{P}_{5}$;

Set $[\bm{\phi}^{(i+1)}, \bm{\rho}^{(i+1)}, \bm{\zeta}^{(i+1)}, \bm{\psi}^{(i+1)}]$ $\leftarrow$ $[\bm{\phi}^{(i,j+1)}, \bm{\rho}^{(i,j+1)}, \bm{\zeta}^{(i,j+1)}, \bm{\psi}^{(i,j+1)}]$;

Initialize $j = -1$;

Set $[\bm{x}^{(i,0)}, \bm{\varphi}^{(i,0)}] \leftarrow [\bm{x}^{(i)}, \bm{\varphi}^{(i)}]$;

Initialize $[\bm{T}^{(i,0)}, \bm{A}^{(i,0)}, \bm{B}^{(i,0)}, C^{(i,0)}, \bm{P}_k^{(i,0)}]\leftarrow$ $[\bm{T}^{(i)}, \bm{A}^{(i)}, \bm{B}^{(i)}, C^{(i)}, \bm{P}_k^{(i)}]$, where $k \in \{1,2,\cdots,8\}$;

% Initialize $[\bm{T}^{(i,0)}, \bm{A}^{(i,0)}, \bm{B}^{(i,0)}, C^{(i,0)}, \bm{P}_1^{(i,0)}, \bm{P}_2^{(i,0)}, \bm{P}_3^{(i,0)}, \bm{P}_4^{(i,0)}, \bm{P}_5^{(i,0)}, \bm{P}_6^{(i,0)}, \bm{P}_7^{(i,0)}, \bm{P}_8^{(i,0)}]\leftarrow$ $[\bm{T}^{(i)}, \bm{A}^{(i)}, \bm{B}^{(i)}, C^{(i)}, \bm{P}_1^{(i)}, \bm{P}_2^{(i)}, \bm{P}_3^{(i)}, \bm{P}_4^{(i)}, \bm{P}_5^{(i)}, \bm{P}_6^{(i)}, \bm{P}_7^{(i)}, \bm{P}_8^{(i)}]$;

\Repeat{$\frac{V_{\mathbb{P}_{10}}(\bm{x}^{(i,j+1)}, \bm{\varphi}^{(i,j+1)}}{V_{\mathbb{P}_{10}}(\bm{x}^{(i,j)}, \bm{\varphi}^{(i,j)})}- 1 \leq \epsilon_2$, where $\epsilon_2$ is a small positive number}{
Let $j\leftarrow j+1$;

Obtain $[\bm{x}^{(i,j+1)}, \bm{\varphi}^{(i,j+1)}]$ of continuous values by solving Problem $\mathbb{P}_{10}$;

Update $[\bm{T}^{(i,j+1)}, \bm{A}^{(i,j+1)}, \bm{B}^{(i,j+1)}, C^{(i,j+1)}, \bm{P}_k^{(i,j+1)}$ with $[\bm{x}^{(i,j+1)}, \bm{\varphi}^{(i,j+1)}]$;
}

Return $[\bm{x}^{(i,j+1)}, \bm{\varphi}^{(i,j+1)}]$ as a solution to the SDR Problem $\mathbb{P}_{10}$;

Use the Hungarian algorithm augmented with zero vectors to obtain integer association results as $\bm{x}_\star^{(i,j+1)}$.
% If the sum $\sum_{m \in \mathcal{M}} x_{n,m}$ exceeds 1 for any user, we normalize $x_{n,m}$ by dividing it by the absolute sum. Use the Hungarian algorithm augmented with zero vectors to identify the optimal matching, denoted as $\mathcal{X}$. Within this matching, we set $x_{n,m}$ to 1 if nodes $n$ and $m$ are paired, and 0 otherwise. Denote that integer association results as $\bm{x}_\star^{(i,j+1)}$.

Set $[\bm{x}^{(i+1)}, \bm{\varphi}^{(i+1)}]\leftarrow$ $[\bm{x}_\star^{(i,j+1)}, \bm{\varphi}^{(i,j+1)}]$;

Update $[\bm{\alpha}^{(i+1)}, \bm{\vartheta}^{(i+1)}]$ with $[\bm{x}^{(i+1)}, \bm{\varphi}^{(i+1)}, \bm{\phi}^{(i+1)}, \bm{\rho}^{(i+1)}, \bm{\zeta}^{(i+1)}, \bm{\psi}^{(i+1)}]$;
}
Return $[\bm{x}^{(i+1)}, \bm{\varphi}^{(i+1)}, \bm{\phi}^{(i+1)}, \bm{\rho}^{(i+1)}, \bm{\zeta}^{(i+1)}, \bm{\psi}^{(i+1)}]$ as a solution $[\bm{x}^\star, \bm{\varphi}^\star, \bm{\phi}^\star, \bm{\rho}^\star, \bm{\zeta}^\star, \bm{\psi}^\star]$ to Problem $\mathbb{P}_{3}$.
\end{algorithm}
\subsection{The whole algorithm procedure}
Let the objective function value of Problem $\mathbb{P}_{i}$ be $V_{\mathbb{P}_{i}}$. We present the whole algorithm procedure in Algorithm \ref{algorithm} and the relationships between Theorems and Lemmas in \mbox{Fig. \ref{fig:relat_theorem_lemma}}. Here we summarize the overall flow of the optimization algorithm. At the $i$-th iteration, we first initialize $\bm{\alpha}^{(i-1)}, \bm{\vartheta}^{(i-1)}$ with $\bm{x}^{(i-1)}$, $\bm{\varphi}^{(i-1)}$, $\bm{\phi}^{(i-1)}$, $\bm{\rho}^{(i-1)}$, $\bm{\zeta}^{(i-1)}$, $\bm{\psi}^{(i-1)}$. Then, we fix $\bm{\alpha}, \bm{\vartheta}$ as $\bm{\alpha}^{(i-1)}, \bm{\vartheta}^{(i-1)}$ and optimize $\bm{x}$, $\bm{\varphi}$, $\bm{\phi}$, $\bm{\rho}$, $\bm{\zeta}$, $\bm{\psi}$. For the optimization of $\bm{x}$, $\bm{\varphi}$, $\bm{\phi}$, $\bm{\rho}$, $\bm{\zeta}$, $\bm{\psi}$, we use the alternative optimization technique. 

In the first step, we fix $\bm{x}$, $\bm{\varphi}$ as $\bm{x}^{(i-1)}$, $\bm{\varphi}^{(i-1)}$ and optimize $\bm{\phi}$, $\bm{\rho}$, $\bm{\zeta}$, $\bm{\psi}$. At this optimization step, we also introduce an auxiliary variable $\upsilon^{(s)}_{n,m}$, where $\upsilon^{(s)}_{n,m} = \frac{1}{2x_{n,m}\rho_np_n\varphi_nd_nr_{n,m}}$ to transform Problem $\mathbb{P}_{4}$ into a solvable concave problem $\mathbb{P}_{5}$. At the $j$-th inner iteration, we initialize $\bm{\upsilon}^{\bm{(s)}(i-1,j-1)}$ with $\bm{x}^{(i-1)}$, $\bm{\varphi}^{(i-1)}$, $\bm{\phi}^{(i-1,j-1)}$, $\bm{\rho}^{(i-1,j-1)}$, $\bm{\zeta}^{(i-1,j-1)}$, $\bm{\psi}^{(i-1,j-1)}$. We fix $\bm{\upsilon}^{\bm{(s)}}$ as $\bm{\upsilon}^{\bm{(s)}(i-1,j-1)}$ and optimize $\bm{\phi}$, $\bm{\rho}$, $\bm{\zeta}$, $\bm{\psi}$. Then we obtain the optimization results $\bm{\phi}^{(i-1,j)}$, $\bm{\rho}^{(i-1,j)}$, $\bm{\zeta}^{(i-1,j)}$, $\bm{\psi}^{(i-1,j)}$ and update $\bm{\upsilon}^{\bm{(s)}(i-1,j)}$ with these results. This optimization cycle is repeated until the difference in the objective function value of Problem $\mathbb{P}_{5}$ between the $j$-th and $(j-1)$-th iterations falls below a predefined threshold. We set the results of this alternative optimization step as $\bm{\phi}^{(i)}$, $\bm{\rho}^{(i)}$, $\bm{\zeta}^{(i)}$, $\bm{\psi}^{(i)}$.

In the second step, we fix the $\bm{\phi}$, $\bm{\rho}$, $\bm{\zeta}$, $\bm{\psi}$ as $\bm{\phi}^{(i)}$, $\bm{\rho}^{(i)}$, $\bm{\zeta}^{(i)}$, $\bm{\psi}^{(i)}$ and optimize $\bm{x}$, $\bm{\varphi}$. Then we first obtain $\bm{\varphi}^{(i)}$ and the continuous solution of $\bm{x}$ by solving Problem $\mathbb{P}_{10}$. Next, we use the Hungarian algorithm to obtain the discrete solution of $\bm{x}$ and denote it as $\bm{x}^{(i)}$.  Until now, we have obtained $\bm{x}^{(i)}$, $\bm{\varphi}^{(i)}$, $\bm{\phi}^{(i)}$, $\bm{\rho}^{(i)}$, $\bm{\zeta}^{(i)}$, $\bm{\psi}^{(i)}$. Update $\bm{\alpha}^{(i)}, \bm{\vartheta}^{(i)}$ with those results.

Repeat these two optimization steps until the difference in the objective function value of Problem $\mathbb{P}_{3}$ between the $i$-th and \mbox{$(i-1)$-th}  iterations falls in a predefined threshold. Then, we set the optimization results as $\bm{x}^\star$, $\bm{\varphi}^\star$, $\bm{\phi}^\star$, $\bm{\rho}^\star$, $\bm{\zeta}^\star$, $\bm{\psi}^\star$.

\begin{figure*}[!htbp]
\vspace{-0.3cm}
\subfigure[Convergence of the FP method.]{\includegraphics[width=.33\textwidth]{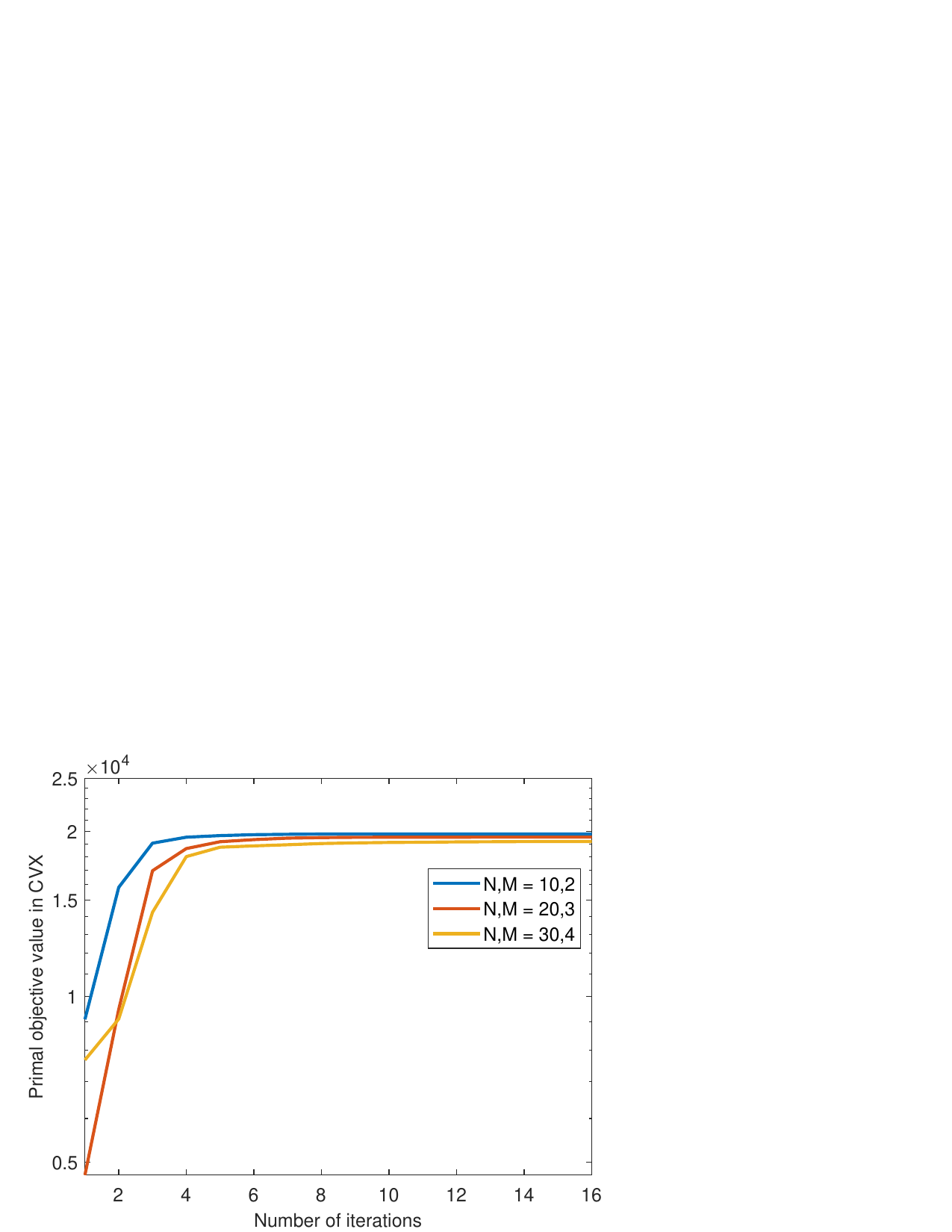}\label{fig.convergence of FP}}
\subfigure[Convergence of the QCQP method.]{\includegraphics[width=.33\textwidth]{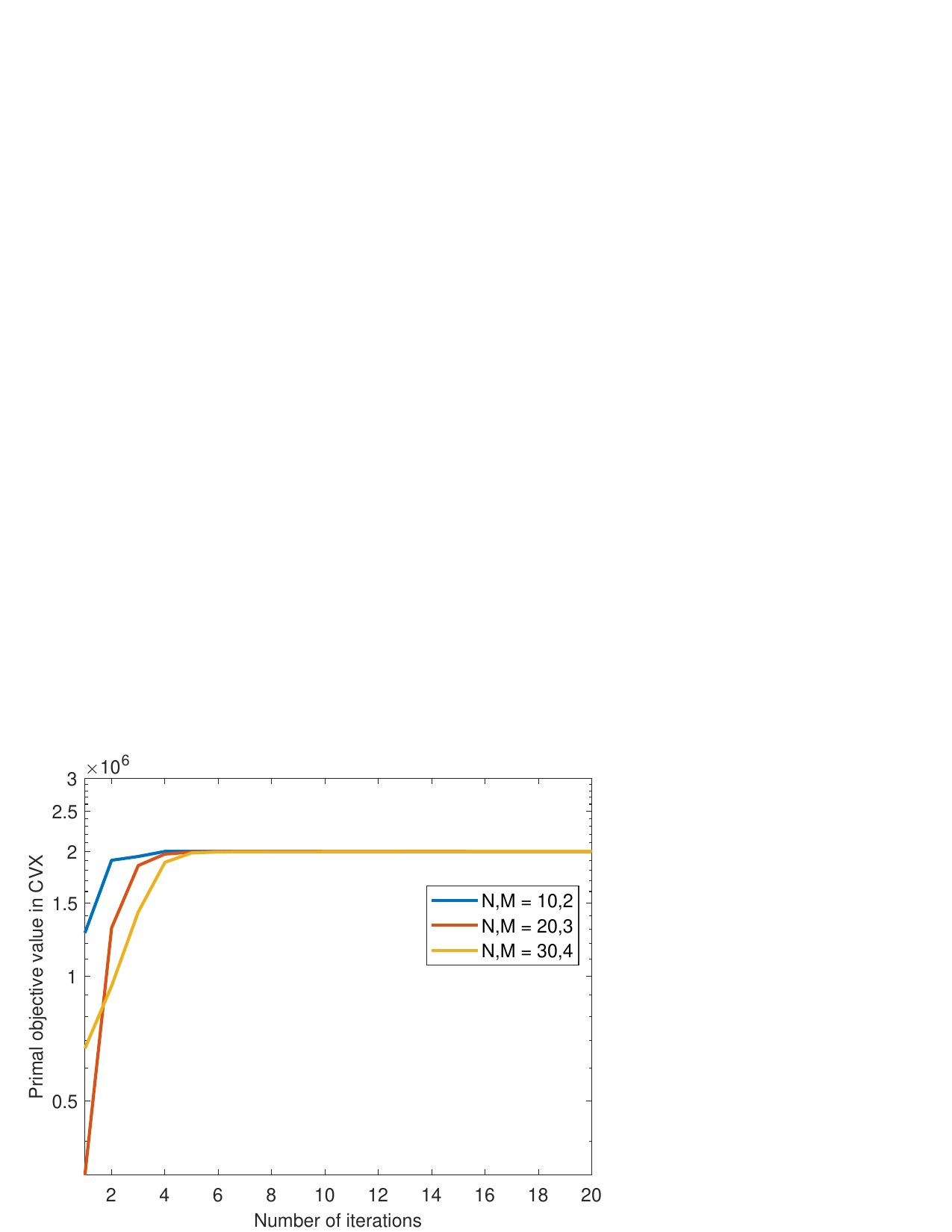}\label{fig.convergence of QCQP}}
\subfigure[Performance comparison between the DAUR algorithm and other baselines.]{\includegraphics[width=.33\textwidth]{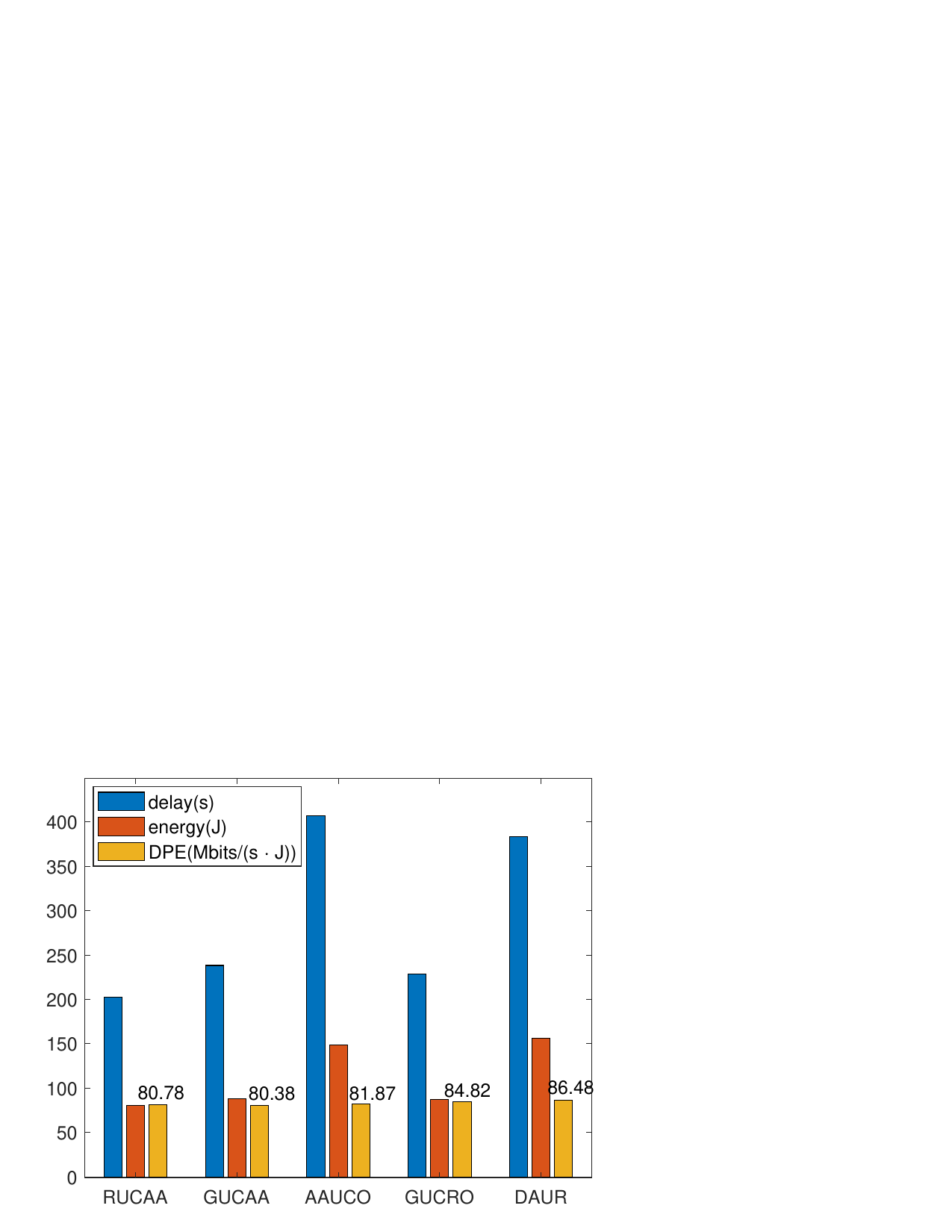}\label{fig.DAUR performance}} \vspace{-10pt}
\caption{Convergence of FP and QCQP methods; Performance comparison of the DAUR algorithm with baselines.} 
\end{figure*}
\begin{table}[!htbp]
\centering
\caption{Convergence of the DAUR Algorithm}
\label{tab:DAURConvergence}
\begin{tabular}{@{}lccc@{}}
\toprule
$N$, $M$ & \textbf{10, 2} & \textbf{20, 3} & \textbf{30, 4} \\
\midrule
Outer iter. rounds & 1 & 1 & 1 \\
QCQP rounds & 1 & 1 & 1 \\
FP rounds & 3 (26+25+25 iter.) & 2 (29+28 iter.) & 1 (30 iter.) \\
QCQP iter. & 30 & 40 & 58 \\
QCQP time & 0.17s & 11.38s & 276.42s \\
FP time & 0.03s & 0.10s & 0.08s \\
Total time & 0.20s & 11.48s & 276.50s \\
\bottomrule
\end{tabular}
\end{table}
\subsection{Novelty and wide applications of our proposed algorithm}
In this paper, we address maximizing the combined DPE of users and servers in a blockchain-enhanced Metaverse wireless system using the DAUR algorithm. This algorithm optimizes user-server associations, work offloading ratios, and task-specific computing resource distribution ratios together, as well as jointly optimizes communication and computational resources like bandwidth, transmit power, and computing allocations for both users and servers. \mbox{Unlike} previous methods that treat communication and computational resources separately~\cite{deng2022blockchain,dai2018joint}, our approach integrates them into a unified optimization problem, leading to better solutions than traditional alternating optimization methods. Additionally, the DAUR algorithm's application extends beyond DPE maximization; it's also suitable for solving energy efficiency and various utility-cost problems. For non-concave utility functions, we use successive convex approximation (SCA) to enable DAUR's application in mobile edge computing for user connection and resource allocation.
\section{Complexity Analysis}\label{sec.complexity_analysis}
In this section, we analyze the complexity of the proposed DAUR algorithm. For iterations from Line 8 to Line 12 in Algorithm \ref{algorithm}, there are $3N+3NM$ variables and $3N+3NM+2M$ constraints. The worst-case complexity of it is $\mathcal{O}((N^{3.5}+M^{3.5}+N^{3.5}M^{3.5})\log(\frac{1}{\epsilon_1}))$ with a given solution accuracy $\epsilon_1 > 0$~\cite{dai2018joint}. For iterations from Line 18 to Line 22 in Algorithm \ref{algorithm}, there are $N+NM$ variables and $NM+2N+2M+2$ constraints. The worst-case complexity of it is $\mathcal{O}((N^{3.5}+M^{3.5}+N^{3.5}M^{3.5})\log(\frac{1}{\epsilon_2}))$ with a given solution accuracy $\epsilon_2 > 0$. The complexity of the Hungarian algorithm is $\mathcal{O}(N^3M^3)$~\cite{kuhn1955hungarian}. To summarize, if Algorithm \ref{algorithm} takes $\mathcal{I}$ iterations, the whole complexity is \mbox{$\mathcal{I}\times\mathcal{O}((N^{3.5}+M^{3.5}+N^{3.5}M^{3.5})\log(\frac{1}{\epsilon_3}))$} with a given solution accuracy $\epsilon_3 > 0$.

\begin{figure*}[t]
\vspace{-0.3cm}
\subfigure[Performance comparisons of different total bandwidth.]{\includegraphics[width=.33\textwidth]{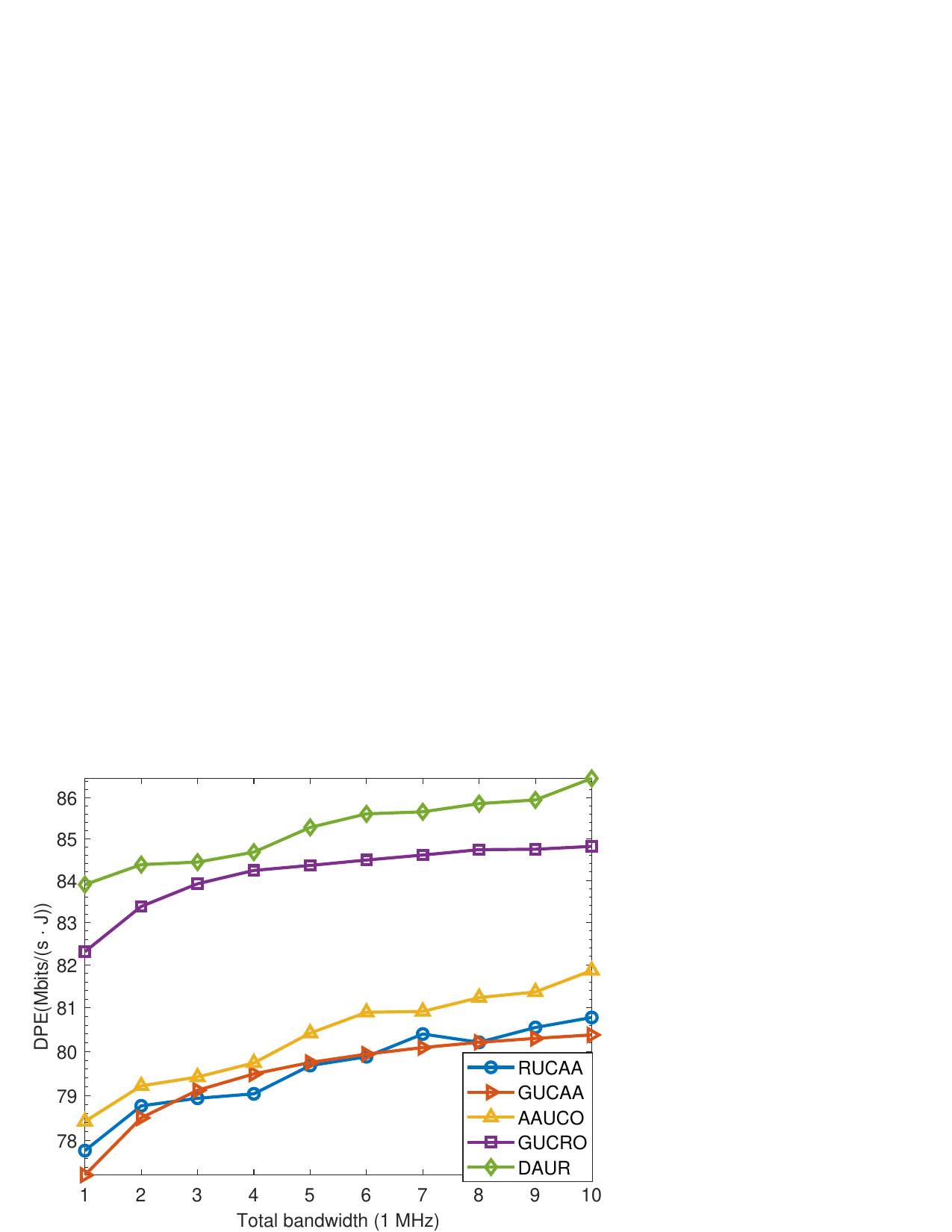}\label{fig.comparison bandwidth}}
\subfigure[Performance comparisons of different server frequency.]{\includegraphics[width=.33\textwidth]{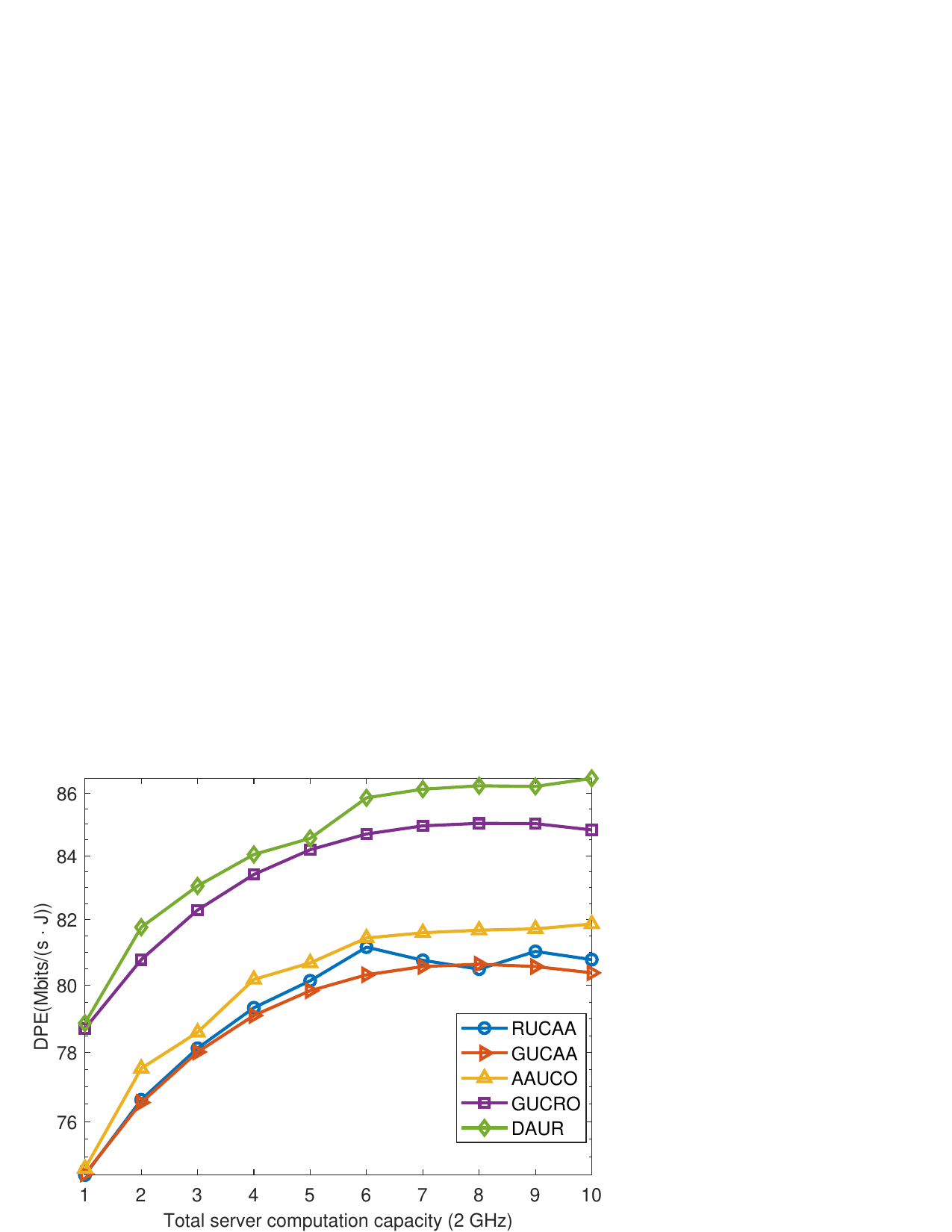}\label{fig.comparison server frequency}}
\subfigure[Performance comparisons of different user frequency.]{\includegraphics[width=.33\textwidth]{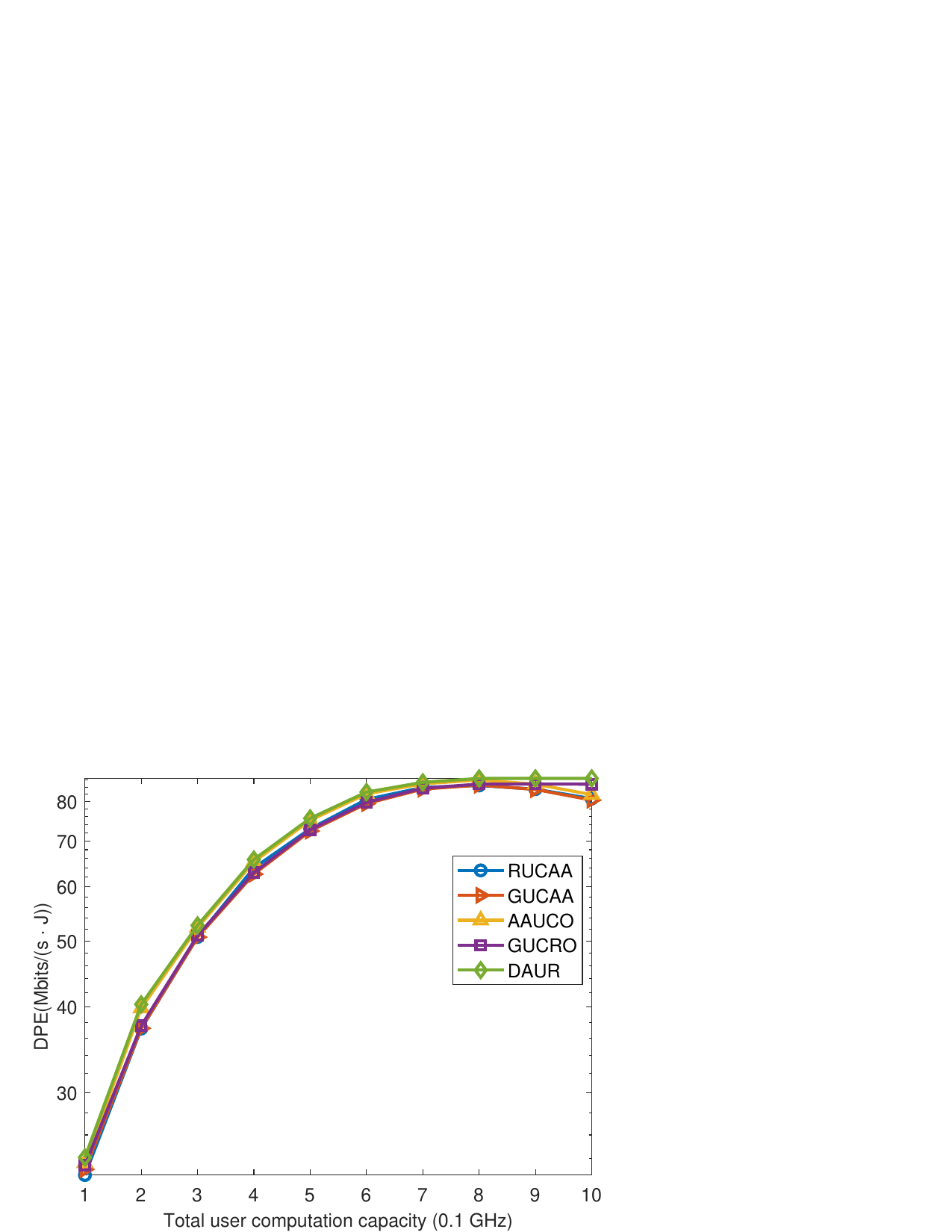}\label{fig.comparison user frequency}}
\subfigure[Performance comparisons of different user transmit power.]{\includegraphics[width=.33\textwidth]{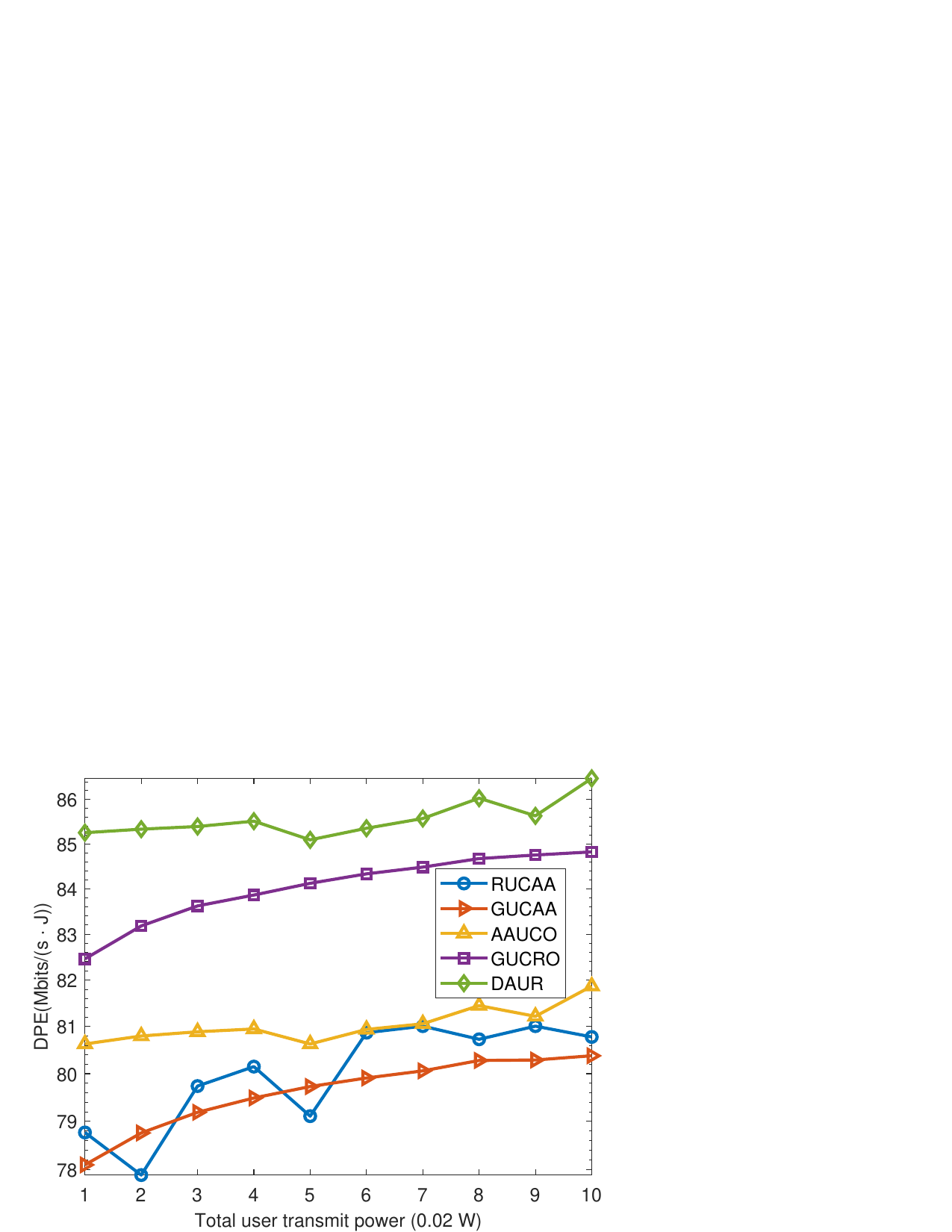}\label{fig.comparison user transmit power}} 
\subfigure[Performance comparisons of different weight ratios.]{\includegraphics[width=.33\textwidth]{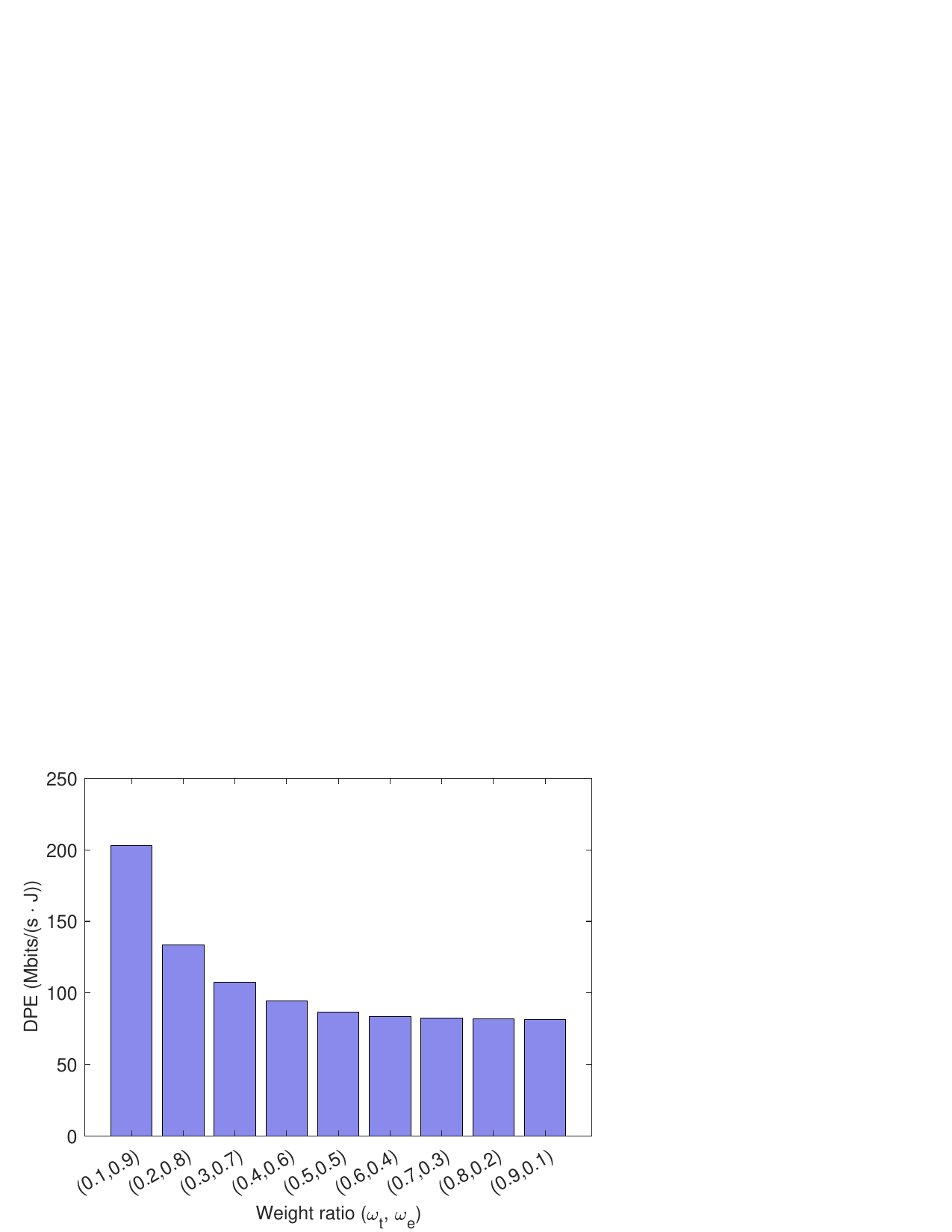}\label{fig.comparison omega}} 
\subfigure[Performance comparisons of different preference weights.]{\includegraphics[width=.33\textwidth]{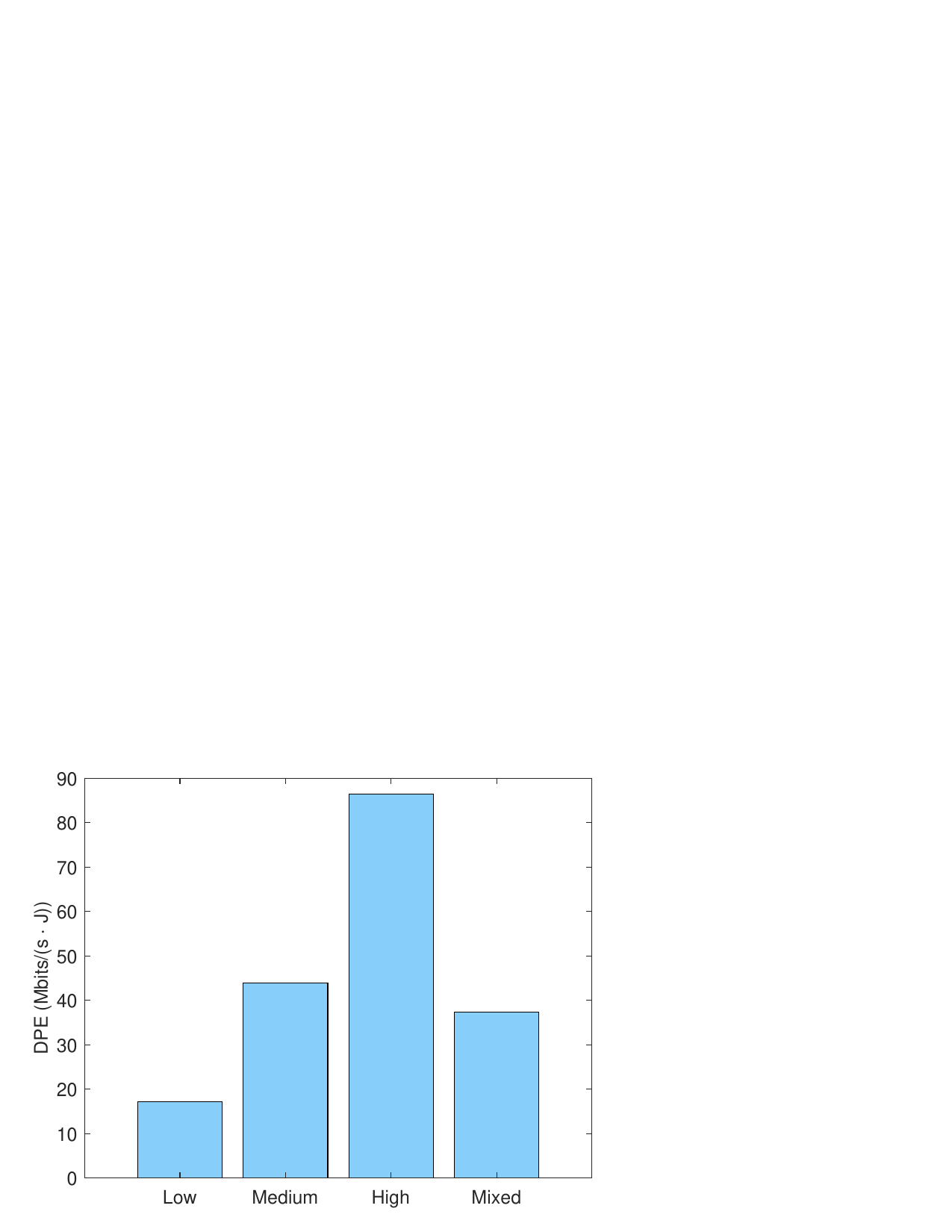}\label{fig.comparison preference}} 
\vspace{-10pt}
\caption{Performance comparison of different communication and computation resources and heterogeneous settings.}
\end{figure*}
\section{Simulation Results}\label{sec.simulation_results}
In this section, we give the default simulation settings and then present the numerical results and analysis.

\textbf{Default settings.}
We employ a circular network topology with a radius of 1000 meters, comprising ten users and two servers randomly positioned. The large-scale fading between user $n$ and server $m$, denoted as $h_{n,m}$, follows the model $128.1+37.6 \log_{10}d_{n,m}$, with $d_{n,m}$ representing the Euclidean distance between them. Rayleigh fading is assumed for small-scale fading. The Gaussian noise power spectral density $\sigma^2$, is set at $-134$dBm, and each server's total bandwidth $b_m$, is 10 MHz. Mobile users have a maximum transmit power $p_{n}$, of 0.2 W, and their maximum computational capacity $f_{n}$, is 1 GHz, while servers' $f_{m}$, have a capacity of 20 GHz. Both mobile users and servers process data at 279.62 CPU cycles per bit ($\eta_n$ and $\eta_m$), and servers require 737.5 CPU cycles per bit for blockchain block generation. The effective switched capacitance for both $\kappa_n$ and $\kappa_m$ is $10^{-27}$. Data sizes for mobile users vary between 500 KB and 2000 KB, determined by scaling uniformly distributed pseudorandom values within this range. The block size $S_b$ is 8 MB, and the data rate for wired server links $R_{m}$ is 15 Mbps. The ratio of data size change between MEC and blockchain tasks $\omega_b$ is fixed at 1. For delay and energy consumption parameters ($\omega_t$ and $\omega_e$), values of 0.5 each are used. DPE preference weights, $c_n$ and $c_{n,m}$ are set at $1 \times \frac{1}{5\times10^{5}}$ to maintain DPE values between 1 and 100. The Mosek CVX optimization tool is utilized for the simulations.

\textbf{Convergence of the proposed DAUR algorithm.}
In Fig. \ref{fig.convergence of FP} and Fig. \ref{fig.convergence of QCQP}, when examining the FP and QCQP methods individually, they demonstrate convergence within just 8 iterations. In Table \ref{tab:DAURConvergence}, the DAUR algorithm showcases effective convergence across various user-server configurations in a blockchain-enabled Metaverse wireless communication system. For the (10, 2) user-server setup, it achieves convergence in just one total outer iteration, utilizing one QCQP round and three FP rounds. The QCQP method takes 30 iterations and 0.17 seconds, while the FP method cumulatively takes 76 iterations (computed from 26+25+25) and 0.03 seconds, leading to a total time of 0.20 seconds. In a more complex (20, 3) configuration, the algorithm still maintains its efficiency with one total outer iteration, one QCQP round of 40 iterations (11.38 seconds), and two FP rounds totaling 57 iterations (0.10 seconds), culminating in 11.48 seconds overall. For the largest tested setup of (30, 4), the DAUR algorithm consistently exhibits its robust convergence capability, completing within one outer iteration, one QCQP round of 58 iterations (276.42 seconds), and one FP round of 30 iterations (0.08 seconds), summing up to 276.50 seconds in total.

\textbf{Performance comparison with other baselines.}
We choose the following baselines --- RUCAA: random user connection with average resource
allocation; GUCAA: greedy user connection with average resource
allocation; AAUCO: average resource allocation with user connection
optimization; GUCRO: greedy user connection with resource allocation optimization. Note that user connection optimization and resource allocation refer to the QCQP and FP methods, respectively.

The DAUR algorithm outperforms other baseline methods in DPE performance in Fig. \ref{fig.DAUR performance}, achieving 86.48 M bits/(s $\cdot$ J). This surpasses RUCAA (80.78), GUCAA (80.38), AAUCO (81.87), and GUCRO (84.82), demonstrating its superior efficiency in resource allocation and user connection optimization, attributed to its effective integration of QCQP and FP methods.

\textbf{Performance comparison of different communication and computation resources.}
Across the server available bandwidth range of 1 to 10 MHz, the DAUR algorithm consistently outshines other methods in DPE performance in Fig. \ref{fig.comparison bandwidth}, starting at 83.90 and leading a peak of 86.48, notably surpassing other methods in efficiency enhancement. With the increase in communication bandwidth, the DPE of each method shows an increasing trend, which indicates that under the existing configuration, DPE is positively correlated with communication bandwidth.

In the simulation with varying server computation capacities from 2 GHz to 20 GHz in Fig. \ref{fig.comparison server frequency}, the DAUR algorithm exhibits superior DPE performance, steadily increasing from 78.87 to 86.48. This outperformance is consistent across all capacity levels, surpassing the other methods.
% : RUCAA (74.49 to 80.78), GUCAA (74.52 to 80.38), AAUCO (74.67 to 81.87), and GUCRO (78.71 to 84.82). 
% DAUR's trend showcases its effective adaptation and optimization in response to increasing computational resources, highlighting its robustness and efficiency in comparison to other baseline strategies.
In the simulation where user computing capacity varies from 0.1 GHz to 1 GHz in Fig. \ref{fig.comparison user frequency}, the DAUR algorithm consistently demonstrates the highest DPE performance, increasing from 24.10 to 86.48. This trend indicates DAUR's effective handling of increased computing capacities, outperforming other methods at every step.
% : RUCAA improves from 22.71 to 80.78, GUCAA from 23.15 to 80.38, AAUCO from 23.66 to 81.87, and GUCRO from 23.48 to 84.82. 
% Notably, DAUR maintains its lead, particularly in higher computing capacity scenarios, showcasing its superior adaptability and resource optimization capabilities in comparison to the other baseline strategies.
In summary, the DAUR algorithm outperforms other baselines in varying computing capacities, demonstrating its effective DPE performance and resource management in both server and user capacity simulations.

In Fig. \ref{fig.comparison user transmit power} where the maximum transmit power of mobile users ($p_n$) is adjusted from 0.02 W to 0.2 W, the DAUR algorithm demonstrates superior and consistent DPE performance, starting at 85.25 and peaking at 86.48. 
% This robust performance clearly surpasses that of other methods across the entire range of transmit power values.
% Specifically, RUCAA's performance fluctuates slightly but stays within the 78.77 to 81.01 range. GUCAA shows a more stable increase, from 78.10 to 80.38. AAUCO begins at 80.63 and steadily grows to 81.87, while GUCRO ranges from 82.46 to 84.82. Throughout the variations in transmit power, DAUR not only maintains the lead but also exhibits a consistent upward trend.
From Figs. \ref{fig.comparison server frequency}, \ref{fig.comparison user frequency}, and \ref{fig.comparison user transmit power}, we know that with the increase of computation capacity or user transmit power, there may be a decrease in DPE because both DAUR and baseline methods can only obtain sub-optimal solutions. In this case, we can get a better solution by changing the initial point. Second, DAUR's approach is still better than the baselines using partial optimization, indicating that changes in network communication and computing resources require us to re-optimize user connectivity and resource allocation simultaneously.

\textbf{Performance comparison of heterogeneous settings.}
When adjusting the balance between delay and energy consumption weights ($\omega_t$ and $\omega_e$) in the DAUR algorithm, a notable pattern emerges in DPE values in Fig. \ref{fig.comparison omega}. As the emphasis shifts progressively from delay-centric $(0.1, 0.9)$ to energy-centric $(0.9, 0.1)$, there's an initial steep decrease in DPE from 203.15 to a balanced point of 86.48 at $(0.5, 0.5)$, followed by a more gradual reduction, eventually reaching 81.3566. This trend underscores the algorithm's sensitivity to the trade-off between delay and energy efficiency, significantly influencing its performance.

We consider four preference parameter setting cases: 1. low preference: set $c_n$ and $c_{n,m}$ as $0.2/(5\times10^{5})$;
% $\frac{0.2}{5\times10^{5}}$;
2. medium preference: set $c_n$ and $c_{n,m}$ as $0.5/(5\times10^{5})$;
% $\frac{0.5}{5\times10^{5}}$; 
3. high preference: set $c_n$ and $c_{n,m}$ as $1/(5\times10^{5})$;
% $\frac{1}{5\times10^{5}}$; 
4. mixed preference: set $c_n$ and $c_{n,m}$ as $a/(5\times10^{5})$,
% $\frac{a}{5\times10^{5}}$,
where $a$ is a random value uniformly taken from [0, 1]. Fig. \ref{fig.comparison preference} presents the results of the DAUR algorithm under different preference settings for a network with 10 users and 2 servers. In the low preference scenario, the DPE is at its lowest (17.18), indicating minimal efficiency. The medium preference setting improves to a DPE of 43.83, while the high preference achieves the best performance at 86.48, showing the highest efficiency. The mixed preference setting yields a moderate DPE of 37.42, indicating balanced resource distribution. These results show how different preference intensities directly impact the system's data processing efficiency.

\section{Conclusion and Future Direction}\label{sec.conclusion}
In conclusion, this paper presents a potential metric in blockchain-empowered Metaverse wireless communication systems, the concept of data processing efficiency (DPE), and develops a novel DAUR algorithm for optimizing user association and resource allocation. Our approach can transform complex DPE optimization into convex problems, demonstrating superior efficiency over traditional methods through extensive numerical analysis. The potential application of this work is that the Metaverse server helps users conduct non-fungible token (NFT) tasks and considers the maximization of data processing efficiency. In the current work, the algorithm is still a centralized method, which may increase the risk of exposing the users' and servers' private information. We will consider a decentralized way to operate the DAUR algorithm in the future.

\section*{Acknowledgement}

This research is supported by the National Research Foundation, Singapore under its Strategic Capability Research Centres Funding Initiative, Singapore MOE AcRF Tier 1 RT5/23, Tier 1 RG90/22, and NTU-WASP Joint Project. Any opinions, findings and conclusions or recommendations expressed in this material are those of the author(s) and do not reflect the views of National Research Foundation, Singapore.

\bibliographystyle{ACM-Reference-Format}
\bibliography{ref}

\clearpage
\begin{appendices}
\section{Proof of \textbf{Lemma \ref{lemma_p1top2}}}\label{append_lemma_p1top2}
\begin{proof}
Define new auxiliary variables $\vartheta_n^{(u)}$ and $\vartheta_{n,m}^{(s)}$. Let $\vartheta_n^{(u)} \leq \frac{c_n(1-\varphi_n)d_n}{cost^{(u)}_n}$ and $\vartheta_{n,m}^{(s)} \leq \frac{c_{n,m}x_{n,m}\varphi_nd_n}{cost^{(s)}_{n,m}}$. According to Equations (\text{\ref{eq.cost_u}}) and (\text{\ref{eq.cost_s}}), we obtain that
\begin{talign}
    &cost^{(u)}_n = \omega_t T^{(up)}_n + \omega_e E^{(up)}_n, cost^{(u)}_n \leq \frac{c_n(1-\varphi_n)d_n}{\vartheta_n^{(u)}} \nonumber \\
    &\Rightarrow \omega_t T^{(up)}_n + \omega_e E^{(up)}_n \leq \frac{c_n(1-\varphi_n)d_n}{\vartheta_n^{(u)}} \nonumber \\
    &\Rightarrow \omega_t T^{(up)}_n + \omega_e \kappa_n(1-\varphi_n)d_n\eta_n(\psi_nf_n)^2 - \frac{c_n(1-\varphi_n)d_n}{\vartheta_n^{(u)}}\leq 0, \label{eq.cost_u2}
\end{talign}
\begin{talign}
    &cost^{(s)}_{n,m} = \omega_t (T^{(ut)}_{n,m} + T^{(sp)}_{n,m} + T^{(sg)}_{n,m} + T^{(bp)}_{n,m} + T^{(sv)}_{n,m}) \nonumber \\ 
    &+ \omega_e (E^{(ut)}_{n,m} + E^{(sp)}_{n,m} + E^{(sg)}_{n,m}),  cost^{(s)}_{n,m} \leq \frac{c_{n,m}x_{n,m}\varphi_nd_n}{\vartheta_{n,m}^{(s)}} \nonumber \\
    &\Rightarrow \omega_t (T^{(ut)}_{n,m} + T^{(sp)}_{n,m} + T^{(sg)}_{n,m} + T^{(bp)}_{n,m} + T^{(sv)}_{n,m}) \nonumber \\ 
    &+ \omega_e (E^{(ut)}_{n,m} + E^{(sp)}_{n,m} + E^{(sg)}_{n,m})\leq \frac{c_{n,m}x_{n,m}\varphi_nd_n}{\vartheta_{n,m}^{(s)}} \nonumber \\
    &\Rightarrow \omega_t(T^{(ut)}_{n,m} + T^{(sp)}_{n,m} + T^{(sg)}_{n,m} + T^{(bp)}_{n,m} + T^{(sv)}_{n,m}) \nonumber \\ 
    &+ \omega_e \{\frac{x_{n,m}\rho_np_n\varphi_nd_n}{r_{n,m}} + \kappa_mx_{n,m}\varphi_nd_n\eta_m(\gamma_{n,m}\zeta_{n,m}f_m)^2 \nonumber \\ 
    &+\!\! \kappa_mx_{n,m}\varphi_nd_n\eta_m \omega_b [(1\!\!-\!\!\gamma_{n,m})\zeta_{n,m}f_m]^2\} \!\!-\!\! \frac{c_{n,m}x_{n,m}\varphi_nd_n}{\vartheta_{n,m}^{(s)}} \leq 0,\label{eq.cost_s2}
\end{talign}
where ``$\Rightarrow$'' means ``imply''. Next, we leverage two new auxiliary variables $T^{(u)}_n$ and $T^{(s)}_{n,m}$ to replace the delay in Equations (\text{\ref{eq.cost_u2}}) and (\text{\ref{eq.cost_s2}}). With the introduction of these two variables, there would be four more new constraints:
\begin{talign}
    &\omega_t T^{(u)}_n + \omega_e \kappa_n(1-\varphi_n)d_n\eta_n(\psi_nf_n)^2 - \frac{c_n(1-\varphi_n)d_n}{\vartheta_n^{(u)}}\leq 0,\\
    &\omega_t T^{(s)}_{n,m} \!+ \!\omega_e \{\frac{x_{n,m}\rho_np_n\varphi_nd_n}{r_{n,m}} \!+ \!\kappa_mx_{n,m}\varphi_nd_n\eta_m(\gamma_{n,m}\zeta_{n,m}f_m)^2 \nonumber \\ 
    &+ \kappa_mx_{n,m}\varphi_nd_n\eta_m \omega_b [(1-\gamma_{n,m})\zeta_{n,m}f_m]^2\} \nonumber \\
    &- \frac{c_{n,m}x_{n,m}\varphi_nd_n}{\vartheta_{n,m}^{(s)}} \leq 0, \\
    &T^{(up)}_n \leq T^{(u)}_n, \\
    &T^{(ut)}_{n,m} + T^{(sp)}_{n,m} + T^{(sg)}_{n,m} + T^{(bp)}_{n,m} + T^{(sv)}_{n,m} \leq T^{(s)}_{n,m}.
\end{talign}
Let $\bm{T}:=\{\bm{T^{(u)}},\bm{T^{(s)}}\}$ and $\bm{\vartheta}:=\{\bm{\vartheta^{(u)}},\bm{\vartheta^{(s)}}\}$. Then, the Problem $\mathbb{P}_{1}$ can be transformed into the \mbox{Problem $\mathbb{P}_{2}$}.

Thus, \textbf{Lemma \ref{lemma_p1top2}} is proven.
\end{proof}

\section{Proof of \textbf{Lemma \ref{lemma_p2top3}}}\label{append_lemma_p2top3}
\begin{proof}
Here we analyze part of the KKT condition of Problem $\mathbb{P}_{2}$ to facilitate our subsequent analysis. We introduce non-negative variables $\alpha^{(u)}_n$ and $\alpha^{(s)}_{n,m}$ as the multipliers. Let $\bm{\alpha^{(u)}}:=[\alpha^{(u)}_n]|_{n\in\mathcal{N}}$, $\bm{\alpha^{(s)}}:=[\alpha^{(s)}_{n,m}]|_{n\in\mathcal{N},m\in\mathcal{M}}$, and $\bm{\alpha}:=\{\bm{\alpha^{(u)}}, \bm{\alpha^{(s)}}\}$. The Lagrangian function is given as follows:
\begin{talign}
    &L_{\mathbb{P}_{2}}(\bm{x},\bm{\varphi},\bm{\gamma},\bm{\phi},\bm{\rho},\bm{\zeta},\bm{\psi}, \bm{\vartheta}, \bm{T}, \bm{\alpha}) \nonumber \\
    &= - \sum_{n \in \mathcal{N}} \vartheta_n^{(u)} -\sum_{n \in \mathcal{N}} \sum_{m \in \mathcal{M}} \vartheta_{n,m}^{(s)} \nonumber \\
    &+ \sum_{n \in \mathcal{N}} \alpha^{(u)}_n \cdot [\vartheta_n^{(u)}cost^{(u)}_n - c_n(1 - \varphi_n)d_n] \nonumber \\
    &+ \sum_{n \in \mathcal{N}}\sum_{m \in \mathcal{M}}\alpha^{(s)}_{n,m} \cdot [\vartheta_{n,m}^{(s)}cost^{(s)}_{n,m} - c_{n,m}x_{n,m}\varphi_n d_n] \nonumber \\
    &+ \hat{L}_{\mathbb{P}_{2}},
\end{talign}
where $\hat{L}_{\mathbb{P}_{2}}$ is the remaining Lagrangian terms that we don't care about. Next, we analyze some stationarity and complementary slackness properties of $L_{\mathbb{P}_{2}}$.

\textbf{Stationarity:}
\begin{talign}
    &\frac{\partial L_{\mathbb{P}_{2}}}{\partial \vartheta_n^{(u)}} = -1 + \alpha^{(u)}_n cost^{(u)}_n=0, \forall n \in \mathcal{N},\\
    &\frac{\partial L_{\mathbb{P}_{2}}}{\partial \vartheta_{n,m}^{(s)}} = -1 + \alpha^{(s)}_{n,m} cost^{(s)}_{n,m}=0, \forall n \in \mathcal{N}, \forall m \in \mathcal{M}.
\end{talign}

\textbf{Complementary slackness:}
\begin{talign}
    &\alpha^{(u)}_n \cdot [\vartheta_n^{(u)}cost^{(u)}_n - c_n(1 - \varphi_n)d_n] = 0,\forall n \in \mathcal{N},\\
    &\alpha^{(s)}_{n,m} \cdot (\vartheta_{n,m}^{(s)}cost^{(s)}_{n,m} - c_{n,m}x_{n,m}\varphi_n d_n) = 0, \nonumber \\
    &\forall n \in \mathcal{N}, \forall m \in \mathcal{M}.
\end{talign}
Thus, at KKT points of Problem $\mathbb{P}_{2}$, we can obtain that 
\begin{talign}
    &\alpha^{(u)}_n = \frac{1}{cost^{(u)}_n},\\
    &\alpha^{(s)}_{n,m} = \frac{1}{cost^{(s)}_{n,m}},\\
    &\vartheta_n^{(u)} = \frac{c_n(1 - \varphi_n)d_n}{cost^{(u)}_n},\\
    &\vartheta_{n,m}^{(s)} = \frac{c_{n,m}x_{n,m}\varphi_n d_n}{cost^{(s)}_{n,m}}.
\end{talign}
Based on the above discussion, Problem $\mathbb{P}_{2}$ can be transformed into Problem $\mathbb{P}_{3}$.

\textbf{Lemma \ref{lemma_p2top3}} holds.
\end{proof}

\section{Proof of \textbf{Lemma \ref{lemma_p4top5}}}\label{append_lemma_p4top5}
\begin{proof}
We introduce one auxiliary variable $\upsilon^{(s)}_{n,m}$, where
\begin{talign}
    \upsilon^{(s)}_{n,m} = \frac{1}{2x_{n,m}\rho_np_n\varphi_nd_nr_{n,m}}.
\end{talign}
Then, $cost^{(s)}_{n,m}$ can be rewritten as:
\begin{talign}
    &\widetilde{cost}^{(s)}_{n,m} = \omega_t T^{(s)}_{n,m} + \omega_e \{(x_{n,m}\rho_np_n\varphi_nd_n)^2\upsilon^{(s)}_{n,m} + \frac{1}{4r_{n,m}^2\upsilon^{(s)}_{n,m}}\nonumber \\ 
    &+ \kappa_mx_{n,m}\varphi_nd_n\eta_m(\gamma_{n,m}\zeta_{n,m}f_m)^2 \nonumber \\ 
    &+ \kappa_mx_{n,m}\varphi_nd_n\eta_m \omega_b [(1-\gamma_{n,m})\zeta_{n,m}f_m]^2\}. \label{eq.vartheta_s_new}
\end{talign}
Since $x_{n,m}, \varphi_n, \gamma_{n,m}, \upsilon^{(s)}_{n,m}$ is given, Equation (\text{\ref{eq.vartheta_s_new}}) is a convex function.

Let $\chi(\rho_n) = x_{n,m}\rho_np_n\varphi_nd_n$ and $\varsigma(\phi_{n,m},\rho_n) = r_{n,m}$, where $r_{n,m} = \phi_{n,m}b_m\log_2(1 + \frac{\rho_np_ng_{n,m}}{\sigma^2\phi_{n,m}b_m})$. It's easy to know that $\chi(\rho_n)$ is convex of $\rho_n$ and $\varsigma(\phi_{n,m},\rho_n)$ is jointly concave of $(\phi_{n,m},\rho_n)$. We define two following functions:
\begin{talign}
    &\mathcal{F}(\rho_n,\phi_{n,m},\zeta_{n,m},T^{(s)}_{n,m}) = \omega_t T^{(s)}_{n,m} + \omega_e \{\frac{\chi(\rho_n)}{\varsigma(\phi_{n,m},\rho_n)} \nonumber \\ 
    &+ \kappa_mx_{n,m}\varphi_nd_n\eta_m(\gamma_{n,m}\zeta_{n,m}f_m)^2 \nonumber \\ 
    &+ \kappa_mx_{n,m}\varphi_nd_n\eta_m \omega_b [(1-\gamma_{n,m})\zeta_{n,m}f_m]^2\},
\end{talign}
\begin{talign}
    &\mathcal{G}(\rho_n,\phi_{n,m},\zeta_{n,m},T^{(s)}_{n,m}) = \omega_t T^{(s)}_{n,m} \nonumber \\ 
    &+ \omega_e \{\chi(\rho_n)^2\upsilon^{(s)}_{n,m} + \frac{1}{4\varsigma(\phi_{n,m},\rho_n)^2\upsilon^{(s)}_{n,m}}\nonumber \\ 
    &+ \kappa_mx_{n,m}\varphi_nd_n\eta_m(\gamma_{n,m}\zeta_{n,m}f_m)^2 \nonumber \\ 
    &+ \kappa_mx_{n,m}\varphi_nd_n\eta_m \omega_b [(1-\gamma_{n,m})\zeta_{n,m}f_m]^2\}.
\end{talign}
The partial derivative of $T^{(s)}_{n,m}$ is given by
\begin{talign}
    \frac{\partial\mathcal{F}(\rho_n,\phi_{n,m},\zeta_{n,m},\vartheta^{(s)}_{n,m},T^{(s)}_{n,m})}{\partial T^{(s)}_{n,m}} = \omega_t,\\
    \frac{\partial\mathcal{G}(\rho_n,\phi_{n,m},\zeta_{n,m},\vartheta^{(s)}_{n,m},T^{(s)}_{n,m})}{\partial T^{(s)}_{n,m}} = \omega_t.
\end{talign}
We can easily get that
\begin{talign}
    \frac{\partial\mathcal{F}(\rho_n,\phi_{n,m},\zeta_{n,m},\vartheta^{(s)}_{n,m},T^{(s)}_{n,m})}{\partial T^{(s)}_{n,m}} = \frac{\partial\mathcal{G}(\rho_n,\phi_{n,m},\zeta_{n,m},\vartheta^{(s)}_{n,m},T^{(s)}_{n,m})}{\partial T^{(s)}_{n,m}}.
\end{talign}
The partial derivative of $\zeta_{n,m}$ is
\begin{talign}
    &\frac{\partial \mathcal{F}(\rho_n,\phi_{n,m},\zeta_{n,m},\vartheta^{(s)}_{n,m},T^{(s)}_{n,m})}{\partial \zeta_{n,m}} = 2\kappa_mx_{n,m}\varphi_nd_n\eta_m\zeta_{n,m}(\gamma_{n,m}f_m)^2 \nonumber \\
    &+ 2\kappa_mx_{n,m}\varphi_nd_n\eta_m \omega_b \zeta_{n,m}[(1-\gamma_{n,m})f_m]^2,\\
    &\frac{\partial \mathcal{G}(\rho_n,\phi_{n,m},\zeta_{n,m},\vartheta^{(s)}_{n,m},T^{(s)}_{n,m})}{\partial \zeta_{n,m}} = 2\kappa_mx_{n,m}\varphi_nd_n\eta_m\zeta_{n,m}(\gamma_{n,m}f_m)^2 \nonumber \\
    &+ 2\kappa_mx_{n,m}\varphi_nd_n\eta_m \omega_b \zeta_{n,m}[(1-\gamma_{n,m})f_m]^2.
\end{talign}
We find that
\begin{talign}
    \frac{\partial \mathcal{F}(\rho_n,\phi_{n,m},\zeta_{n,m},\vartheta^{(s)}_{n,m},T^{(s)}_{n,m})}{\partial \zeta_{n,m}} = \frac{\partial \mathcal{G}(\rho_n,\phi_{n,m},\zeta_{n,m},\vartheta^{(s)}_{n,m},T^{(s)}_{n,m})}{\partial \zeta_{n,m}}.
\end{talign}
The partial derivative of $\rho_n$ is shown as follows:
\begin{talign}
    &\frac{\partial \mathcal{F}(\rho_n,\phi_{n,m},\zeta_{n,m},\vartheta^{(s)}_{n,m},T^{(s)}_{n,m})}{\partial \rho_n} \nonumber \\ 
    &= \omega_e\frac{\frac{\partial \chi(\rho_n)}{\partial \rho_n}\varsigma(\phi_{n,m},\rho_n) - \chi(\rho_n) \frac{\partial \varsigma(\phi_{n,m},\rho_n)}{\partial \rho_n}}{\varsigma(\phi_{n,m},\rho_n)^2},\\
    &\frac{\partial \mathcal{G}(\rho_n,\phi_{n,m},\zeta_{n,m},\vartheta^{(s)}_{n,m},T^{(s)}_{n,m})}{\partial \rho_n} \nonumber \\
    &= \omega_e (2\upsilon^{(s)}_{n,m} \chi(\rho_n) \frac{\partial \chi(\rho_n)}{\partial \rho_n} - \frac{1}{2\upsilon^{(s)}_{n,m} \varsigma(\phi_{n,m},\rho_n)^3}\frac{\partial \varsigma(\phi_{n,m},\rho_n)}{\partial \rho_n}).
\end{talign}
When $\upsilon^{(s)}_{n,m} = \frac{1}{2x_{n,m}\rho_np_n\varphi_nd_nr_{n,m}}$, we can get that
\begin{talign}
    \frac{\partial \mathcal{F}(\rho_n,\phi_{n,m},\zeta_{n,m},\vartheta^{(s)}_{n,m},T^{(s)}_{n,m})}{\partial \rho_n} = \frac{\partial \mathcal{G}(\rho_n,\phi_{n,m},\zeta_{n,m},\vartheta^{(s)}_{n,m},T^{(s)}_{n,m})}{\partial \rho_n}.
\end{talign}
The partial derivative of $\phi_{n,m}$ is
\begin{talign}
    &\frac{\partial \mathcal{F}(\rho_n,\phi_{n,m},\zeta_{n,m},\vartheta^{(s)}_{n,m},T^{(s)}_{n,m})}{\partial \phi_{n,m}} = -\frac{\omega_e\chi(\rho_n)}{\varsigma(\phi_{n,m},\rho_n)^2}\frac{\partial \varsigma(\phi_{n,m},\rho_n)}{\partial \phi_{n,m}},\\
    &\frac{\partial \mathcal{G}(\rho_n,\phi_{n,m},\zeta_{n,m},\vartheta^{(s)}_{n,m},T^{(s)}_{n,m})}{\partial \phi_{n,m}} = -\frac{\omega_e}{2\upsilon^{(s)}_{n,m}\varsigma(\phi_{n,m},\rho_n)^3}\frac{\partial \varsigma(\phi_{n,m},\rho_n)}{\partial \phi_{n,m}}.
\end{talign}
When $\upsilon^{(s)}_{n,m} = \frac{1}{2x_{n,m}\rho_np_n\varphi_nd_nr_{n,m}}$, we can obtain that
\begin{talign}
    \frac{\partial \mathcal{F}(\rho_n,\phi_{n,m},\zeta_{n,m},\vartheta^{(s)}_{n,m},T^{(s)}_{n,m})}{\partial \phi_{n,m}} =\frac{\partial \mathcal{G}(\rho_n,\phi_{n,m},\zeta_{n,m},\vartheta^{(s)}_{n,m},T^{(s)}_{n,m})}{\partial \phi_{n,m}}.
\end{talign}
Based on the above discussion, we can obtain that
\begin{talign}
     \frac{\partial \mathcal{F}(\rho_n,\phi_{n,m},\zeta_{n,m},\vartheta^{(s)}_{n,m},T^{(s)}_{n,m})}{\partial (\rho_n,\phi_{n,m},\zeta_{n,m},\vartheta^{(s)}_{n,m},T^{(s)}_{n,m})} =\frac{\partial \mathcal{G}(\rho_n,\phi_{n,m},\zeta_{n,m},\vartheta^{(s)}_{n,m},T^{(s)}_{n,m})}{\partial (\rho_n,\phi_{n,m},\zeta_{n,m},\vartheta^{(s)}_{n,m},T^{(s)}_{n,m})}.
\end{talign}
Besides, it's easy to know that 
\begin{talign}
    &\mathcal{F}(\rho_n,\phi_{n,m},\zeta_{n,m},\vartheta^{(s)}_{n,m},T^{(s)}_{n,m})\nonumber \\ 
    &= \mathcal{G}(\rho_n,\phi_{n,m},\zeta_{n,m},\vartheta^{(s)}_{n,m},T^{(s)}_{n,m}),
\end{talign}
when $\upsilon^{(s)}_{n,m} = \frac{1}{2x_{n,m}\rho_np_n\varphi_nd_nr_{n,m}}$. Therefore, the function of $\mathcal{F}(\cdot)$ is the same as that of $\mathcal{G}(\cdot)$. Let $\bm{\upsilon^{(s)}} := [\upsilon^{(s)}_{n,m}|_{\forall n \in \mathcal{N},\forall m \in \mathcal{M}}]$. Problem $\mathbb{P}_{4}$ can be transformed into Problem $\mathbb{P}_{5}$. In Problem $\mathbb{P}_{5}$, if given $\bm{\upsilon^{(s)}}$, it would be a convex optimization problem.

At the $i$-th iteration, if we first fix $\bm{\upsilon}^{\bm{(s)}(i-1)}$, Problem $\mathbb{P}_{5}$ would be a concave optimization problem. Then we optimize $\bm{\phi}^{(i)},\bm{\rho}^{(i)},\bm{\zeta}^{(i)}$, $\bm{\psi}^{(i)}, \bm{T}^{(i)}$. After we obtain the results of them, we then update $\bm{\upsilon}^{\bm{(s)}(i)}$ according to those results. Because the alternative optimization of the Problem $\mathbb{P}_{5}$ is non-decreasing, as $i\rightarrow\infty$, we can finally obtain the optimal solutions of Problem $\mathbb{P}_{5}$ (i.e., $\bm{\phi}^{(\star)}$, $\bm{\rho}^{(\star)}$, $\bm{\zeta}^{(\star)}$, $\bm{\psi}^{(\star)}$, $\bm{T}^{(\star)}$, $\bm{\upsilon}^{\bm{(s)}(\star)}$). We know that $\upsilon^{(s)}_{n,m} = \frac{1}{2x_{n,m}\rho_np_n\varphi_nd_nr_{n,m}}$. Thus, with $\bm{\upsilon}^{\bm{(s)}(\star)}$, we can find $\bm{\phi}^{(\star)},\bm{\rho}^{(\star)},\bm{\zeta}^{(\star)},\bm{\psi}^{(\star)}, \bm{T}^{(\star)}$, which is a stationary point of Problem $\mathbb{P}_{5}$.

\textbf{Lemma \ref{lemma_p4top5}} is proven.
\end{proof}

\section{Proof of \textbf{Lemma \ref{lemma_gamma}}}\label{append_lemma_gamma}
\begin{proof}
    We first analyze $\vartheta_{n,m}^{(s)}cost_{n,m}^{(s)}$, and write the explicit expression of it:
\begin{talign}
    &\vartheta_{n,m}^{(s)}cost_{n,m}^{(s)} \nonumber \\
    &= \vartheta_{n,m}^{(s)}\omega_t T^{(s)}_{n,m} + \vartheta_{n,m}^{(s)}\omega_e \frac{\rho_n p_n d_n}{r_{n,m}}x_{n,m}\varphi_n \nonumber \\
    &+ \vartheta_{n,m}^{(s)}\omega_e \{\kappa_m d_n \eta_m\zeta_{n,m}^2 f_m^2 [\gamma_{n,m}^2 + \omega_b (1-\gamma_{n,m})^2]x_{n,m}\varphi_n\},
\end{talign}
where $\gamma_{n,m}^2 + \omega_b (1-\gamma_{n,m})^2$ is independent of the others except $T^{(s)}_{n,m}$. when $\gamma_{n,m} = \frac{\omega_b}{1+\omega_b}$, $\gamma_{n,m}^2 + \omega_b (1-\gamma_{n,m})^2$ takes the minimum value. In $T^{(s)}_{n,m}$, the terms $T^{(sp)}_{n,m}$, $T^{(sg)}_{n,m}$, and $T^{(sv)}_{n,m}$ are related to $\gamma_{n,m}$. Since $T^{(sv)}_{n,m}$ is generally much smaller than $T^{(sp)}_{n,m}$ and $T^{(sg)}_{n,m}$, we only focus on $T^{(sp)}_{n,m}$ and $T^{(sg)}_{n,m}$ here. It's easy to know that when $\gamma_{n,m} = \frac{1}{1+\omega_b}$, $T^{(sp)}_{n,m} + T^{(sg)}_{n,m}$ takes the minimum value according to basic inequality. Following are the detailed steps:
\begin{talign}
    T^{(sp)}_{n,m} + T^{(sg)}_{n,m} \geq 2\sqrt{T^{(sp)}_{n,m} T^{(sg)}_{n,m}},
\end{talign}
where if and only if $T^{(sp)}_{n,m} =T^{(sg)}_{n,m}$, ``$=$'' can be obtained.
\begin{talign}
    &\quad \quad T^{(sp)}_{n,m} =T^{(sg)}_{n,m},\nonumber \\
    &\Rightarrow \frac{x_{n,m}\varphi_nd_n\eta_m}{\gamma_{n,m}\zeta_{n,m}f_m} = \frac{x_{n,m}\varphi_nd_n\omega_b\eta_m}{(1-\gamma_{n,m})\zeta_{n,m}f_m}, \nonumber \\
    &\Rightarrow \gamma_{n,m} = \frac{1}{1+\omega_b}.
\end{talign}
We set $\omega_b = 1$ and then $\frac{\omega_b}{1+\omega_b} = \frac{1}{1+\omega_b}$, in which case, $cost_{n,m}^{(s)}$ would take the minimum leading to Problem $\mathbb{P}_{7}$ take the maximum value. Based on the above discussion, we can obtain the optimal value of $\gamma_{n,m}$ that $\gamma_{n,m}^\star = \frac{\omega_b}{1+\omega_b}$ or $\frac{1}{1+\omega_b} = \frac{1}{2}$. 

Thus, \textbf{Lemma \ref{lemma_gamma}} is proven.
\end{proof}

\section{Proof of \textbf{Lemma \ref{lemma_p8top9}}}\label{append_lemma_p8top9}
\begin{proof}
Let $\bm{P}_0:=\bm{I}_{NM+N\times N} \bm{I}_{N\rightarrow NM} \text{diag}(\bm{B})\bm{e}_{N+1,NM+N}$ and we can represent $\sum_{n \in \mathcal{N}}\sum_{m \in \mathcal{M}}B_{n,m} x_{n,m} \varphi_n$ as $\bm{Q}^\intercal \bm{P}_0 \bm{Q}$. Let $\bm{W}_0^\intercal:=\bm{A}^\intercal\bm{e}_{1,N}$ and the term $\sum_{n \in \mathcal{N}}A_n \varphi_n$ can be rewritten as $\bm{W}_0^\intercal \bm{Q}$.  Let $P_{0,n}^{(T_u)}:= - \frac{\alpha^{(u)}_n \vartheta_n^{(u)} \omega_t d_n\eta_n}{\psi_nf_n}$, $P_1^{(T_u)}:=\sum_{n \in \mathcal{N}}\frac{\alpha^{(u)}_n \vartheta_n^{(u)} \omega_t d_n\eta_n}{\psi_nf_n}$, $\bm{P}_0^{(T_u)}:=[P_{0,n}^{(T_u)}]|_{n \in \mathcal{N}}$, and ${\bm{P}_2^{(T_u)}}^\intercal:={\bm{P}_0^{(T_u)}}^\intercal\bm{e}_{1,N}$. Then, the constraint (\ref{Tu_constr1}) can be represented by 
\begin{talign}
    {\bm{P}_2^{(T_u)}}^\intercal \bm{Q} + P_1^{(T_u)} \leq T^{(u)}.
\end{talign}
Let 
\begin{talign}
&P_{0,n,m}^{(T_s)}:=\frac{d_n}{r_{n,m}} + \frac{d_n\eta_m}{\gamma_{n,m}\zeta_{n,m}f_m} + \frac{d_n\omega_b\eta_m}{(1-\gamma_{n,m})\zeta_{n,m}f_m},\\
&\bm{P}_0^{(T_s)}:=[P_{0,n,m}^{(T_s)}]|_{n \in \mathcal{N},m \in \mathcal{M}},\\
&P_1^{(T_s)}:=\sum_{n \in \mathcal{N}}\sum_{m \in \mathcal{M}}\frac{S_b}{R_m} + \text{max}_{m^\prime\in\mathcal{M}\setminus\{m\}}\frac{\eta_v}{(1-\gamma_{n,m^\prime})\zeta_{n,m^\prime}f_m^\prime}.
\end{talign}
Similar to $\bm{P}_0$, let 
\begin{talign}
\bm{P}_2^{(T_s)}:= \bm{I}_{NM+N\times N} \bm{I}_{N\rightarrow NM} \text{diag}(\bm{P}_0^{(T_s)})\bm{e}_{N+1,NM+N}.
\end{talign}
Then, the constraint (\ref{Ts_constr1}) can be transformed into
\begin{talign}
    \bm{Q}^\intercal \bm{P}_0^{(T_s)} \bm{Q} + P_1^{(T_s)} \leq T^{(s)}.
\end{talign}
Let $\bm{\phi}:=(\phi_{1,1},\cdots,\phi_{N,M})^\intercal$ and $\bm{\zeta}:=(\zeta_{1,1},\cdots,\zeta_{N,M})^\intercal$. The constraints \text{(\ref{x_constr1_qcqp})}-\text{(\ref{x_zeta_constr_qcqp})} are easy to obtain, which we won't go into details here.

Therefore, \textbf{Lemma \ref{lemma_p8top9}} holds.
\end{proof}

\section{Proof of \textbf{Lemma \ref{lemma_p9top10}}}\label{append_lemma_p9top10}
\begin{proof}
Here we give the expression of matrices $\bm{P}_1$, $\bm{P}_2$, $\bm{P}_3$, $\bm{P}_4$, $\bm{P}_5$, $\bm{P}_6$, $\bm{P}_7$, and $\bm{P}_8$.
\begin{equation}
\bm{P}_1=
\left(
    \begin{array}{cc}
       \bm{P}_0  & \frac{1}{2}\bm{W}_0 \\
        \frac{1}{2}\bm{W}_0^\intercal &  T^{(u)} + T^{(s)} + C
    \end{array}
\right),
\end{equation}
\begin{equation}
\bm{P}_2=
\left(
    \begin{array}{cc}
      \boldsymbol{e}_{i}^\intercal\boldsymbol{e}_{i}   & -\frac{1}{2}\boldsymbol{e}_{i} \\
       -\frac{1}{2}\boldsymbol{e}_{i}^\intercal  & 0
    \end{array}
\right), \forall i \in \{1,\cdots, NM\}
\end{equation}
\begin{align}
\bm{P}_3=
\left(
    \begin{array}{cc}
    \bm{0}_{NM+N \times NM+N}     & \frac{1}{2}(\boldsymbol{e}_{\overline{1},\overline{M}}\boldsymbol{e}_{N+1,NM+N}^\intercal) \\
     \frac{1}{2}(\boldsymbol{e}_{\overline{1},\overline{M}}\boldsymbol{e}_{N+1,NM+N}^\intercal)^\intercal    & -1
     % & -\boldsymbol{e}_{1 \to NM}
    \end{array}
\right)\nonumber, \\ \forall i \in \{1,\cdots, N\}
\end{align}
\begin{equation}
\bm{P}_4=
\left(
    \begin{array}{cc}
      \bm{0}_{NM+N \times NM+N}   & \frac{1}{2}\boldsymbol{e}_{i} \\
       \frac{1}{2}\boldsymbol{e}_{i}^\intercal  & -1
    \end{array}
\right), \forall i \in \{1,\cdots, N\}
\end{equation}
\begin{equation}
\bm{P}_5=
\left(
    \begin{array}{cc}
    \bm{0}_{NM+N \times NM+N}    & \frac{1}{2}\boldsymbol{\phi}\boldsymbol{e}_{N+1,NM+N} \\
    \frac{1}{2}(\boldsymbol{\phi}\boldsymbol{e}_{N+1,NM+N})^\intercal     & -1
    \end{array}
\right),
\end{equation}
\begin{equation}
\bm{P}_6=
\left(
    \begin{array}{cc}
    \bm{0}_{NM+N \times NM+N}     & \frac{1}{2}\boldsymbol{\zeta}\boldsymbol{e}_{N+1,NM+N} \\
    \frac{1}{2}(\boldsymbol{\zeta}\boldsymbol{e}_{N+1,NM+N})^\intercal     & -1
    \end{array}
\right), 
\end{equation}
\begin{equation}
\bm{P}_7=
\left(
    \begin{array}{cc}
    \bm{0}_{NM+N \times NM+N}     & \frac{1}{2}\bm{P}_2^{(T_u)} \\
    \frac{1}{2}{\bm{P}_2^{(T_u)}}^\intercal     & P_1^{(T_u)}
    \end{array}
\right), 
\end{equation}
\begin{equation}
\bm{P}_8=
\left(
    \begin{array}{cc}
    \bm{P}_0^{(T_s)}     & \bm{0}_{NM+N \times 1}\\
    \bm{0}_{1 \times NM+N}     & P_1^{(T_s)}
    \end{array}
\right). 
\end{equation}

\textbf{Lemma \ref{lemma_p9top10}} is proven.
\end{proof}

\end{appendices}

\end{document}